  \newtheorem{prop}{Proposition}
 \newtheorem{definition}{Definition}
\def\beq{\begin{equation}}
\def\eeq{\end{equation}}
\def\bea{\begin{eqnarray}}
\def\eea{\end{eqnarray}}
\begin{document}

\title{2D Quantum Double Models From a 3D Perspective}
\author{Miguel Jorge Bernab\'{e} Ferreira$^{a}$\footnote{migueljb@if.usp.br}, ~Pramod Padmanabhan$^{a}$\footnote{pramod23phys@gmail.com},\\ Paulo Teotonio-Sobrinho$^{a}$\footnote{teotonio@if.usp.br}} 
\maketitle

\begin{center}
{Departmento de F\'{i}sica Matem\'{a}tica Universidade de S\~{a}o Paulo - USP} \\
{\small
CEP 05508-090 Cidade Universit\'aria, S\~{a}o Paulo - Brasil
}

\end{center}

\begin{abstract}
In this paper we look at 3D lattice models that are generalizations of the state sum model used to define the Kuperberg invariant of 3-manifolds. The partition function is a scalar constructed as a tensor network where the building blocks are tensors given by the
structure constants of an involutary Hopf algebra $\mathcal{A}$.
These models are very general and are hard to solve in its entire parameter space. One can obtain familiar models, such as ordinary gauge theories, by letting $\mathcal{A}$ be the group algebra $\mathbb{C}(G)$ of a discrete group $G$ and staying  on
a certain region  of the parameter space. We consider the transfer matrix of the
model and show that Quantum double Hamiltonians are derived from a particular choice of the parameters. Such a construction naturally leads to the star and plaquette operators of the quantum double Hamiltonians, of which the toric code is a special case when $\mathcal{A}=\mathbb{C}(\mathbb{Z}_2)$.
This formulation is convenient to study ground states of these generalized quantum double models where they can naturally be interpreted as tensor network states. For a surface $\Sigma$, the ground state degeneracy is determined by the Kuperberg 3-manifold invariant of $\Sigma\times S^1$. It is also possible to obtain extra models by simple enlarging the allowed parameter space  but keeping the solubility of the model. While some of these extra models have appeared before in the literature, our 3D perspective allows for an uniform description of them.

\end{abstract}

\section{Introduction}
~

In recent years numerous efforts have gone towards finding systems exhibiting topological order~\cite{WenBook}. Topological order is believed to classify new phases of matter which cannot be classified under Landau's symmetry breaking scheme of classifying the states of matter. Their existence came to be known after the discovery of the fractional quantum Hall effect (see reviews by~\cite{DSP, PG}) and high temperature superconductivity~ \cite{PWA}.
These have been spurred by the need to find viable realizations of protected qubits to be used in topological quantum computation~\cite{NR, MF, AK}. In such attempts to find systems with topological order, the exactly soluble 2D lattice models of Levin and Wen~\cite{LW} provide a general method for obtaining anyon models~\cite{MR, FG, FW} using modular tensor categories. They exhibit the most  general range of possible quasiparticle statistics of any known Hamiltonian lattice theory. Indeed, Hamiltonians of this kind are thought to exist for a large class of achiral anyon theories known as
quantum double models. This class encompasses many of the previously studied anyon lattice models such as the Toric code \cite{AK}, many of its generalizations \cite{CC, EFH, AFF} and doubled Chern-Simons
theories \cite{FNSWW}.

Chain-Mail link invariants~\cite{Robert} have been used to realize such models from the spacetime perspective~\cite{FS1, FS2}, realizing the partition function of the Levin-Wen model as a knot invariant of a complicated link in three dimensional space. This provides a more physical picture of the Levin-Wen lattice model in terms of Wilson loops living on the edges of the three dimensional spacetime lattice. As the Hamiltonian of this model is made of commuting projectors the spectrum of the model can be obtained easily. The quasi particles are given a interpretation in the spacetime picture by adding additional strings to the chain-mail link. The chain-mail link invariant is known to be equivalent to the Turaev-Viro state sum invariant of 3-manifolds. Alternatively these models can also be related to the Turaev-Viro invariant of a closed 3-manifold. The invariant in this case is built out of a tensor network with values in a spherical category for a triangulation of the 3-manifold~\cite{KuperKoni}.

While the above models construct the Levin-Wen models starting from the Turaev-Viro 3-manifold invariants, in this paper we do the analogue for the toric code and its generalizations called quantum double models. Our starting point, however, will be the Kuperberg invariant~\cite{kuperberg} of 3-manifolds. This invariant uses a involutory Hopf algebra and a discrete presentation of the manifold known as the Heegaard splitting. In order to make contact with discretizations actually used in physical models, we use Heegaard splittings that come from triangular or cubic lattices. Our presentation is therefore similar to the one used in~\cite{CFS} to construct the same invariant.

The data needed to define the Kuperberg invariant $\mathcal{F}(M)$ is a Heegaard decomposition of the 3-manifold $M$ and a involutory Hopf algebra $\mathcal{A}$. It can be casted as a state sum model
by using a Heegaard splitting coming from a triangulation of the manifold. From this point of view the Kuperberg invariant ${\cal F}$ is essentially a partition function constructed out of weights associated to links and faces of a triangulated 3-manifold. These weights can be decomposed into the structure constants of an involutory Hopf algebra $\mathcal{A}$, where the antipode map squares to the identity. It is useful to regard the multiplication ${m_{ab}}^c$, the co-multiplication ${\Delta_{ab}}^c$ and the antipode $S_a^b$ of the Hopf algebra as tensors in $\mathcal{A}$. The invariant  ${\cal F}$ is a scalar constructed as a tensor network built out of ${m_{ab}}^c, {\Delta_{ab}}^c$ and $S_a^b$.

Turaev-Viro invariant $Z(M)$ is also a partition function of a state sum model. The relation between the Turaev-Viro and Kuperberg state sum models can be understood as follows. In the particular case when $\mathcal{A}$ is the group algebra of a group $G$, the states to be summed over are group elements associated to the links of the triangulation and the corresponding lattice field theory takes the familiar form of a gauge theory. The Turaev-Viro state sum model on the other hand  starts from a fusion category $\mathcal{C}$. The degrees of freedom associated to links are elements of $\mathcal{C}$  and therefore they are different lattice models. However, if one takes $\mathcal{C}$ to be the category of representations of $\mathcal{A}$, it has been proved in \cite{barretwestbury} that  ${\cal F}(M)=Z(M)\textrm{dim}(\mathcal{A})$. There is a closer relation between the two state sum models for the particular case when $\mathcal{A}$ is a group algebra of a finite group $G$. For this case one can show that the Turaev-Viro state sum model is a dual description of the Kuperberg state sum model when written in terms of spin network states where links are labeled by irreducible representations. This can be derived by
using results discussed in \cite{prboliver} and \cite{kadar}.

Both Kuperberg and Turaev-Viro state sum models are 3D lattice topological QFTs. They describe the low energy limit of fully dynamical lattice models such as the quantum double models. One could ask what 3D lattice models reproduces (in the Hamiltonian formalism) not only the low energy states but the entire spectrum of these models. Such a 3D field theory can not be a lattice TQFT. We show that a 3D model that resembles a traditional lattice gauge theory is the answer to this question. Instead of constructing Hamiltonians with a given topological phase as it is done in the string net approach, the present work, however, goes in an opposite direction. We start from a generalization of familiar 3D lattice gauge theory with a certain parameter space and look for topological order. The 3D model defined on the paper is inspired by the state sum model used to define Kuperberg invariant but it is not topological. Our model and the Kuperberg state sum model share the same state space but it has a non-trivial  dynamics. It generalizes ordinary gauge theories in a way that a Hopf algebra plays an analogous role to the gauge group. If one chooses $\mathcal{A}$ to be the group algebra of a group $G$, ordinary Wilson lattice gauge theory corresponds to a curve in the parameter space. The quantum double model appears when the model is restricted to a 2D surface of the parameter space. The Kuperberg TQFT appears as a point in the parameter space. Going away from this 2D surface leads to deformations of the quantum double model. Some of these deformations are still exactly soluble lattice models and some are not.

A precise relation between the Turaev-Viro model, the quantum double model and the Levin-Wen models have also been studied from a mathematical point of view in \cite{arxiv1,arxiv2,arxiv3}. For the specific case of the quantum double models that concern us, they are able to set a correspondence that goes beyond the ground state by considering Turaev-Viro model on surfaces with boundaries. The particle excitations of the quantum double model correspond to punctures on the 2D surface. This correspondence can only be made at the level of Hilbert space of states since the Turaev-Viro model is topological whereas  the quantum double model has a nontrivial Hamiltonian. The 3D model considered in this paper, on the other hand, is not topological. In our case, the correspondence with quantum double models is also dynamic in the sense that the the quantum double Hamiltonian is the logarithm of transfer matrix of the model.

The quantum double model phase is not the only quantum phase present in the parameter space. We give an example of this fact in the simplest case of $\mathcal{A}=\mathbb{C}(\mathbb{Z}_2)$. We show that for a certain choice of parameters, the corresponding Hamiltonian is a modification of the usual toric code with the following features. The model is exactly soluble and have essentially the same particles as the toric code. However, some of the original dyons from the toric code are now bound states with zero energy. As a consequence, the Ground State Degeneracy (GSD) is not the same as the toric code. Parameters can be fixed such that GSD is increased by a factor proportional to the exponential of the area (number of plaquettes) of the surface. This is a simple example of a quantum phase that is described in the low energy limit not by a TQFT but by a quasi-TQFT. This example opens the question of classifying all quantum phases, not all of them topological, for the 3D  model for the case of a generic involutory Hopf algebra. Although we do not have a solution to this problem, we can nevertheless  expect some limitations in the list of topological phases of the model. That comes from the fact that more general algebras, such as non-involutory Hopf algebras and weak Hopf algebras, are excluded from the model in its present formulation. It means that not all topological phases classified by fusion category theory will be present. In particular, phases described by twisted quantum doubles as in \cite{arxiv4,yw}. However, the model not only embeds quantum double models, but gives rise to interesting deformations as one explores the parameter space.

Perturbations of the quantum double model have been considered before, for example in~\cite{Bombin, bur}. One of their motivations was to investigate phase transitions between different topological phases via a mechanism of charge condensations. The perturbations in section \ref{sec-beyond} are not meant to achieve such an effect but are intended to give examples of how to depart from the quantum double models while remaining in the parameter space of the 3D model. This parameter space, however, is large enough to accommodate other types of deviations from the quantum double model. A simple example is given in Section 7. As mentioned before, it leads to a quantum phase that is not exactly topological. A complete analysis of this kind of model, especially in the non-Abelian case, is beyond the scope of the present paper.

We now outline our formalism using Kuperberg invariants. The models are parametrized by an element $z$ in the center of $\mathcal{A}$ and an element $z^*$ in the center of the dual algebra $\mathcal{A}^*$. The partition function $Z(\mathcal{A},z,z^*)$ is proportional to the Kuperberg invariant ${\cal F}$ only in the limit $z\rightarrow \eta$, $z^*\rightarrow \epsilon$~\cite{n1, MJBV}. Here $\eta$ and $\epsilon$ denote the unit and co-unit of the Hopf algebra $\mathcal{A}$.
Such techniques are especially useful for quasi-topological field theories, which arise by relaxing some of the conditions required for the theory to be topological~\cite{PB, PY}. In addition, the observation that $Z(z,z^*)$ is a scalar constructed as a tensor network have been used too generalize the Kramers and Wannier dualities of  lattice models \cite{n2}. We have also observed that partition functions of classical statistical mechanical models like the 3D Ising model and the lattice gauge theories can be constructed in a similar fashion.

These results are for three dimensional lattice models. Here we extend this formalism by considering the three dimensional manifold $M$ as spacetime. In other words, $M$ is of the form $\Sigma \times I$ where $\Sigma$ is a 2D surface and $I=[0,\delta t]$ is a time interval. In particular, we consider a discretization of $M$ such that $I$ is one single lattice step. The tensor network analogue to $Z(\mathcal{A},z,z^*)$ is no longer a scalar since it will have one free leg for each link in the lattice $\Sigma\times\{0\}\cup \Sigma\times\{\delta t\}$. That is precisely the transfer matrix $U$ for the model. In order to relate $U$ with the quantum double models, however, we need to take into account the splitting of $M$ into space and time directions. The weights associated to both directions are not necessarily the same. Just as $Z(\mathcal{A},z,z^*)$ depends on two parameters, $U$ will be parametrized by a pair of elements $z_S,z_T$ in the center of $\mathcal{A}$ and another pair $z_S^*, z_T^*$ in the center of the dual $\mathcal{A}^*$. The labels $S$ and $T$ refer to space and time directions respectively.

The transfer matrix $U(\mathcal{A},z_T,z_S,z_T^*,z_S^*)$ is very general. It encompasses models such as ordinary lattice gauge theories with matter in the regular representation. It is necessary to look at particular subsets of the parameter space if one is interested in soluble models. In this paper we set $z_T=\eta$, $z^*_S=\epsilon$. The algebra $\mathcal{A}$ can be any involutory Hopf algebra, not necessarily the group algebra $\mathbb{C}(G)$. In this paper we derive some of the models that result from setting  $z_T\neq \eta$ or  $z^*_S\neq \epsilon$ without going, however, into much detail. These new models will be carefully analyzed in another paper \cite{p2}.

By conveniently splitting this transfer matrix at each link, we write down the transfer matrix $U$ as a product of operators acting on the vertices and operators acting on the spacelike plaquettes of $\Sigma$. In other words $$ U = \prod_v A_v(z_T^*)\prod_p B_p(z_S)$$ where $v$ and $p$ denotes a vertex and a plaquette respectively. By writing the transfer matrix $U$ as $U = e^{-H\delta t}$, we obtain the Hamiltonian $H$ by taking the logarithm of both sides, namely
$$ H= -~\gamma_p~\sum_p \left(\mathbf{1}-\frac{2}{n}~B_p^0 \right) -~\gamma_s~\sum_s\left(\mathbf{1}-\frac{2}{n}~A_s^0\right) -\gamma ~\mathbf{1}\;, $$
where $\mathbf{1}$ is the identity matrix. The vertex operator $A^0_v=A_v(\epsilon)$ and the plaquette operator $B^0_p=B_p(\eta)$ are projectors and are precisely the ones occurring in the Hamiltonian of the quantum double model.

Our approach relies on and extends the diagrammatic notation of~\cite{kuperberg}. Such notation is an efficient representation of the type of 3D tensor networks describing the transfer matrix and other operators relevant to this paper. It is also very useful for finding the ground states of the quantum double models. We exhibit two ground states, one of which coincides with the one found in~\cite{Aguado} and the other is different from the first when on a surface with non-trivial topology but is the same as the first one on the 2-sphere. This ground state was written in \cite{AG2}. Their representation in terms of a tensor network comes out naturally from the formalism. The ground state degeneracy is shown to be equal to the Kuperberg 3-manifold invariant of $\Sigma \times S^1$ divided by $\textrm{dim}(\mathcal{A})$, for any involutory Hopf algebra.

We organize the paper as follows: we start section \ref{sec-guagetheories} with a brief review on gauge theories by building its partition function using the same notation as in the constraining of Kuperberg's invariant. In order to get the transfer matrix from the partition function a diagrammatic presentation of a tensor network, called Kuperberg diagrams is defined in section \ref{sec-guagetheories}. The Kuperberg invariant is made of two systems of curves such that their weights are built out of the structure constants of an involutory Hopf algebra. In section \ref{sec-lattices} we get the transfer matrix as well as the plaquette and vertex operators of the quantum double models, by using a set of properties of these systems of curves, also described in section \ref{sec-lattices}. The method to obtain the Hamitlonian from the transfer matrix is discussed in section \ref{sec-hamiltonian}. The section \ref{sec-groundstate} is dedicated to the study of the ground states of such models. In this section we explicitly exhibit two ground states and also get the ground state degeneracy for any involutory Hopf algebra. In section \ref{sec-beyond} we write down other models we can obtain using the formalism developed in this paper which we analyze in detail in~\cite{p2}. We close this work with some remarks in section \ref{sec-discussion}.
 
\section{The 3D model and its relation to Lattice Gauge Theories}
\label{sec-guagetheories}
~

The partition function of the 3D model we are about to construct is a generalization of the invariant defined in \cite{CFS,kuperberg}. Lattice gauge theories are particular cases of this more general 3D field theories. We build these theories using the same set of data which are used in \cite{CFS,kuperberg}, namely a lattice discretization $\mathcal{L}(M)$ of a 3D manifold $M$ and the structure constants of an involutory Hopf algebra $\mathcal{A}$. 

The first step is to define and describe the discretization. We begin by introducing a cubic lattice of some 3-manifold $M$ without boundary. The first step will be to encode the 3D lattice structure into what is called the Heegaard diagram. One advantage of using a Heegaard diagram is that they are two-dimensional. Using 2D diagrams instead of 3D lattices helps in making some of the computations more transparent. More importantly, Heegaard diagrams are more flexible and allow for manipulations that are hard to describe or juts do not make sense when the model is written in a conventional lattice. This is specially true when one investigate topological invariance.  

The second step is to introduce a set of weights to build the partition function $Z$. They are given by a set of tensors with covariant and contravariant indices and $Z$ is a scalar defined by a certain tensor network. It has become standard in physics to use a graphical notation to describe tensor networks \cite{Treview}. In this paper we will use the diagrams introduced by Kuperberg~\cite{kuperberg,Treview}. 

\subsection{Diagrams and Lattices}
~

We recall here how one can canonicaly associate a 2D diagram ${\cal D}$ to a 3D lattice discretization ${\cal L}$. All information contained in ${\cal L}$ will be encoded in $\mathcal{D}$. For convenience we will use cubic lattices but the procedure can be applied for any lattice such as triangulations by tetrahedra. 

Consider $M$ to be a $3-$manifold without a boundary $\left( \partial M=\emptyset\right)$. Let $\mathcal{L}$ be a lattice discretization of $M$. This lattice is made of cubes glued together by their faces, as illustrated in figure \ref{colagemcubos}.
\begin{figure}[h!]
	\begin{center}
		\includegraphics[scale=1]{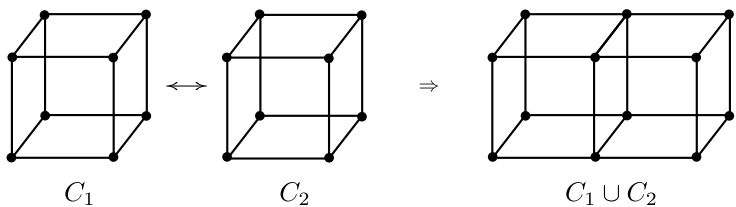}
	\caption{An example of two cubes glued together by their faces.}
	\label{colagemcubos}
	\end{center}
\end{figure}

The diagram which represents the lattice $\mathcal{L}$ comprise of two sets of curves drawn on a 2D surface $H$. One type of curve is represented by red curves and the other by blue curves.\begin{figure}[htb!]
\centering
\subfigure[The $1-$skeleton associated with one cube.]{
\includegraphics[scale=1]{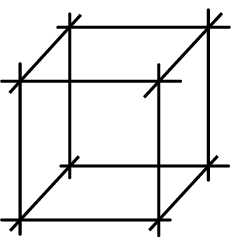}\label{1skeleton-a}
}
\hspace{3cm}
\subfigure[The tubular neighborhood of the $1-$skeleton of figure \ref{1skeleton-a}.]{
\includegraphics[scale=1]{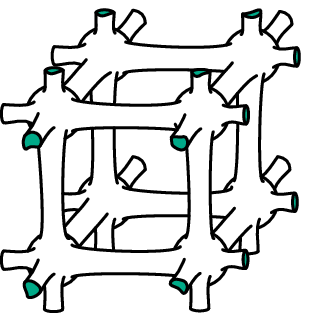}\label{1skeleton-b}
}
\caption{Defining the space $H$ where the systems of curves lie.}
\label{1skeleton}
\end{figure} The surface $H$ is given as follows. Consider the 1-skeleton of $\mathcal{L}$, in other words, just the skeleton composed of links and vertexes  of $\mathcal{L}$ and then consider its tubular neighborhood $H$, as shown in figure \ref{1skeleton}.

The space $H$ is compose of balls (one for each vertex) and cylinders (one for each link) glued accordingly. Now we draw one red curve on each cylinder of $H$ (figure \ref{systemofcurves-a}) to represent the links. We also draw a blue curve for each plaquette (face) of ${\cal L}$ as indicated in figure \ref{heegaarddiagrams-a}).
\begin{figure}[htb!]
\centering
\subfigure[The red curves associated with the links.]
{
\includegraphics[scale=1]{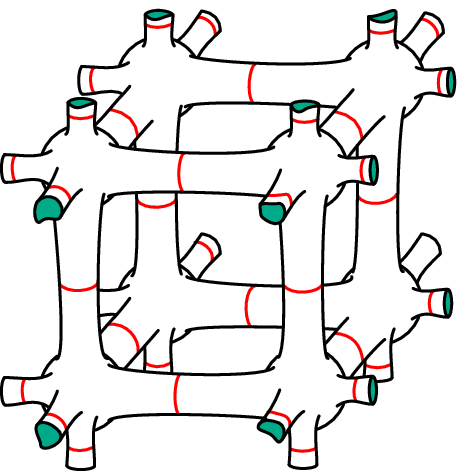} \label{systemofcurves-a}
}
\hspace{2cm}
\subfigure[The diagram associated with a cube of $\mathcal{L}$.]
{
\includegraphics[scale=1]{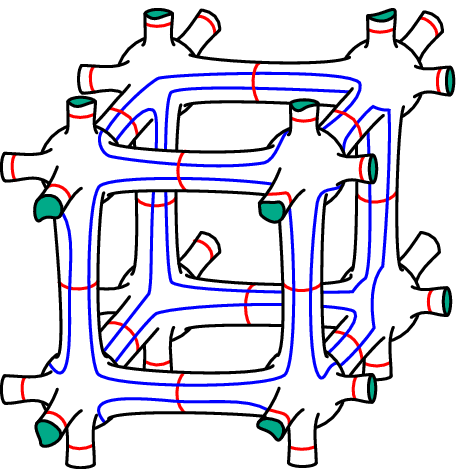} \label{heegaarddiagrams-a}
}
\caption{Systems of curves on the surface $H$.}
\label{systemscurves}
\end{figure}
As one can see, the read curves are placed around the links and the blue curves follows the boundary of each plaquette.

Let $\mathcal{D}$ be the entire diagram made of a lot of cells like the one show in figure \ref{systemscurves} glued together. Notice that a face with four sides corresponds to a blue curve crossed by four read curves as indicated in figure \ref{bluecurvesquare-a}. In a similar way, a link where four faces are joined is depicted by a red curve that crosses four blue curves as shown in \ref{bluecurvesquare-b}.
\begin{figure}[htb!]
\centering
\subfigure[A blue curve associated with a plaquette of $\mathcal{L}$.]
{
\includegraphics[scale=1]{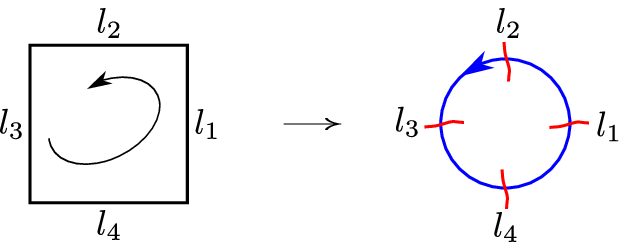} \label{bluecurvesquare-a}
}
\hspace{2cm}
\subfigure[A red curve associated with a link of $\mathcal{L}$.]
{
\includegraphics[scale=1]{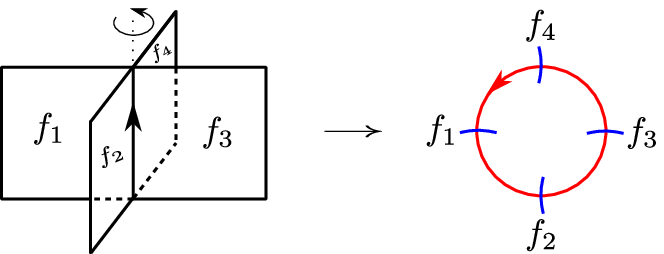} \label{bluecurvesquare-b}
}
\caption{Definition of the two system of curves.}
\label{bluecurvesquare}
\end{figure}

A final remark about orientations is in order. The models to be defined here are  generalizations of usual lattice gauge theories. In order to write down the partition function we need to choose an orientation for each face and each link of ${\cal L}$. However the partition function does not depend on the choice of orientation, in the same as it also happens for usual lattice gauge theories. 
The orientation of a face is encoded by an orientation of the corresponding blue curve. In the same way, we can read the orientation of a link by looking at the corresponding red curve. The rule should be clear from figure \ref{bluecurvesquare}.

\subsection{Partition Function}
~

The way we  will construct the partition function $Z$ is based on the procedure used in \cite{CFS,kuperberg}. We associate weights given by tensors to each curve of $\mathcal{D}$ and define $Z$ as a scalar costructed out of these tensors. The data we need to define the weights is given by the structure constants of an involutory Hopf algebra. Thus let $\mathcal{A}$ be an involutory Hopf algebra, in other words, $\mathcal{A}=\langle m,\eta, \Delta, \epsilon,S \rangle$, where $m$ and $\Delta$ are the multiplication and co-multiplication maps, $\eta$ and $\epsilon$ are the unit and co-unit of $\mathcal{A}$ and $S$ is the antipode such that $S^2 = \mathbf{1}$. In appendix \ref{ap-hopfalgebras} one can find a definition of a Hopf algebra as well as the proof of the identities on Hopf algebras relevant for this work. In this paper, we consider only square lattices and so all the blue and red curves of ${\cal D}$ have exactly four crossings.

Let $\{ \phi_i \}$ be a basis of the algebra $\mathcal{A}$. For each blue curve (plaquette of $\mathcal{L}$) we associate a covariant tensor $M_{l_1 l_2 l_3 l_4}$, like the one shown in figure \ref{bluecurvesquare-a}, here $l_1$, $l_2$, $l_3$ and $l_4$ represent the links which cross this curve. The tensor $M_{l_1 l_2 l_3 l_4}$ is defined by
\begin{equation}
M_{l_1 l_2 l_3 l_4} = \textrm{tr}\left(z~\phi_{l_1}\phi_{l_2}\phi_{l_3}\phi_{l_4}\right)
\label{mtensordefinition}
\end{equation}
where $\textrm{tr}$ means the trace in the regular representation and $z$ is some element belonging to the center of the algebra (see figure \ref{blue-weight}). \begin{figure}[h!]
\centering
\subfigure[The weight associated with a blue curve.]{
\includegraphics[scale=1]{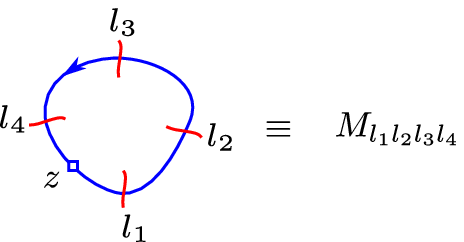} \label{blue-weight}
}
\hspace{2.5cm}
\subfigure[The weight associated with a red curve.]{
\includegraphics[scale=1]{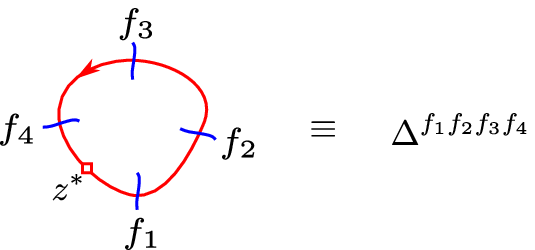} \label{red-weight}
}
\caption{The weights associated with the system of curves.}
\label{blue-red-weight}
\end{figure}  Analogously we associate a contra-variant  tensor $\Delta^{f_1 f_2 f_3 f_4}$ for each red curve of $\mathcal{D}$ (figure \ref{red-weight}), where $f_1$, $f_2$, $f_3$ and $f_4$ represent the faces which share this link. The $\Delta^{f_1 f_2 f_3 f_4}$ tensor is defined by 
\begin{equation}
\Delta^{f_1 f_2 f_3 f_4} = \textrm{cotr}\left(z^*~\varphi_{f_1}\varphi_{f_2}\varphi_{f_3}\varphi_{f_4}\right)\;,
\label{deltatensordefinition}
\end{equation}
where $\{\varphi^j\}$ is a basis of $\mathcal{A}^*$ such that $\varphi^j\left( \phi_i\right)=\delta(i,j)$ and $z^*$ is an element which belongs to the co-center of $\mathcal{A}$. Note that we have placed a small rhombus on the curves. They carry labels  $z$ or $z^*$ as in figures \ref{blue-weight} and \ref{red-weight} and indicate the particular elements entering in the definition of  weights (\ref{mtensordefinition}) and (\ref{deltatensordefinition}). We say that that curves are colored by an element $z$ or $z^*$. In the special case where $z$ is the unit of the algebra (or $z^*$ is the co-unit of the algebra), we represent this as a single curve without these little rhombus, see figure \ref{hole}. When a curve is colored with unit (or co-unit) we refer to them as being trivially colored. The weighs defined in \cite{CFS,kuperberg} are trivially colored.
\begin{figure}[h!]
\centering
\subfigure[]{
\includegraphics[scale=1]{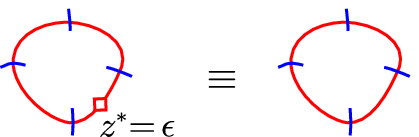} \label{hole-a}
}
\hspace{2.5cm}
\subfigure[]{
\includegraphics[scale=1]{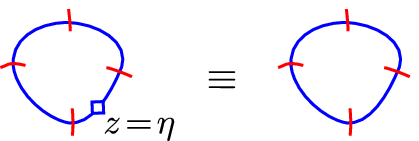} \label{hole-b}
}
\caption{In \ref{hole-a} a trivial red curve, colored by co-unit of $\mathcal{A}$, and in \ref{hole-b} a trivial blue curve, colored by unit of $\mathcal{A}$.}
\label{hole}
\end{figure}

The partition function we are building is made of contractions between these two kinds of tensors, but before we go ahead we have to take into account the orientation on each curve. As stated before, curve orientation encode lattice orientation. We also need to fix a orientation for the surface where the curves are lying. By convention let us take the normal vector to the surface pointing out. For each crossing between a blue and a red curve we have to contract the corresponding index of each curve. This contraction can be direct or indirect according to the convention shown in figure \ref{orientacaocurvas}. The vector $\hat{n}$ is the normal vector to the surface, so when $\vec{s}_b \times \vec{s}_r$ is parallel to $\hat{n}$ we contract the curves as shown in figure \ref{orientacaocurvas-b}, otherwise we use the antipode map $S$ to make the contraction, as shown in figure \ref{orientacaocurvas-a}.
\begin{figure}[h!]
\centering
\subfigure[]{
\includegraphics[scale=1]{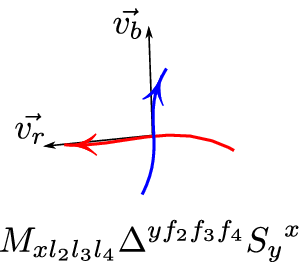} \label{orientacaocurvas-a}
}
\hspace{2.5cm}
\subfigure[]{
\includegraphics[scale=1]{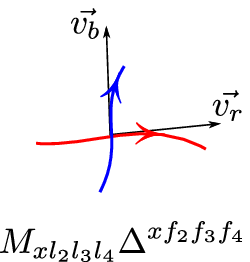} \label{orientacaocurvas-b}
}
\caption{The contraction rule for the tensors $M_{l_1 l_2 l_3 l_4}$ and $\Delta^{f_1 f_2 f_3 f_4}$.}
\label{orientacaocurvas}
\end{figure}
The contraction of the tensors associated with all the curves gives us the partition function 
\begin{equation}
Z\left(z,z^* \right) = \prod_{\hbox{red  curves}}\Delta^{f_1 f_2 f_3 f_4} \prod_{\hbox{blue curves}} M_{l_1 l_2 l_3 l_4}\prod_{\hbox{orient.}}{S_x}^y\;.
\label{invariante}
\end{equation}
Notice that since all indices are contracted $Z$ is a scalar. 
Such a scalar can be viewed as given by a tensor network constructed out of $M_{l_1 l_2 l_3 l_4}$, $\Delta^{f_1 f_2 f_3 f_4}$ and ${S_x}^y$. The pattern of contractions defining the network is determined by the lattice or, equivalently, by the crossings of blue and red curves of ${\cal D}$. 
We recal once more that the models defined in \cite{CFS,kuperberg} correspond to $z=\eta$ and $z^*=\epsilon$.

\subsection{Gauge Theories}
~

Let us briefly discuss how the partition function of lattice gauge theories are obtained form (\ref{invariante}). In this case, fields living on the links are elements of the gauge group $G$. Therefore $\mathcal{A}$ has to be the group algebra $\mathbb{C}(G)$. For simplicity, let us assume that $G$ is finite dimensional and let  $\{\phi_g\}, g\in G$ be the basis of $\mathbb{C}(G)$.
The weight associated to the faces is the Boltzmann factor
\begin{equation}
{^GM}_{{g_1} g_2 g_3 g_4}=e^{-\beta \textrm{tr}\left(g_1 g_2 g_3 g_4 \right)}
\label{Mgauge}
\end{equation}
where $g_1$, $g_2$, $g_3$ and $g_4$ are the variables which live on the boundary of some face. Note that the tensor $M_{g_1 g_2 g_3 g_4}$ is invariant under cyclic permutation of their indices. For pure gauge theories there is no weight associated to the links, but faces shared by the same link have to agree on the same variable, which means that the tensor $\Delta^{g_1 g_2 g_3 g_4}$ has to be
\begin{equation}
{^G\Delta}^{g_1 g_2 g_3 g_4}= \delta(g_1,g_2)\delta(g_1,g_3)\delta(g_1,g_4)
\label{Dgauge}
\end{equation}
where $g_1$, $g_2$, $g_3$ and $g_4$ are the variables at faces which share the same link. If we want to define a gauge theory on the diagrams, instead of associating a weight to the faces, we have to associate a weight to the blue curves. But the way to do that is straightforward we just associate the same weight to the blue curves. The next step is to choose $z$ and $z^*$ so as to reproduce the tensors ${^GM}_{g_1 g_2 g_3 g_4}$ and ${^G\Delta}^{g_1 g_2 g_3 g_4}$.

Consider as an example $G=\mathbb{Z}_2$. In this case the group algebra has only two elements in its basis $\{ \phi_0, \phi_1\}$. If we want to reproduce the tensors (\ref{Mgauge}) and (\ref{Dgauge}) we just need to make the following choices for $z$ and $z^*$ \cite{MJBV}.
$$z=\frac{1}{2}e^{\beta}~\phi_0+\frac{1}{2}e^{-\beta}~\phi_1 \;\;\;\; \hbox{and} \;\;\;\; z^*=\epsilon$$
and then the partition function becomes
$$^GZ(\beta)= \prod_f M(f)$$
where $M(f)={^GM}_{g_1 g_2 g_3 g_4}$.

In the special case where $z=\eta=\phi_0$ and $z^*=\epsilon$, the function (\ref{invariante}) is proportional to the topological invariant defined in \cite{CFS, kuperberg}. In other words, the gauge theories, in the limit $\beta \rightarrow - \infty$, become topological \cite{MJBV}.

\subsection{Kuperberg's Diagrams}
~

So far we have considered only a manifold without boundaries. In the following we start to look at field theories in $(2+1)D$. That means that our manifold is of a particular kind $\Sigma \times \left[0,~1 \right]$ which is a 3-manifold with boundary. The way we will do that is by defining the transfer matrix such that its trace is the partition function of the system. 

It will be convenient to regard $Z$ as a scalar given by a tensor network. As such, a graphical notation turns out to be useful. 
In this section we introduce a notation which will be well adapted to our purposes. This notation is essentially the same as the one adopted for example in \cite{Treview}. The main difference being the distinction between covariant and contravariant indices. Since it has been introduced in Kuperberg's work \cite{kuperberg} we call this {\it Kuperberg's notation}. In particular, the tensors given by the structure constants of a Hopf algebra can also be represented in this fashion. It turns out that this diagramatic notation simplifies some of the algebraic manipulations we need to perform.

\subsection{Kuperberg's Notation}
~

A Kuperberg diagram is defined in the following way: consider a generic tensor ${T_{a_1 a_2 \cdots a_n}}^{b_1 b_2 \cdots b_m}$  which belongs to the space $V\otimes V \otimes \cdots \otimes V \otimes V^{*} \otimes V^{*} \otimes \cdots \otimes V^{*}$. We associate a diagram $T$ to this tensor. We represent each covariant index by an arrow coming into the diagram $T$, and each contravariant index by an arrow going out of the diagram $T$. By convention, the arrows coming in are enumerated counter clockwise and the arrows going out are enumerated clockwise. See figure \ref{grafosecontracoes-a}.

\begin{figure}[htb!]
\centering
\subfigure[]{
\includegraphics[scale=1]{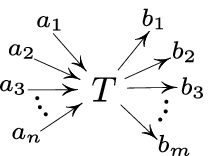} \label{grafosecontracoes-a}
}
\hspace{3cm}
\subfigure[]{
\includegraphics[scale=1]{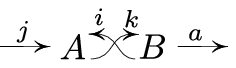} \label{grafosecontracoes-b}
}
\hspace{3cm}
\subfigure[]{
\includegraphics[scale=1]{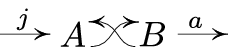} \label{grafosecontracoes-c}
}
\caption{{\bf (a)} The Kuperberg diagram associated with the tensor ${T_{a_1 a_2 \cdots a_n}}^{b_1 b_2 \cdots b_m}$. {\bf (b)} The contraction rule between two tensors.}
\label{grafosecontracoes}
\end{figure}

The contraction rule in terms of Kuperberg diagrams involves connecting the arrows which are contracted. Consider the following contraction: ${A_{ij}}^k {B_k}^{ai}$, the corresponding Kuperberg diagram is the one shown in figure \ref{grafosecontracoes-b}. Once summed, we do not need to write them down on the diagram (see figure \ref{grafosecontracoes-c}).

In the case when $T$ is a linear transformation, from a vector space $V$ on itself, we write $T$ as a tensor with one covariant and one contravariant index ${T_a}^b$, therefore we represent this as a diagram with one arrow coming in and one arrow going out, as shown in figure \ref{grafosmapaslineares-a}. The identity map is the one shown in figure \ref{grafosmapaslineares-b}.

\begin{figure}[htb!]
\centering
\subfigure[]{
\includegraphics[scale=1]{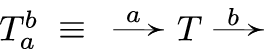} \label{grafosmapaslineares-a}
}
\hspace{4cm}
\subfigure[]{
\includegraphics[scale=1]{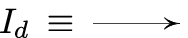} \label{grafosmapaslineares-b}
}
\caption{{\bf (a)} The Kuperberg diagram of a linear transformation. {\bf (b)} The Kuperberg diagram of the identity map of $V$. }
\label{grafosmapaslineares}
\end{figure}

There is a very important quantity which is the trace of an operator, the Kuperberg diagram associated to this quantity is a diagram which has an arrow that goes out and comes in to the same diagram. For a linear operator $A_i^j$ the trace is given by
$$\textrm{tr}\left(A \right)={A_i}^i = \sum_i {A_i}^i,$$
it means that the indices of ${A_i}^j$ are being contracted. Therefore the Kuperberg diagram associated with the trace of a tensor is the one in figure \ref{tracooperador-a}. We represent the trace of the identity map as a single closed curve, as shown in figure \ref{tracooperador-b}.

\begin{figure}[htb!]
\centering
\subfigure[]{
\includegraphics[scale=1]{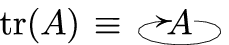} \label{tracooperador-a}
}
\hspace{4cm}
\subfigure[]{
\includegraphics[scale=1]{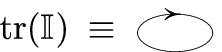} \label{tracooperador-b}
}
\caption{{\bf (a)} The Kuperberg diagram of the trace of an arbitrary linear map. {\bf (b)} The trace of the identity map of $V$.}
\label{tracooperador}
\end{figure}

We now write all the weights associated with the curves in terms of Kuperberg's notation.

\subsection{Weights in Kuperberg's Notation}
~

Consider the blue curve in the figure \ref{plaquette-weight-a}. The weight associated to this curve is the tensor $M_{l_1 l_2 l_3 l_4}$ defined in equation (\ref{mtensordefinition}) and since it is the trace in the regular representation it can be written in terms of the structure constants of the algebra (see proposition (\ref{prop-trace}) in appendix \ref{ap-hopfalgebras}). Therefore its weight is the one written in figure \ref{plaquette-weight-b}, where we can still use the associativity (appendix \ref{ap-hopfalgebras}, figure \ref{condicaotensormultiplicacao}) of the algebra in order to write this tensor in a more symmetric way, as in figure \ref{plaquette-weight-c}.

\begin{figure}[h!]
\centering
\subfigure[]{
\includegraphics[scale=1]{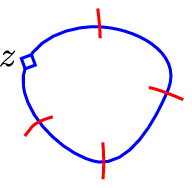} \label{plaquette-weight-a}
}
\hspace{.7cm}
\subfigure[]{
\includegraphics[scale=1]{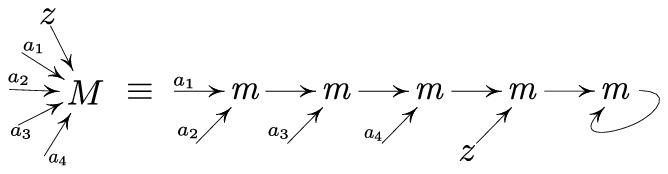} \label{plaquette-weight-b}
}
\hspace{.7cm}
\subfigure[]{
\includegraphics[scale=1]{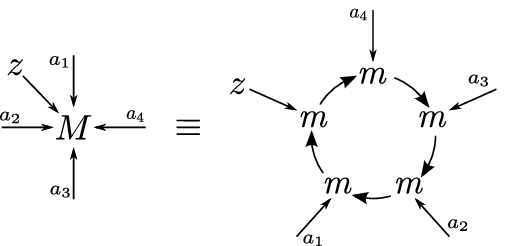} \label{plaquette-weight-c}
}
\caption{In {\bf (a)}, the little rhombus with the element $z$ hanging on it means that this curve is colored by the element $z$. In {\bf (b)} and {\bf (c)} the tensor associated with a blue curve with four crossings.}
\label{plaquette-weight}
\end{figure}

Also the weight associated to the links can be written in terms of Kuperberg's diagrams. The curve drawn in figure \ref{link-weight-a} represents one red curve colored by an element $z^*$ and its weight is the one in figure \ref{link-weight-b}, where again we have used proposition (\ref{prop-trace}) to write it in terms of the structure constants of the co-algebra. Using co-associativity we can also write this in a more symmetric way, as shown in figure \ref{link-weight-c}.

\begin{figure}[h!]
\centering
\subfigure[]{
\includegraphics[scale=1]{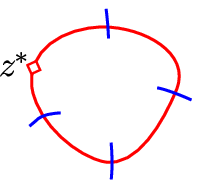} \label{link-weight-a}
}
\hspace{.7cm}
\subfigure[]{
\includegraphics[scale=1]{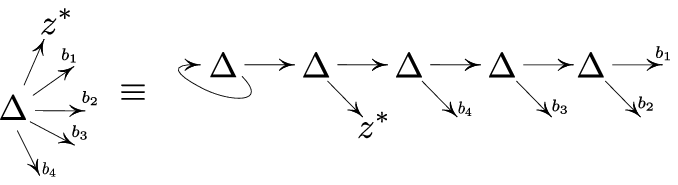} \label{link-weight-b}
}
\hspace{.7cm}
\subfigure[]{
\includegraphics[scale=1]{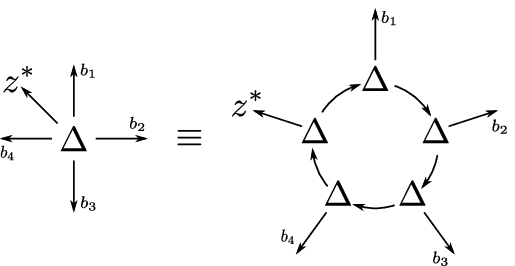} \label{link-weight-c}
}
\caption{In {\bf (a)}, the little rhombus with the element $z^*$ hanging on it means that this curve is colored by the element $z^*$. In {\bf (b)} and {\bf (c)} the tensor associated with a red curve with four crossings.}
\label{link-weight}
\end{figure}

\section{Lattices with boundaries and the Transfer Matrix}
\label{sec-lattices}
~

Consider a manifold of the form $\Sigma\times S^1$ where the compact ``time" direction has being discretized into $N$ steps. From the partition function we can derive a Hamiltonian operator $H$ such that 
$$
Z(\Sigma\times S^1)=\textrm{Tr}\left(\left[e^{-H \delta t}\right]^N \right),
$$
where $U=e^{\left(-\delta t ~ H\right)}$ is the transfer matrix. This can be thought of as being the time evolution operator in $(2+1)D$. For the purpose of obtaining the Hamiltonian, it is enough to consider $N=1$. 
In Kitaev's model there is one quantum state ($\left|+\right>$ or $\left|-\right>$ in the case of $\mathbb{C}(\mathbb{Z}_2)$.) living on each link of the lattice, such states belongs to a Hilbert space $H$. For the quantum double model, a basis for $H$ is $\{|g\rangle\},g\in G$. In any case,
there is one vector associated to each link. The Hilbert space for the entire lattice is $\mathcal{H}=\underbrace{H \otimes H \otimes \cdots \otimes H}_{n_l \hbox{ times}}$, where $n_l$ is the number of links for the lattice discretization of $\Sigma$. The transfer matrix is a map $U:\mathcal{H}\rightarrow \mathcal{H}$ and therefore can be viewed as a tensor in $\mathcal{H}\otimes\mathcal{H}^*$. In terms of Kuperberg's notation it will be represented by a diagram with $n_l$ arrows coming in and $n_l$ arrows going out. 

The processes of obtaining $U$ can be visualized as follows. The partition function for $N=1$ is given by some tensor network represented by $Z(\Sigma\times S^1)$ on figure \ref{desenho1}. Each link of $\Sigma$ will contribute with a contraction as indicated in the figure. The operator $U$ is the splitting of such network along the links of $\Sigma$. As explained ahead in this section, the network $Z$  can be encoded in a diagram ${\cal D}$ of blue and read curves. It will be  convenient to also encode $U$ with a similar diagram of curves. That can be done provided we improve the diagram in order to include more tensors other then just $M_{l_1 l_2 l_3 l_4}$ and $\Delta^{f_1 f_2 f_3 f_4}$.
\begin{figure}[h!]
\centering
\subfigure[]{
\includegraphics[scale=1]{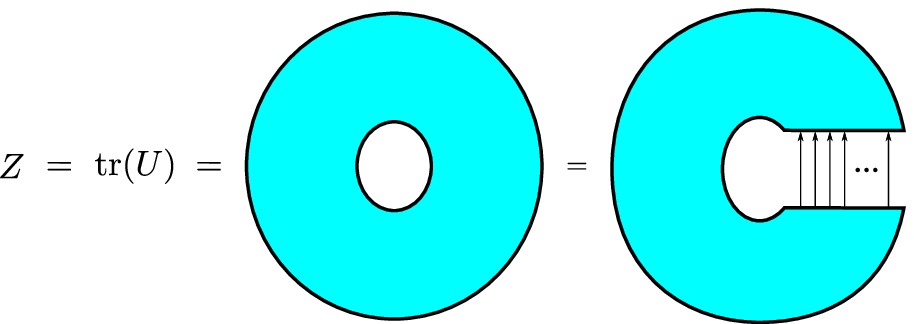} \label{desenho1-a}
}
\hspace{1.5cm}
\subfigure[]{
\includegraphics[scale=1]{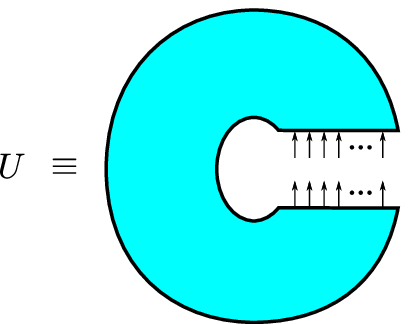} \label{desenho1-b}
}
\caption{The partition function as the transfer matrix trace.}
\label{desenho1}
\end{figure}

We already know that each red curve cross exactly four blue curves. Let us represent these crossings by blue dots on the red curve, as shown in figure \ref{redcurves-dots-a}. The associated weight is shown in figure \ref{redcurves-dots-b}. Thus the weight associated to the red curve has one free arrow going out for each blue dot on it. In the same way we can build a red curve with blue and red dots, where blue dots mean arrows going out and red dots mean arrows coming in, as illustrated in figure \ref{redcurves-dots-c}. The tensor in figure \ref{redcurves-dots-c} is the co-multiplication tensor. Here the orientation of the curve is very important, the $\Delta$ tensor has indices ordered clockwise starting from the arrow that comes in.
\begin{figure}[h!]
\centering
\subfigure[]{
\includegraphics[scale=1]{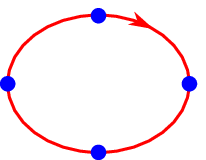} \label{redcurves-dots-a}
}
\hspace{2.5cm}
\subfigure[]{
\includegraphics[scale=1]{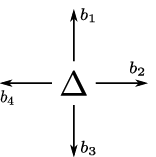} \label{redcurves-dots-b}
}
\hspace{2.5cm}
\subfigure[]{
\includegraphics[scale=1]{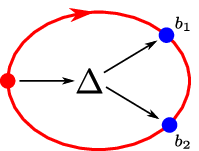} \label{redcurves-dots-c}
}
\caption{The blue dots mean arrows going out of the tensor $\Delta$ while the red one means one arrow coming in.}
\label{redcurves-dots}
\end{figure}
In this notation we can combine curves of the same color, just contracting them by dots of different colors. For example we can combine the two red curves of the figure \ref{redcurves-dots-a} and \ref{redcurves-dots-c}, as shown in \ref{redcurves-combination-a} and \ref{redcurves-combination-b}.
\begin{figure}[h!]
\centering
\subfigure[]{
\includegraphics[scale=1]{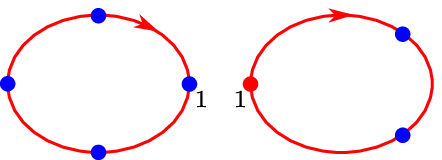} \label{redcurves-combination-a}
}
\hspace{1cm}
\subfigure[]{
\includegraphics[scale=1]{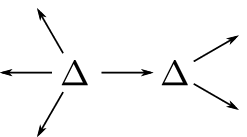} \label{redcurves-combination2-a}
}
\hspace{1cm}
\subfigure[]{
\includegraphics[scale=1]{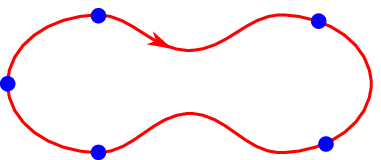} \label{redcurves-combination-b}
}
\caption{Joining two red curves by two dots with different colors. If the picture is read from left to write it shows the splitting of a red curve.}
\label{redcurves-combination}
\end{figure}
We can also read figure \ref{redcurves-combination} from left to right. That corresponds to a splitting of a loop into a pair of curves. This spliting will be used in order to factorize the transfer matrix. Note that in the case of figure \ref{redcurves-combination} both orientations agree, hence the resultant curve is the one shown in figure \ref{redcurves-combination-b}. But, if they do not agree in orientation the contraction has to be done as shown in figure \ref{redcurves-combination-orientation}.
\begin{figure}[h!]
\centering
\subfigure[]{
\includegraphics[scale=1]{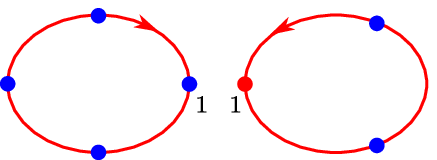} \label{redcurves-combination-orientation-a}
}
\hspace{2.5cm}
\subfigure[]{
\includegraphics[scale=1]{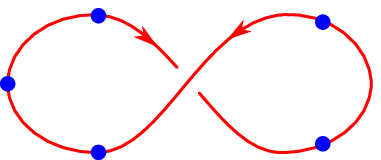} \label{redcurves-combination-orientation-b}
}
\caption{Combination of red curves with different orientations.}
\label{redcurves-combination-orientation}
\end{figure}
In the same way we can put red and blue dots on the blue curves, the meaning of these dots is the same of the ones in the red curves. In figure \ref{bluecurves-dots} we can see the tensor associated with a blue curve with red and blue dots.
\begin{figure}[h!]
\centering
\subfigure[]{
\includegraphics[scale=1]{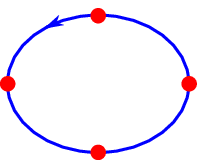} \label{bluecurves-dots-a}
}
\hspace{2.5cm}
\subfigure[]{
\includegraphics[scale=1]{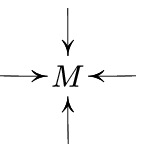} \label{bluecurves-dots-b}
}
\hspace{2.5cm}
\subfigure[]{
\includegraphics[scale=1]{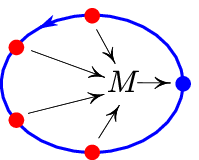} \label{bluecurves-dots-c}
}
\caption{The red dots mean arrows coming into the tensor $M$ while the blue one means one going out.}
\label{bluecurves-dots}
\end{figure}
Note that the indices of the tensor $M$ is ordered counter-clockwise. The rules for composing blue curves is similar to  the one for red curves. On figure \ref{desenho2} we show the gluing and splitting of blue curves with the same orinetation. 
\begin{figure}[h!]
	\begin{center}
		\includegraphics[scale=1.0]{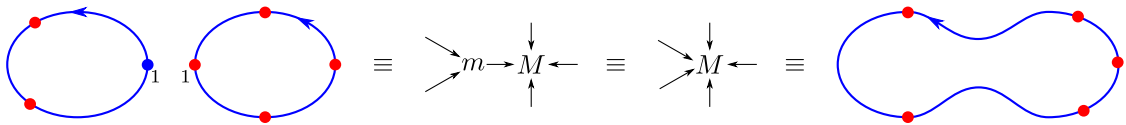}
	\caption{Joining and splitting of blue curves with the same orientations.}
	\label{desenho2}
	\end{center}
\end{figure}

The partition function is made of a lot of cells, like the one shown in figure \ref{heegaarddiagrams-a}, glued one besides the other. Note that faces and links are either timelike or spacelike, so let us call $z_T$ and $z_S$ the elements of the center which color timelike and spacelike faces, respectively. Let us call $z_T^*$ and $z_S^*$ the elements of the co-center which color timelike and spacelike links, respectively. It means that the partition function has different weights associated to timelike faces and spacelike faces, and for the links as well, or in other words, 
$$Z=Z\left(\mathcal{A},z_T,z_S,z_T^*,z_S^* \right).$$

The elements $z$'s and $z^*$'s parametrize the theory. In order to obtain the unperturbed Kitaev's model we have to make a specific choice for these parameters. As we will see later, the following choice is enough to reproduce such a model
\begin{eqnarray}
z_S &=& \frac{1}{n}\left[2(1-\delta_p)~\eta + \delta_p~\lambda\right] \;\;\;\; \hbox{such that} \;\;\; \textrm{tr}(\lambda)=n= \textrm{dim}(\mathcal{A})\;, \label{zs}\\
z_T &=& \eta\; , \nonumber \\
z_S^* &=& \epsilon \; , \nonumber \\ 
z_T^* &=& \frac{1}{n}\left[2(1-\delta_s)~\epsilon + \delta_s~\Lambda \right]\;\;\;\; \hbox{such that} \;\;\; \textrm{cotr}(\Lambda)=n\;, \label{zt*}
\end{eqnarray}
where $\eta$ and $\epsilon$ are the unit and co-unit of $\mathcal{A}$, $\Lambda$ and $\lambda$ are the integral and co-integral of $\mathcal{A}$ and $\delta_{s,p}$ are real parameters which belong to the interval $[0,1]$. In other words we are fixing timelike blue curves and spacelike red curves as being colorless. With this choice of parameters we can reproduce quantum double models.

At this point we are ready to use these tools to split the transfer matrix as a product of operators which acts on links, vertices and plaquettes. To do this we define the transfer matrix operator $U$ such that the partition function described above can be written as $tr(U^N)$. 
For that consider one single cell of the diagram for the partition function, shown on figure \ref{onestepevolution-a}. Consider the diagram ${\cal D}$ obtained by repeating this single cell on the $(x,y)$ with the appropriate boundary condition for the surface $\Sigma$.
This is a lattice with a single link on the ``time" direction. There are red dots on the bottom and blue dots on the top. The corresponding tensor has a pair of arrows in and out for each link of $\Sigma$. 
That is precisely the transfer matrix $U$. 
It is straightforward to see that the partition function for $\Sigma\times S^1$  is obtained by stacking $N$ copies of ${\cal D}$ and connecting the corresponding red dots at the bottom to the blue dots at the top. 

The diagram ${\cal D}$ is not planar but it will be useful to draw its projection on the plane. That can be achieved by selecting a region of the surface as illustrated on figure \ref{onestepevolution-b}. The corresponding projection can be seen on figure  \ref{slice-a}. Some of the curves are not completely contained in this projection and are represented as line segments instead of closed paths. To complete the loops their ends have to be connected. In figure \ref{slice-b} we see a smaller portion of figure \ref{slice-a} where the loops have been completed. The the pattern we see on figure \ref{slice-a} correspond to a plaquette of $\Sigma$. It is clear that we only need to analyze the portion contained in figure \ref{slice-b}. 
\begin{figure}[h!]
\centering
\subfigure[]{
\includegraphics[scale=1]{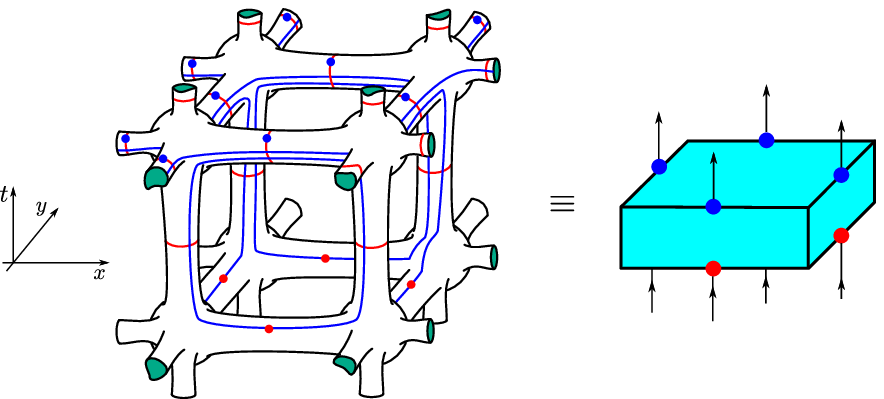} \label{onestepevolution-a}
}
\hspace{1.5cm}
\subfigure[]{
\includegraphics[scale=1]{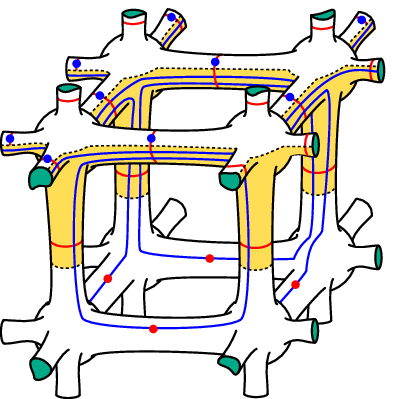} \label{onestepevolution-b}
}
\caption{In {\bf (a)} the transfer matrix $U(t_0,t_1)$. The corresponding tensor has arrows coming going in corresponding to blue and red dots. In {\bf (b)} we select a portion of the surface enough to contain all information in the diagram.}
\label{onestepevolution}
\end{figure}

\begin{figure}[h!]
\centering
\subfigure[]{
\includegraphics[scale=1]{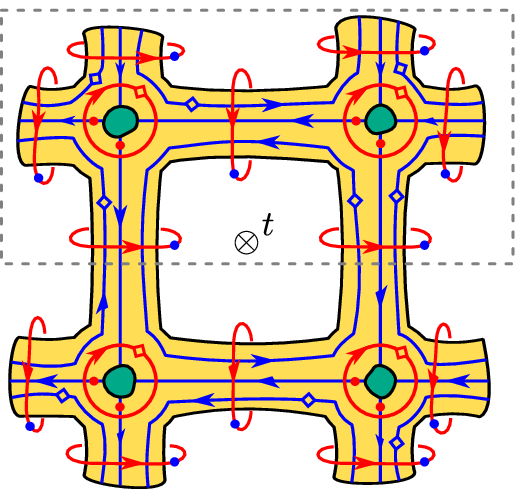} \label{slice-a}
}
\hspace{2cm}
\subfigure[]{
\includegraphics[scale=1]{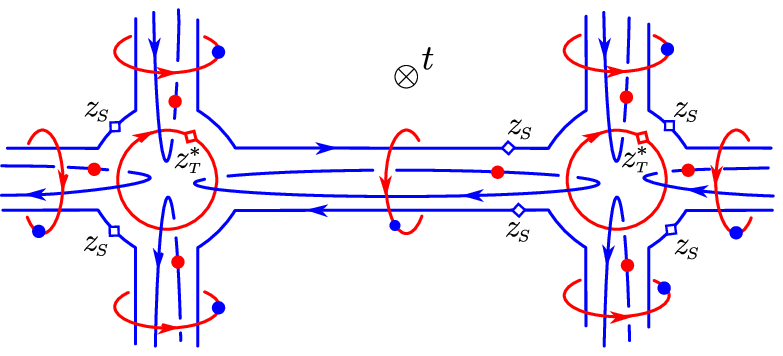} \label{slice-b}
}
\caption{A projection (bottom view) of the shaded region of figure \ref{onestepevolution-a} is in {\bf (a)}. Some of curves are cuted and their ends have to be identified. Figure {\bf (b)} shows a detailed view of the same diagram.}
\label{slice}
\end{figure}

In the following we will see how we can write the transfer matrix as a product of operators which acts on the plaquettes, vertices and links. This factorization is achieved by repeating the splitting of curves described by figures \ref{redcurves-combination} and \ref{desenho2}. The sequence of figures below gives us a prescription of how it can be done. Note that in figure \ref{slice-b} we are just looking at one single link of $\Sigma $. But all the modifications we will perform are local and can be repeated for the entire graph. The first step is to  split the blue loop of figure \ref{split1-a} as in figure \ref{split1-b}. The splitting points will be numbered in order to keep track of them.  
\begin{figure}[h!]
\centering
\subfigure[One spacelike link of the transfer matrix.]{
\includegraphics[scale=1]{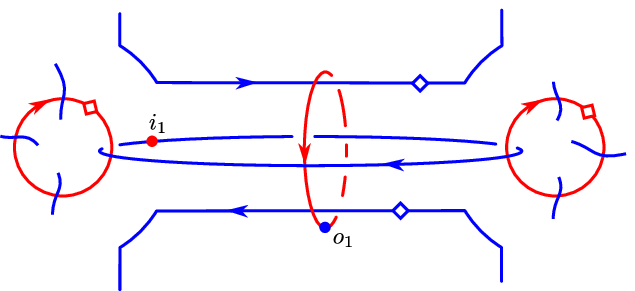} \label{split1-a}
}
\hspace{2cm}
\subfigure[The blue curve in the middle has been broken into two blue curves with red and blue extra dots.]{
\includegraphics[scale=1]{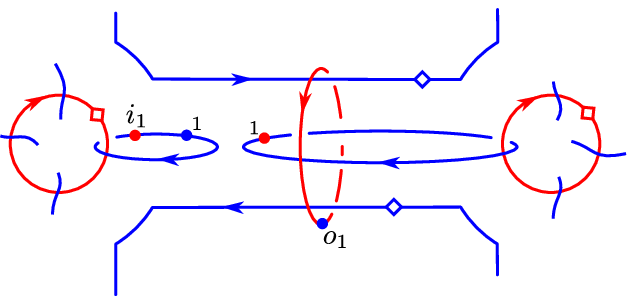} \label{split1-b}
}
\caption{Spliting the diagram to get the operators I.}
\label{split1}
\end{figure}
The next step is to slice the right blue loop of figure \ref{split1-b}. The result is shown in figure \ref{split2-a}. After that we perform the same sequence of splittingsd to the red loop in the middle. The first steps is shown by figure \ref{split2-b}.
\begin{figure}[h!]
\centering
\subfigure[One more split in the middle blue curve.]{
\includegraphics[scale=1]{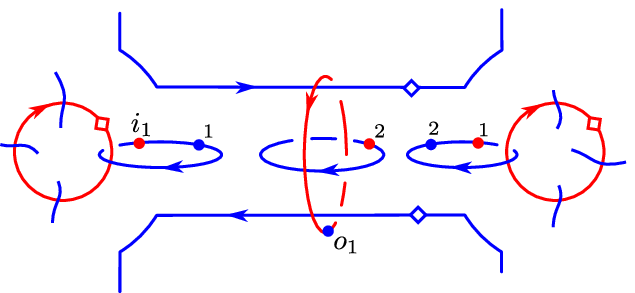} \label{split2-a}
}
\hspace{2cm}
\subfigure[We repeat the same steps in the red curve in the middle.]{
\includegraphics[scale=1]{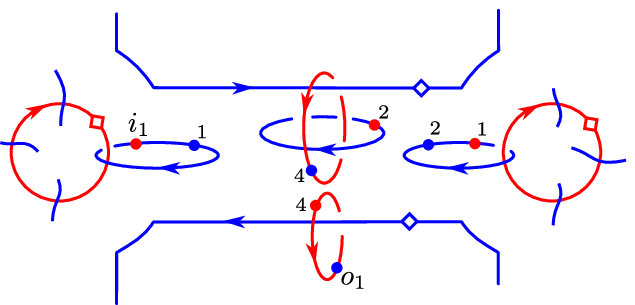} \label{split2-b}
}
\caption{Spliting the diagram to get the operators II.}
\label{split2}
\end{figure}
Finally we get the diagram shown in figure \ref{split3-a}.
\begin{figure}[h!]
	\begin{center}
		\includegraphics[scale=1]{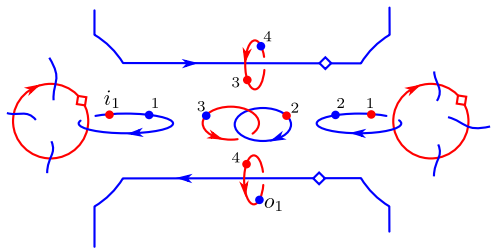}
		\caption{All the curves after being split.}
		\label{split3-a}
	\end{center}
\end{figure}
After applying the same procedure for the entire graph ${\cal D}$
one can see that the transfer matrix is written as a product of the operators shown in figure \ref{operators-diagram}. 
\begin{figure}[h!]
\centering
\subfigure[This operator is called {\it link operator} $R_l$. It acts on a link $l$ of the lattice.]{
\includegraphics[scale=1]{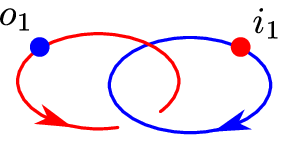} \label{linkoperator-diagram}
}
\hspace{2.2cm}
\subfigure[This operator calls {\it star operator} $A_s$. It acts on an vertex $s$ of the lattice.]{
\includegraphics[scale=1]{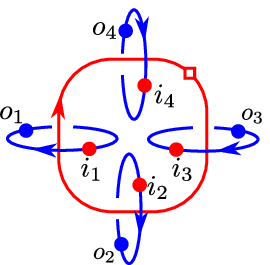} \label{staroperator-diagram}
}
\hspace{2.2cm}
\subfigure[This is the {\it plaquette operator} $B_p$. It acts on a plaquette $p$ of the lattice.]{
\includegraphics[scale=1]{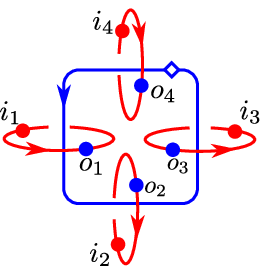} \label{plaquetteoperator-diagram}
}
\caption{The operators which generate the transfer matrix.}
\label{operators-diagram}
\end{figure}

Each vertex, link and plaquette of $\Sigma $ contributes with a star, link and plaquette operator respectively. Any two link operators commute because they act on different links. The same argument goes for the star and plaquette operator. Thus we can write the following commutation relations
\begin{equation}
\left[R_{l_1},R_{l_2} \right] = \left[A_{s_1},A_{s_2} \right] = \left[B_{p_1},B_{p_2} \right] = 0,
\label{trivialcommutation} 
\end{equation}
where $l_i$, $s_i$ and $p_i$ are some link, site and plaquette of the lattice, respectively. 

We now discuss each of these operators.


The link operator $R_l$ shown in figure \ref{linkoperator-diagram} is the simplest one. This operator acts on a link of the lattice, therefore there is one link operator for each link. In terms of Kuperberg diagrams $R_l$ is shown in figure \ref{split3-b}. But due to proposition (\ref{prop-antipodainvoluoria}) appendix A it is easy to see that this operator is proportional to the identity map.
\begin{figure}[h!]
\centering
\subfigure[Link operator written in terms of Kuperberg diagrams.]{
\includegraphics[scale=1]{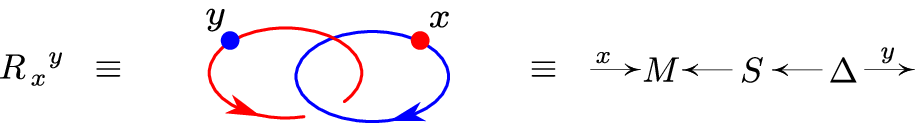} \label{split3-b}
}
\hspace{2cm}
\subfigure[The link operator is proportional to the identity map.]{
\includegraphics[scale=1]{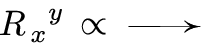} \label{split3-c}
}
\caption{Link Operator.}
\label{split32}
\end{figure}
Thus we can write it down as shown in figure \ref{split3-c}. In figure \ref{split3-a} we can see that this operator plays a role in connecting one plaquette and one star operator which act on the same link, but since this operator is trivial we just connect them directly.

The star operator $A_s(z_T^*)$ given in figure \ref{staroperator-diagram} acts on a vertex $s$ of the lattice. The action changes the states living on the links which share the vertex $s$. The corresponding tensor network is given as a Kuperberg diagram in figure \ref{staroperator-kuperberg}.
\begin{figure}[h!]
	\begin{center}
		\includegraphics[scale=1]{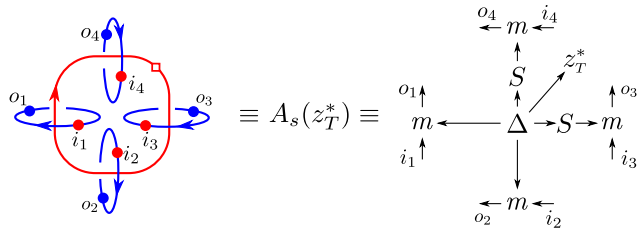}
	\caption{Star operator written in terms a of Kuperberg diagram.}	
	\label{staroperator-kuperberg}
	\end{center}
\end{figure}


The plaquette operator $B_p(z_S)$  represented by figure \ref{plaquetteoperator-diagram} acts on a plaquette $p$ of the lattice. Its Kuperberg diagram is given in figure \ref{plaquetteoperator-kuperberg}.
The action depends on the central element $z_s$. For the case of $\mathcal{A}=\mathbb{C}(G)$ it is a sum of projectors which project onto the different conjugacy classes of $G$ ~\cite{p2}. 
\begin{figure}[h!]
	\begin{center}
		\includegraphics[scale=1]{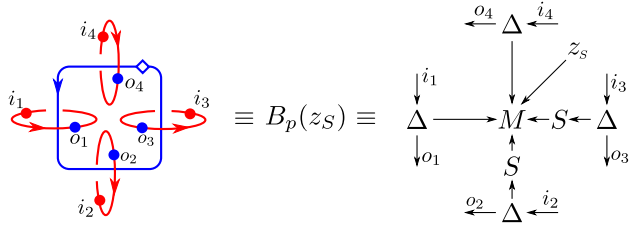}
	\caption{Plaquette operator written in terms of a Kuperberg diagram.}	
	\label{plaquetteoperator-kuperberg}
	\end{center}
\end{figure}

\section{Obtaining the Hamiltonian from the Transfer Matrix}
\label{sec-hamiltonian}
~

In the previous section the partition function was written down as the trace of the transfer matrix $U$. This operator is the product of all plaquette operators and all star operators. It turns out that the star and plaquette operators commute with each other. This fact is very easy to see for the case $\mathcal{A}=\mathbb{C}(\mathbb{Z}_2)$ but becomes more elaborated in general or even for group algebras of non-abelian groups. A proof can be found in appendix \ref{ap-commutation}. The commutation of plaquette and star operators allows us to write the transfer matrix as a product of plaquette and star operators in the most convenient way. Thus consider the transfer matrix
\begin{eqnarray}
U(\mathcal{A},z_S,z_T^*)&=&\prod_p B_p(z_S)\prod_s A_s(z_T^*)\prod_l R_l \nonumber \\
&=&\left(\textrm{dim}(\mathcal{A}) \right)^{n_l}~\prod_p B_p(z_S)\prod_s A_s(z_T^*),
\label{evolutionoperator}
\end{eqnarray}
where $n_l$ is the total number of links.
The Hamiltonian H can now be found from
$$U=e^{- \Delta t~H}$$
by taking the logarithm of $U$.

Consider the plaquette operator $B_p(z_S)$ in figure \ref{plaquetteoperator-kuperberg}, where $z_S$ is the one chosen in (\ref{zs}). Since the tensor $M_{abcd}$ is defined as the trace in the regular representation it is a linear function of $z$, so we can write this operator in the following way
\begin{equation}
B_p(z_S)=1/n\left[2(1-\delta_p)~B_p(\eta)+\delta_p~B_p(\lambda)\right].
\label{op-eq1}
\end{equation}
The operator $B_p(\eta)$ it is colored by the unit element of $\mathcal{A}$. It is represented by a single blue curve without a rhombus and will be denoted by $B_p^0$. The operator $B_p(\lambda)$ is the plaquette operator colored with the co-integral, and it is proportional to the identity map. These operators fulfil property equations
\begin{eqnarray}
\left(B_p^0\right)^k  & = & \left(\textrm{dim}(\mathcal{A})\right)^{k-1} B_p^0, \;\;\;\; k \neq 0 \label{eq1} \\
B_p(\lambda) &=& \underbrace{\textrm{tr}(\lambda)}_{=n}\underbrace{\left(\mathbf{1} \otimes \mathbf{1} \otimes\mathbf{1} \otimes\mathbf{1}\right)}_{\doteq \mathbf{1}}  = n ~ \mathbf{1}.\label{eq2} 
\end{eqnarray}
A proof can be found in appendix \ref{ap-powers}.
Using equation (\ref{eq2}) in (\ref{op-eq1}) the operator $B_p(z_S)$ can be written in the following way
\begin{eqnarray}
B_p(z_S)&=& 2/n~(1-\delta_p)~B_p^0 + \delta_p~\mathbf{1} \nonumber \\
&=& -(1-\delta_p)\left[-2/n~ B_p^0 +\mathbf{1}\right] +(\delta_p +1-\delta_p)~\mathbf{1} \nonumber \\
&=&-(1-\delta_p) B_p(z_0)+\mathbf{1}\;\nonumber
\label{eq3}
\end{eqnarray}
where $n=\textrm{dim}(\mathcal{A})$ and $B_p(z_0)=\mathbf{1}-2/n~ B_p^0$. The operator $B_p(z_0)$ obeys the following property
\begin{equation}
\left(B_p(z_0)\right)^k=\left\{\begin{array}{ccc}
\mathbf{1} & \hbox{if} & k=\hbox{even}\;, \\
B_p(z_0) & \hbox{if} & k=\hbox{odd}\;.
\end{array}
\right.
\end{equation}

Now we are ready to compute the hamiltonian by taking the logarithm of the transfer matrix. For $\Delta t=1$ we obtain
\begin{eqnarray}
-H &=& \ln{\left(n^{n_l}~  \prod_p B_p(z_S)\prod_s A_s(z_T^*)\right)} \nonumber \\
&=&\ln{\left(n^{n_l}\right)} +\sum_p \ln{\left[B_p(z_S)\right]} +\sum_s \ln{\left[A_s(z_T^*)\right]}. \label{eq4}
\end{eqnarray}
Using the Taylor expansion of $\ln(1+x)$ and equation (\ref{eq3}) we can compute the logarithm of the plaquette operator
\begin{eqnarray}
\ln{\left[B_p(z_S)\right]} &=& \ln{\left[\mathbf{1}+(\delta_p -1) B_p(z_0)\right]} \nonumber \\
&=&\sum_{k=1}^{\infty}\frac{(-1)^{k+1}}{k}\left[(\delta_p -1) B_p(z_0) \right]^k \nonumber \\
&=& -\sum_{\hbox{even}}\frac{1}{k}(\delta_p -1)^k + B_p(z_0)~\sum_{\hbox{odd}}\frac{1}{k}(\delta_p -1)^k\nonumber \\
&=& \frac{1}{2} \ln{\delta_p(2-\delta_p)} + \frac{1}{2} \ln\left( \frac{\delta_p}{2-\delta_p}\right)B_p(z_0)\;. \label{eq5}
\end{eqnarray}

All the arguments used for plaquette operator hold for the star operator, since they are constructed in a analogous way.  This is a consequence of the duality for Hopf algebras (between the algebra and co-algebra structures). Therefore we can write
\begin{equation}
\ln{\left[A_s(z_T^*)\right]} =\frac{1}{2} \ln{\delta_s(2-\delta_s)} + \frac{1}{2} \ln\left( \frac{\delta_s}{2-\delta_s}\right)A_s(z_0^*),\;.\label{eq6}
\end{equation}
where $A_s(z_0^*)=\mathbf{1}-2/n~ A_s^0$ and $A_s^0=A_s(\epsilon)$. Finally using equations (\ref{eq5}) and (\ref{eq6}) in (\ref{eq4}) gives us the Hamiltonian
\begin{equation}
H(\gamma_p,\gamma_s)= -~\gamma_p~\sum_p \left(\mathbf{1}-\frac{2}{n}~B_p^0 \right) -~\gamma_s~\sum_s\left(\mathbf{1}-\frac{2}{n}~A_s^0\right) -\gamma ~\mathbf{1} ,
\label{hamiltoniankitaev}
\end{equation}
where the parameters $\gamma_p$, $\gamma_s$ and $\gamma$ are given by
\begin{eqnarray}
\gamma_p &=& \frac{1}{2} \ln\left( \frac{\delta_p}{2-\delta_p}\right) \nonumber \\
\gamma_s &=& \frac{1}{2} \ln\left( \frac{\delta_s}{2-\delta_s}\right)\nonumber \\
\gamma &=& \ln\left[ \left(\delta_p(2-\delta_p)\right)^{n_p/2}\left(\delta_s(2-\delta_s)\right)^{n_l/2} \right]\;.\nonumber
\end{eqnarray}
Notice that the transfer matrix is a product of $B_p(z_S)$ and $A_s(z_T^*)$ but the Hamiltonian is a linear combination of operators $B_p^0$ and $A_s^0$.
Since $0 \leq \delta_{s,p} \leq 1$, it is not difficult to see that $~\gamma_{s,p} \leq 0$ and $~ \gamma \leq 0$. 
This Hamiltonian is the one we got from a deformed $3-$manifold invariant, its partition function ($Z=e^{-\Delta t ~H}$) is not topological for all values of the parameters $\gamma_p$ and $\gamma_s$, but in the limit $\gamma_{s,p} \rightarrow -\infty$ ($\delta_{s,p}\rightarrow 0^{+}$) we get the Hamiltonian of Kitaev's model and its partition function becomes topological. Such a partition function is the one which comes from the non-deformed Kuperberg invariant. The constant $\gamma$ is a finite constant, for $0 \leq \delta_{s,p} \leq 1$, and it represents just a shift in the energy levels. From now on we will ignore this term in the Hamiltonian.

In the following sections we are going to present a ground state in terms of diagrams and we are also going to study the degeneracy of the ground state.

\section{Ground States}
\label{sec-groundstate}
~

The Hamiltonian defined in (\ref{hamiltoniankitaev}) is made of a sum of commuting operators, so it is enough to look at the spectrum of each operator separately. 

The operators $A_s^0$ and $B_p^0$ are projectors, which implies  they have the same spectrum which is $spec\left(A_s^0\right)=spec\left(B_p^0\right)=\{0,~n \}$\footnote{Where $n=\textrm{dim}(\mathcal{A})$.}. Consider $\{\vert \Psi_i \rangle \}$ a basis of eigenvectors of $A_s^0$ and $B_p^0$, it is not difficult to see that the Hamiltonian $H(\gamma_p,\gamma_s)$ has the lowest eigenvalue when 
\begin{eqnarray}
B_p^0\vert \Psi_i \rangle &=& n~\vert \Psi_i \rangle \;\;\;\;\;\;\;\; \hbox{for all plaquettes}\label{eigenstatesB} \\
A_s^0 \vert \Psi_i \rangle &=& n~\vert \Psi_i \rangle \;\;\;\;\;\;\;\; \hbox{for all vertexes} \label{eigenstatesA}
\end{eqnarray}
therefore the lowest eigenvalue of $H(\gamma_p,\gamma_s)$ is $\left(-n_p\gamma_p - n_s\gamma_s\right)$. Thus the eigenstates of $H(\gamma_p,\gamma_s)$ are the states $\vert \Psi \rangle$ such that the equations (\ref{eigenstatesB}) and (\ref{eigenstatesA}) holds or equivalently if the equation below holds, 
\begin{equation}
H(\gamma_p,\gamma_s)\vert \Psi \rangle = \left(-n_p\gamma_p - n_s\gamma_s\right)\vert \Psi \rangle. \label{groundstateH}
\end{equation}

The statement in equation (\ref{groundstateH}) gives us the energy eigenvalue of the ground state but it says nothing about the structure of the ground state. In this section we are going to construct two tensor networks that are exact ground states of (\ref{hamiltoniankitaev}). 

Before we proceed, we need to establish the behaviour of plaquette and vertex operators under change of curve orientation. One can show  that
$B_p(z)$ and $A_v(z^*)$ are mapped to $B_p(S(z))$ and $A_v(S(z^*))$. A proof for this statement can be found in appendix \ref{ap-orientation}.

The plaquette operator we have previously defined is the one shown in figure \ref{plaquette-orientation-a} for $z_S=\eta$. But it can also be written as the ones shown in figures \ref{plaquette-orientation-b} and \ref{plaquette-orientation-c}. 
This change of orientation will not affect the coloured operators provided that $S(z_S)=z_S$.
\begin{figure}[h!]
\centering
\subfigure[]{
\includegraphics[scale=1]{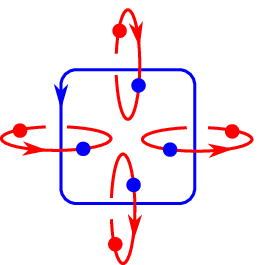} \label{plaquette-orientation-a}
}
\hspace{1.5cm}
\subfigure[]{
\includegraphics[scale=1]{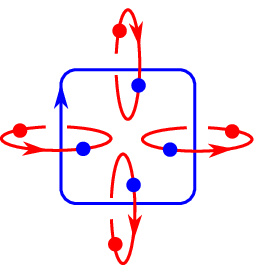} \label{plaquette-orientation-b}
}
\hspace{1.5cm}
\subfigure[]{
\includegraphics[scale=1]{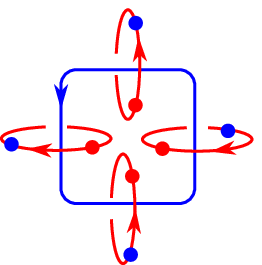} \label{plaquette-orientation-c}
}
\caption{Possibles curves orientation for the plaquette operator.}
\label{plaquette-orientation}
\end{figure}
The same thing happens for the star operator. In figure \ref{star-orientation} all the operators are equal despite the differences in orientation. 
\begin{figure}[h!]
\centering
\subfigure[]{
\includegraphics[scale=1]{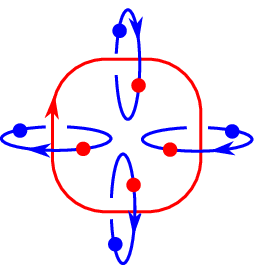} \label{star-orientation-a}
}
\hspace{1.5cm}
\subfigure[]{
\includegraphics[scale=1]{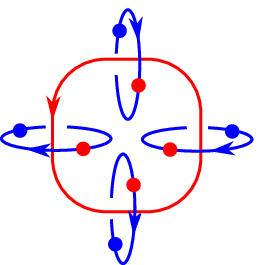} \label{star-orientation-b}
}
\hspace{1.5cm}
\subfigure[]{
\includegraphics[scale=1]{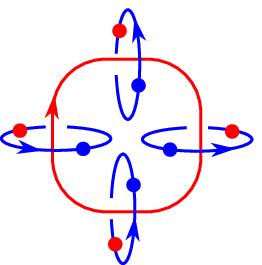} \label{star-orientation-c}
}
\caption{Possibles curves orientation for the star operator.}
\label{star-orientation}
\end{figure}

\subsection{Constructing $|\psi_p\rangle$}
~

The way we will construct a ground state is by taking the product of all plaquette operators (the one drawn in figure \ref{plaquette-orientation-c}), and then feeding the input arrows with the co-integral. In other words the ground state will be of the form
$$\vert \Psi_p \rangle = \prod_p B_p^0\left(\vert\lambda \rangle\otimes \vert\lambda \rangle\otimes\cdots \otimes \vert\lambda \rangle \right)\;,$$
of course it is a eigenstate for all the plaquette operators, since an operator $B_{p^\prime}$ commutes with all $B_p$'s, which means that
\begin{equation}
B_{p^\prime}\prod_p B_p^0 = n \prod_p B_p^0 \;\; \Rightarrow \;\; B_{p^\prime} \vert \Psi_p \rangle = n \vert \Psi_p \rangle\;.\label{B-eigenstate}
\end{equation}
It remains to show that it is also an eigenstate of the star operator. 

In the following we are going to shown a diagrammatic way to represent such a state. First we draw a diagram for the product of all plaquette operators, as shown in figure \ref{buildingpsip}, then we feed all the red dots in figure \ref{buildingpsip} with the co-integral (which is the trace in the regular representation) and after that we get the state drawn in figure \ref{buildingpsipII}.
\begin{figure}[h!]
	\begin{center}
		\includegraphics[scale=1]{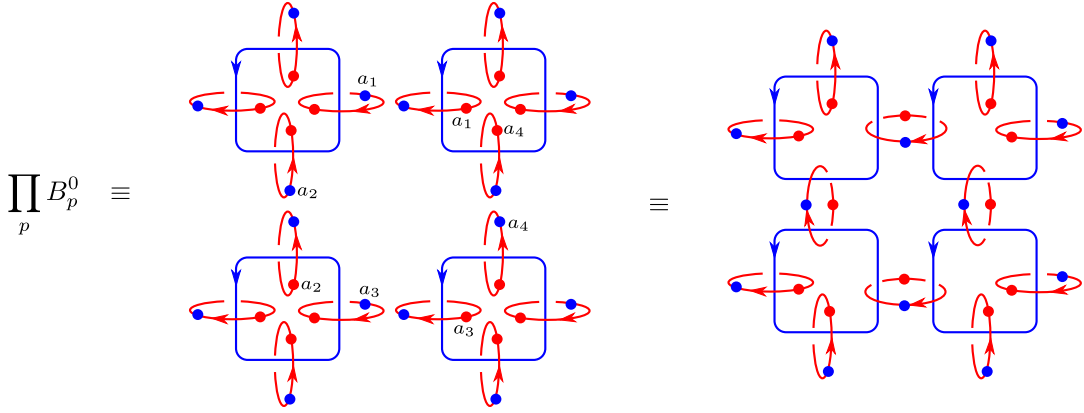}
	\caption{Product of all the plaquette operators.}	
	\label{buildingpsip}
	\end{center}
\end{figure}
\begin{figure}[h!]
	\begin{center}
		\includegraphics[scale=1]{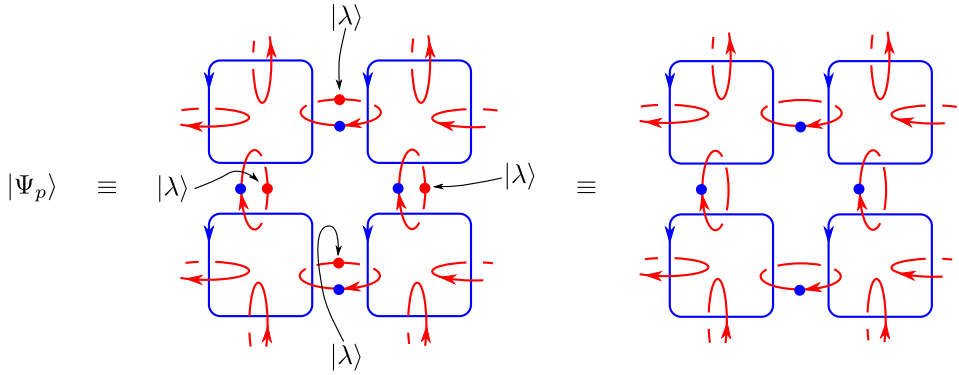}
	\caption{The ground state built from the product of plaquette operators.}	
	\label{buildingpsipII}
	\end{center}
\end{figure}
Note that the state drawn in figure \ref{buildingpsipII} has one blue curve for each plaquette and one red curve for each link, as illustrated in figure \ref{states-a}, where we removed the orientation of the curves for simplicity.
\begin{figure}[h!]
	\begin{center}
		\includegraphics[scale=1]{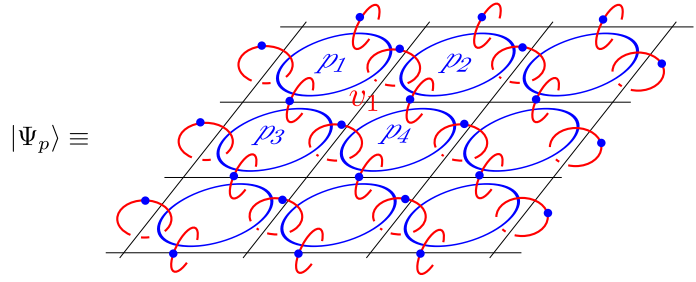}
	\caption{The ground state build from the product of plaquette operators, it has one blue curve for each plaquette and one red curve for each link.}	
	\label{states-a}
	\end{center}
\end{figure}

We are going to use the curves diagram to show that
the state drawn in figure \ref{states-a} is a ground state for the Hamiltonian (\ref{hamiltoniankitaev}). Let us start by deriving equation (\ref{B-eigenstate}) one more. The result of applying a plaquette operator on such a state is shown on figure \ref{plaquettegroundstate-a}.
\begin{figure}[h!]
	\begin{center}
		\includegraphics[scale=1]{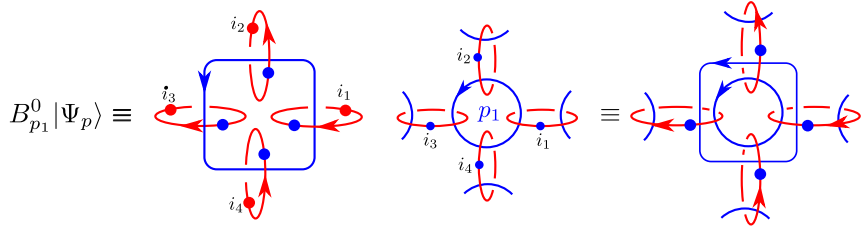}
	\caption{Applying the plaquette operator on a plaquette $p_1$.}	
	\label{plaquettegroundstate-a}
	\end{center}
\end{figure}
We end up in the diagram in figure \ref{plaquettegroundstate-b}
after removing the outside blue curve using the sliding and the two-point moves described in appendix \ref{ap-moves}.  
\begin{figure}[h!]
	\begin{center}
		\includegraphics[scale=1]{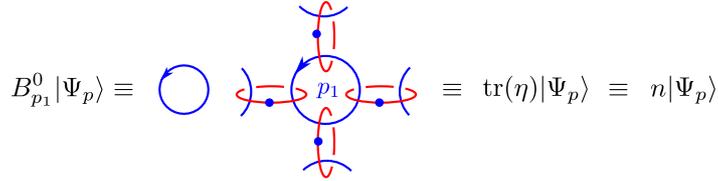}
	\caption{Here we can see that $\vert \Psi_p \rangle$ is an eigenstate of the plaquette operator with eigenvalue $n$.}	
	\label{plaquettegroundstate-b}
	\end{center}
\end{figure}
Exactly the same computation holds for all the plaquettes.

We still have to prove that $\vert \Psi_p \rangle$ is also an eigenstate of the star operator. For that let us apply the star operator on the vertex $v_1$, as described by figure \ref{stargroundstate-a}. In this case we will have to use a slightly different procedure to remove the red curve in the center of the diagram. Instead of one slide move in the plaquette operator case, we will have to use four slide moves, one over each red curve associated to the vertex $v_1$, see figure \ref{stargroundstate-b}. The same procedure holds for all vertex operators. Thus these computations show us that the state $\vert \Psi_p \rangle$ (figure \ref{states-a}) is in fact one ground state of the Hamiltonian (\ref{hamiltoniankitaev}).
\begin{figure}[h!]
	\begin{center}
		\includegraphics[scale=1]{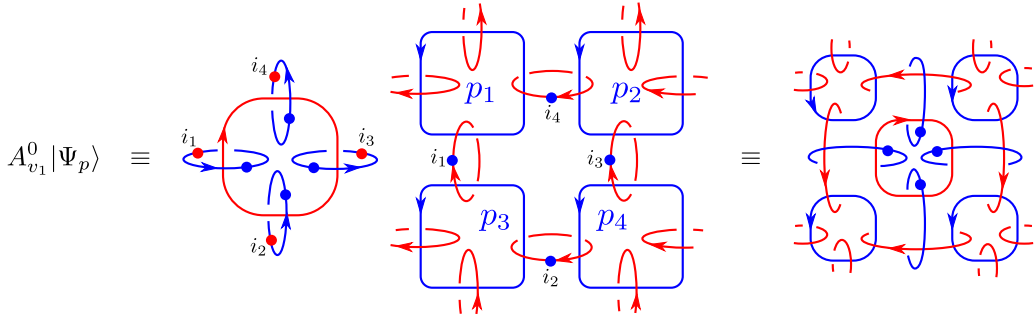}
	\caption{The star operator acting on a vertex $v_1$ of the lattice.}	
	\label{stargroundstate-a}
	\end{center}
\end{figure}
\begin{figure}[h!]
	\begin{center}
		\includegraphics[scale=1]{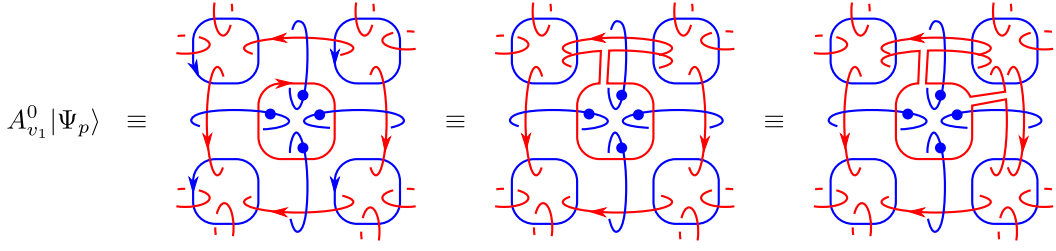}
	\caption{The star operator acting on a vertex $v_1$ of the lattice, after three slides of red curves. In this figure we are not drawing the blue dots in the blue curves just for simplicity.}	
	\label{stargroundstate-b}
	\end{center}
\end{figure}
\begin{figure}[h!]
	\begin{center}
		\includegraphics[scale=1]{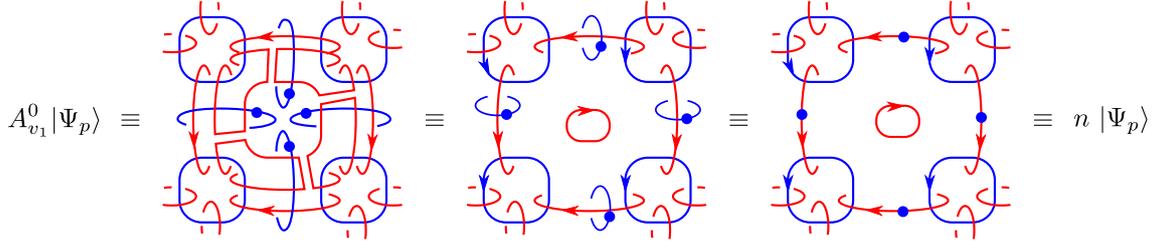}
	\caption{After four slides of red curves we can use two-point moves to get rid of some crossings.}	
	\label{stargroundstate-c}
	\end{center}
\end{figure}

Figure \ref{states-a} is a graphical representation of a tensor network. The corresponding Kuperberg diagram is given in figure \ref{desenho3}. 

\begin{figure}[h!]
	\begin{center}
		\includegraphics[scale=1]{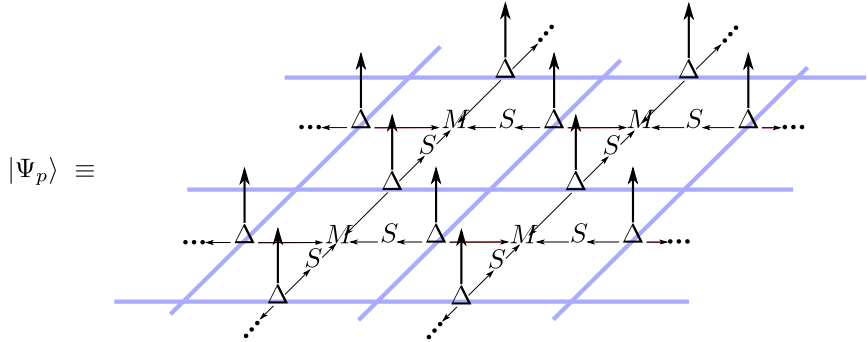}
	\caption{State $\vert\Psi_p\rangle$ written as a tensor network in the Kuperberg notation.} 
	\label{desenho3}
	\end{center}
\end{figure}

\subsection{Constructing $|\psi_s\rangle$}
~

The ground state we have just discussed was made of product of all plaquette operators with all its input arrows fed with co-integrals. Using the same line of thought one can write a ground state made with a product of vertex operators fed with some element of the algebra. In fact, there is another ground state $\vert \Psi_s \rangle$ we can get by feeding the product of vertex operators with a tensorial product of the unit of the algebra (such a state was also written down in \cite{AG2}). That is
$$\vert \Psi_s \rangle = \prod_s A_s^0 \left(\vert \eta \rangle \otimes\vert \eta \rangle \otimes \cdots \otimes \vert \eta \rangle \right)\;. $$
This new state is for sure an eigenstate of the vertex operator, due to the fact that any vertex operator $A_{s^\prime}$ commutes with all the others and the fact that $A_{s^\prime}$ is a projector
\begin{equation}
A_{s^\prime}\prod_s A_s^0= n ~\prod_s A_s^0 \;\; \Rightarrow \;\; A_{s^\prime}\vert \Psi_s\rangle=n\vert \Psi_s\rangle \;. \label{A-eigensta}
\end{equation}
The product of all the star operators, in terms of diagrams, is the one shown in figure \ref{buildingpsis}. In order to get the state we just attach one unit element in each free red dot, as shown in figure \ref{buildingpsisII}.
\begin{figure}[h!]
	\begin{center}
		\includegraphics[scale=1]{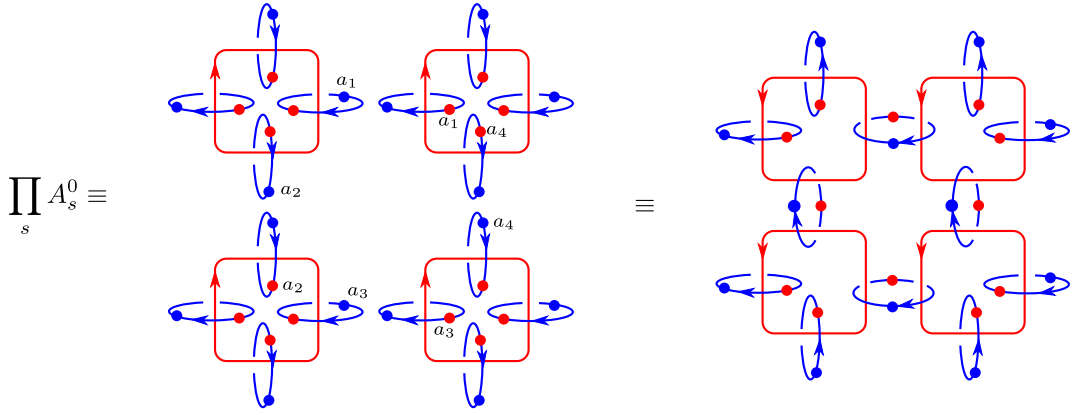}
	\caption{Product of all the vertex operators.}	
	\label{buildingpsis}
	\end{center}
\end{figure}
\begin{figure}[h!]
	\begin{center}
		\includegraphics[scale=1]{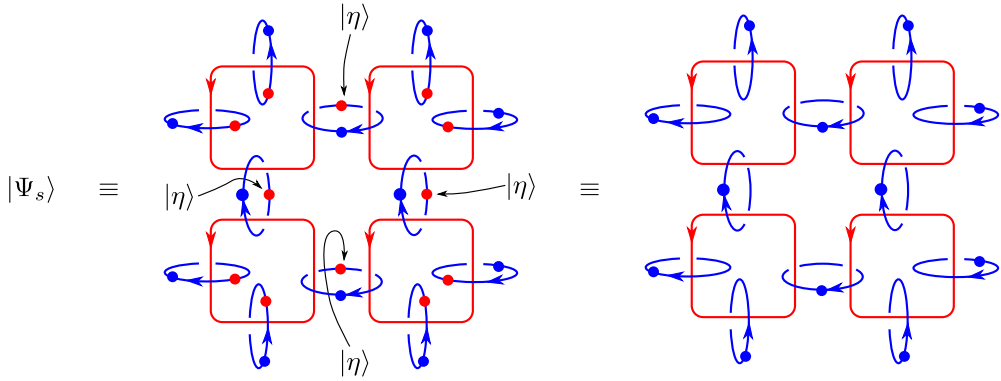}
	\caption{The ground state built from the product of vertex operators.}	
	\label{buildingpsisII}
	\end{center}
\end{figure}

A graphical proof that $\vert \Psi_s \rangle$ is a eigenstate with eigenvalue equal to $n$ for both the plaquette and vertex operators is completely analogous to the one presented for $\vert \Psi_p \rangle$. All steps are the same if we exchange blue and red curves.

As for the ground state $\vert \Psi_p \rangle$, the ground state $\vert \Psi_s \rangle$ has a geometrical meaning. It has one red curve for each link of the lattice and one blue curve for each link, as illustrated in figure \ref{statesII-a}. In the figure we have not drawn the orientations for simplicity. The corresponding tensor network is given in figure \ref{desenho4}.
\begin{figure}[h!]
	\begin{center}
		\includegraphics[scale=1]{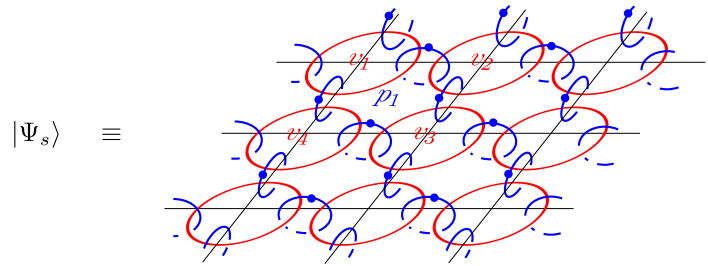}
	\caption{The ground state built from the product of vertex operators. It has one red curve for each vertex and one blue curve for each link.}	
	\label{statesII-a}
	\end{center}
\end{figure}
\begin{figure}[h!]
	\begin{center}
		\includegraphics[scale=1]{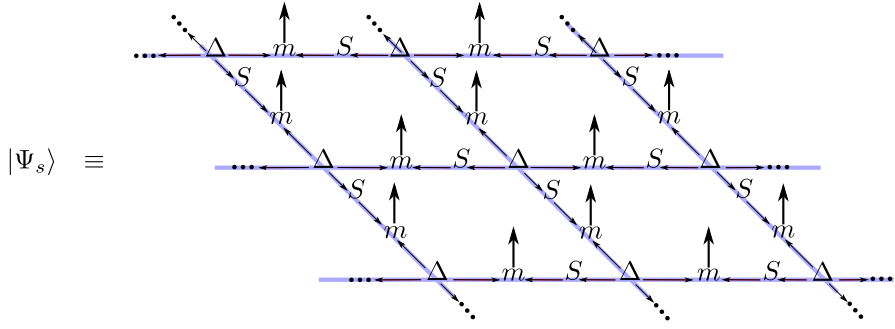}
	\caption{State $\vert\Psi_s\rangle$ written as a tensor network in the Kuperberg notation.} 
	\label{desenho4}
	\end{center}
\end{figure}

\subsection{Comparing $|\Psi_p\rangle$ and $|\Psi_s\rangle$}
~

We can understand the difference between the ground states $|\Psi_p\rangle$ and $|\Psi_s\rangle$ by considering them in the loop representation~\cite{loop}. As $|\Psi_p\rangle$ is obtained by acting with the product of the plaquette operators on the tensor product of co-integrals on all the links, it is precisely the sum of all closed loops. This is because the plaquette operator projects just these states which contribute to the ground state. If we are working on a surface with non-zero genus then this contains the sum of terms belonging to the different topological sectors. Consider for example the toric code $\mathcal{A}=\mathbb{C}(\mathbb{Z}_2)$ on the 2-torus $T^2$.  We have
\begin{equation}
|\Psi_p\rangle = |\chi_1\rangle + |\chi_2\rangle + |\chi_3\rangle + |\chi_4\rangle
\end{equation}
where $|\chi_i\rangle$ are the ground states belonging to the different sectors distinguished by the different non-contractible loops of $T^2$.

On the other hand while obtaining $|\Psi_s\rangle$ we started with a configuration which chose a particular sector, namely $|\chi_1\rangle$, and then acted with the product of gauge transformations, that is the product of the star operators which do not mix the different topological sectors. Thus on $T^2$ we have
\begin{equation}
|\Psi_s\rangle = |\chi_1\rangle.
\end{equation}   

These two states coincide on the 2-sphere, $S^2$ as there are no non-contractible loops on $S^2$.

The loop representation for the ground states is only true for the abelian quantum double models as the excitations in these models have abelian fusion rules. In the case of non-abelian models the fusion rules are more complicated leading to a representation for the ground states not just in terms of closed loops but also other structures depending on the non-abelian group.
\subsection{Ground State Degeneracy}
~

Consider $\mathcal{\beta}=\{\vert \Psi \rangle \}$ a basis of eigenstates of both $A_s^0$ and $B_p^0$. The ground state space $V$ is defined by
$$V = \{\vert \Psi_0 \rangle: B_p^0\vert \Psi_0 \rangle=A_s^0\vert \Psi_0 \rangle=n~\vert \Psi_0 \rangle\;, \; \forall_{s,p}\}~.$$
In this section we are going to compute the ground state degeneracy of (\ref{hamiltoniankitaev}) defined on a surface $\Sigma$. We will show that $\textrm{dim}(V)$ is determined by the 3-manifold invariant defined in \cite{kuperberg} computed on $\Sigma\times S^1$.

Let us denote by $U^0$ the transfer matrtix (\ref{evolutionoperator}) computed at $z_S= \eta, Z^*_T= \epsilon$. Then
$$ U^0 \vert \Psi \rangle = \left\{
\begin{array}{ccc}
n^{n_p+n_l+n_v}~\vert \Psi \rangle & \hbox{ if } & \vert \Psi \rangle \in V \\
0  & \hbox{ if } & \vert \Psi \rangle \notin V
\end{array}
\right.$$
which means that, in the basis $\mathcal{\beta}$, the operator $U^0$ is of the form
$$U^0 \equiv n^{n_p+n_l+n_v}~\left(
\begin{array}{cc}
\mathbf{1}_{k \times k} & \mathbf{0} \\
\mathbf{0} & \mathbf{0}
\end{array} \right)$$
where $\mathbf{0}$ in the null matrix and $k=\textrm{dim}(V)$. If we take the trace of $U^0$ we get
\begin{equation}
{tr}~(U^0)=n^{n_p+n_l+n_v}~\textrm{dim}(V) \;\; \Rightarrow \;\; \textrm{dim}(V)=\frac{\textrm{tr}~(U^0)}{n^{n_p+n_l+n_v}}\;.
\label{degdeg}
\end{equation}
A similar answer for the ground state degeneracy in terms of the trace of a transfer matrix was obtained in \cite{twist}.

But $\textrm{tr}~(U^0)$ is the partition function $Z(\Sigma\times S^1)$ computed for $z=\eta, z^*=\epsilon$ and 
$$
Z\left( \Sigma \times S^1\right)=\mathcal{F}\left( \Sigma \times S^1\right)n^{g_H+N_B+N_R}.
$$ Here $\mathcal{F}\left( \Sigma \times S^1\right)$ is the Kuperberg invariant, $g_H$ is the genus of the Heegaard spliting, $N_B$ and $N_R$ are the numbers of blue and red curves, respectively \cite{kuperberg}. Therefore 
\begin{equation}
\textrm{dim}(V)= \mathcal{F}\left( \Sigma \times S^1\right) \frac{n^{-g_H+N_B+N_R}}{n^{n_p+n_l+n_v}}\;.\label{dimV-a}
\end{equation}
The factor multiplying $\mathcal{F}$ seems to depend on the size of the lattice. However that is not the case. Looking at figure \ref{slice} we can see that $N_B$ and $N_R$ are related to the parameters $n_p$, $n_l$ and $n_v$ of the $2-$manifold $\Sigma$, by
$$ N_B = n_p + n_l \;\;\;\; , \;\;\;\; N_R = n_v+ n_l ~.$$
Besides, the parameters $n_p$, $n_l$ and $n_v$ satisfy the Euler equation
$$n_p+n_v-n_l = \chi ~,$$
where $\chi=2-2g$ is the Euler caracteristic of $\Sigma$. This way we get
$$\textrm{dim}(V) = \mathcal{F}\left( \Sigma \times S^1\right)~n^{n_p+n_v-g_H-\chi}\;.$$
Now the last step is to write down $g_H$ as a function of $n_p$, $n_v$ and $n_l$. Let us do this looking at the dual Heegaard diagram. In figure \ref{genusheegaard-b} we can see the dual Heegaard diagram of $\Sigma\times I$, shown in figure \ref{genusheegaard-a}. Each grey rod highlighted (figure \ref{genusheegaard-b}) represents a spacelike plaquette.
\begin{figure}[h!]
\centering
\subfigure[]{
\includegraphics[scale=1]{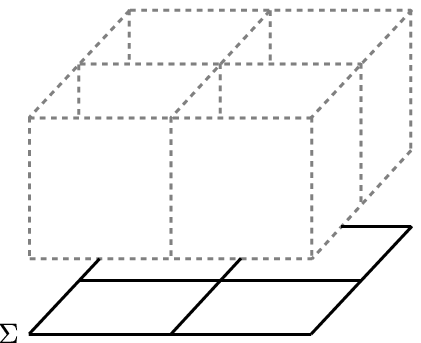} \label{genusheegaard-a}
}
\hspace{.5cm}
\subfigure[]{
\includegraphics[scale=1]{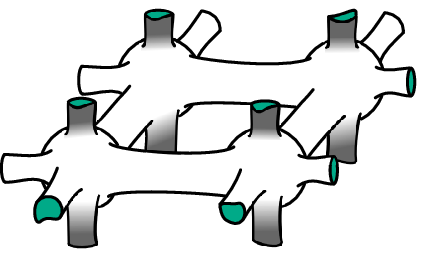} \label{genusheegaard-b}
}
\hspace{.5cm}
\subfigure[]{
\includegraphics[scale=1]{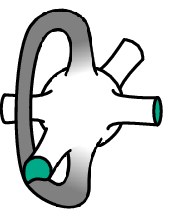} \label{genusheegaard-c}
}
\caption{This figure illustrate how to relate $g_H$ with the parameters of $\Sigma$.}
\label{genusheegaard}
\end{figure}
Note that there is one handle on the Heegaard diagram for each vertex of $\Sigma$, see figure \ref{genusheegaard-a}. Here we are essentially counting the number of non contractible loops and since $\Sigma$ has genus $g$, it has $2g$ more non-contractible loops.  Besides, since we took the trace of the transfer matrix, each plaquette is actually glued to itself, as shown in figure \ref{genusheegaard-c}, and it shows us that each plaquette increases the number of handles by one unit as well. Therefore we can write
$$g_H=n_p+n_v-1+2g$$
where the factor $-1$ is due to the fact we can always stretch one face over around the entire diagram, decreasing the genus by one unit. Therefore the ground state degeneracy becomes
\begin{equation}
\textrm{dim}(V) = \frac{1}{\textrm{dim}(\mathcal{A})}~\mathcal{F}\left( \Sigma \times S^1\right)
\end{equation}
which is clearly a topological invariant.

\subsection{The ground state degeneracy of $\mathbb{C}(\mathbb{Z}_2)$ quantum double models}
Quatum double models with $\mathcal{A}=\mathbb{C}(\mathbb{Z}_2)$ have its ground state degeneracy known~\cite{AK} to be given by
\begin{equation}
\hbox{dim}(V)=2^{2g}\;.
\label{eq-deg}
\end{equation}
Such a result can be obtained from the expression we got for the ground state degeneracy as a special case. In this case the plaquette and vertex operators can be written in terms of the Pauli matrices in the following way
\begin{eqnarray}
B_p & = & \mathbf{1}\otimes\mathbf{1}\otimes\mathbf{1}\otimes\mathbf{1}+ \sigma^z\otimes\sigma^z\otimes\sigma^z\otimes\sigma^z\;, \nonumber \\
A_s & = & \mathbf{1}\otimes\mathbf{1}\otimes\mathbf{1}\otimes\mathbf{1}+ \sigma^x\otimes\sigma^x\otimes\sigma^x\otimes\sigma^x\;.\nonumber
\end{eqnarray}
The degeneracy, according to equation (\ref{degdeg}) is given by
$$\textrm{dim}(V)=\frac{\textrm{tr}(U)}{2^{n_p+n_l+n_v}} \;\;\; \Rightarrow \;\;\; \textrm{dim}(V)=\frac{2^{n_l}~\textrm{tr}\left(\prod_p B_p \prod_s A_s \right)}{2^{n_p+n_l+n_v}}$$
which means we just have to evaluate the trace of the product of all the plaquette and vertex operators. Both plaquette and vertex operators are made of a tensorial product of identities all over the place except in four specific positions which correspond to the link where they are acting, in these specific positions instead of an identity map they have an operator $\sigma^i$ ($i=x$ for the vertex operator and $i=z$ for the plaquette operator) acting on it. It is not difficult to convince ourselves that the product of all plaquette (vertex) operator will have exactly two terms which is a tensorial product of identity maps all over the place. All the remaining terms will have an operator $\sigma^z$ ($\sigma^x$) in at least one position of such a tensorial product. Let us write this statement in the following way
\begin{eqnarray}
\prod_p B_p &=& 2\left(\mathbf{1} \otimes \cdots \otimes \mathbf{1}\right) + f(\cdots \otimes \sigma^z \otimes \cdots)\nonumber \\
\prod_s A_s &=& 2\left(\mathbf{1} \otimes \cdots \otimes \mathbf{1}\right) + f(\cdots \otimes \sigma^x \otimes \cdots)\nonumber \;,
\end{eqnarray}
where $f(\cdots \otimes \sigma^i \otimes \cdots)$ represent the sum of all the remaining terms, which are not a tensorial product of only identity maps.
Using the same argument we can say that the product of $\prod B_p$ with $\prod A_s$ will be of the form
$$\prod_p B_p \prod_s A_s = 4\left(\mathbf{1} \otimes \cdots \otimes \mathbf{1}\right) + f(\cdots \otimes \sigma^i \otimes \cdots)\;.$$
Since $\textrm{tr}(\sigma^i)=0$ and $\textrm{tr}(A\otimes B) = \textrm{tr}(A)\textrm{tr}(B)$ we have 
$$\textrm{tr}\left( \prod_p B_p \prod_s A_s \right) = 4~\left(\textrm{tr}(\mathbf{1})\right)^{n_l}=4~2^{n_l}.$$
Thus the ground state degeneracy is 
$$\hbox{dim}(V)=\frac{4~2^{2 n_l}}{2^{n_p+n_l+n_v}}=4~2^{-(n_p+n_v-n_l)}\;,$$
where $n_p+n_v-n_l$ is the Euler characteristic $\chi=2-2g$, then we can also write the ground state degeneracy as
$$\hbox{dim}(V)=2^{2g}\;,$$
which is the same degeneracy found in \cite{AK}. For abelian groups the same argument holds, with difference that the plaquette and vertex operators will be written in terms of some other operators which play the same role of $\sigma^x$ and $\sigma^z$. For abelian groups the ground state degeneracy will be given by
$$\hbox{dim}(V)=\vert G \vert^{2g}\;.$$

For the non-abelian case it is not possible to derive a general formula as given above. We can however work it out for specific cases. The answer for the case of $\mathbb{C}(S_3)$ is found to be 8 in~\cite{lsnab}. 

\section{Beyond Quantum Double Models}
\label{sec-beyond}
~

The quantum double Hamiltonians came from a partition function $Z\left(\mathcal{A}, z_T, z_S, z_T^*, z_S^*\right)$ for the case when $z_T= \eta$ and $z_S^*=\epsilon$ and $z_T$ and $z_S^*$ are the ones given by equations \ref{zs} and \ref{zt*}.  This leads to the natural question as to what kind of models are obtained for other choices of these parameters. We present a short summary of the results, which are analyzed in detail in an accompanying paper~\cite{p2}.
 
 We divide the summary into two parts. The first of which considers parameter choices which lead to models similar to the quantum double model considered so far in this paper. We explain the manner in which they are similar. The second part deals with parameter choices resulting in perturbations to the quantum double model. These perturbations can be thought of as magnetic fields acting on the links in the $\mathbb{C}(\mathbb{Z}_2)$ QDM, that is the toric code. For the other cases they provide analogous perturbations but with a less clearer meaning. 
 
 Finally we also mention the models obtained when we mix the parameters in the above two cases. 
 
\subsection{Other Quantum Double-like Models}
~
 
To obtain these models we make the following choices for the parameters: we colored timelike link with
 
\begin{equation}
z_T^* = \sum_R\sum_j\sum_{g_j\in C_j}~\chi_R(g_j)~\alpha_j~\psi^{g_j}
\end{equation}
 where $j$ and $R$ run over the conjugacy classes and the irreducible representations of the group $G$, respectively, and $C_j$ denotes the $j$th conjugacy class. For spacelike plaquette coloring we choose
 
 \begin{equation} 
 z_S = \sum_i~\beta_i~\sum_{g\in C_i}~\phi_g
  \end{equation}
 where $i$ runs over the different conjugacy classes.  For timelike plaquettes and spacelike links we choose the trivial coloringm given by
 
 \begin{equation} 
 z_T = \eta~\textrm{and}~z_S^* = \epsilon.
 \end{equation}
The Hamiltonians we obtain are given by \footnote{sl=spacelike link, tl=timelike link, sp=spacelike plaquette and tp=timelike plaquette.}
 \begin{equation} 
 H^{\textrm{sp},~\textrm{tl}} = \sum_p\sum_i~\textrm{ln}(\beta_i)~B_p^{C_i} + \sum_s\sum_R~A_s^R(z_T^*) 
 \end{equation}
where $B_p^{C_i}$ are the projectors to the conjugacy classes (flux sectors) $C_i$ and $A_s^R(z_T^*)$ are projectors to the irreducible representation (charge sector) $R$. In this picture the QDM discussed earlier in this paper is made up of projectors to the vacuum flux and charge sectors of the chosen algebra. It should be noted that the ground states of these models can be thought of as excited states for the ``vacuum sector'' QDM. A similar consideration was also made recently by~\cite{lsnab}. The analysis of this model including the tensor network representation of the ground states is discussed in detail in~\cite{p2}.
 
 \subsection{Perturbed QDMs}
 ~

 These models are obtained with the following parameter choices
 
 \begin{equation} 
 z_T = \sum_i~\alpha_i~\sum_{g\in C_i}~\phi_g
  \end{equation}
 
 \begin{equation} 
 z_S^* = \sum_i~\beta_i~\sum_{g\in C_i}~\psi^{g} 
  \end{equation}
  where $i$ runs over the different conjugacy classes. 
  
 \begin{equation}
 z_S = \eta~\textrm{and}~z_T^* = \epsilon.
 \end{equation}
 
 The Hamiltonian is given by 
 
\begin{equation}
H^{\textrm{sl},~\textrm{tp}} = \sum_p~\left(B_p-\mathbf{1}\right) + \sum_s~\left(A_s - \mathbf{1}\right) + \sum_s~\textrm{ln}(L_s) + \sum_p~\textrm{ln}(T_p) + \sum_{s',p'}~\textrm{ln}(L_s'T_p').
\end{equation}
 
 The details of the operators $L_s$ and $T_p$ are explained in~\cite{p2}. To give a picture of what these models look like we consider the familiar $\mathbb{C}(\mathbb{Z}_2)$ case.
 
 \begin{equation}
  H^{\textrm{sl},~\textrm{tp}} = \sum_p~\left(B_p-\mathbf{1}\right) + \sum_s~\left(A_s - \mathbf{1}\right) + \alpha_x\sum_l~\sigma^x + \alpha_y\sum_l~\sigma^y + \alpha_z\sum_l~\sigma^z.
  \end{equation}
 
 The detailed computations of the ground states of these models are shown in~\cite{p2}.
 
 The most general model is obtained by choosing a non-trivial central element for all the four parts of the 3D lattice.

\subsection{Other Phases}
~

The parameter space explored so far was restricted to the region which produced the toric code/quantum double phase. The ground states of the Hamiltonians describing this phase were seen to be topological. However the transfer matrices built using Kuperberg invariants give us the freedom to study other regions of the parameter space which belong to other phases which are not topological in the sense of the quantum double models. This freedom is due to the fact that the method used so far is a general way of studying a lattice gauge theory based on discrete groups. Thus it allows us to probe the entire dynamics of these theories through the choice of parameters in the Hamiltonians. 

In the following we will elaborate these statements by showing other distinct phases including the well known short-ranged ferromagnetic phase, a non-topological phase and other topological phases between this non-topological phase and the toric code phase. These phases are exhibited for the case when the algebra is $\mathbb{C}(\mathbb{Z}_2)$. Generalization to other algebras is straightforward and so we do not show them here.

\subsubsection{The Ferromagnetic Phase}

This phase can be obtained from the transfer matrix in the following way. The fully dynamical transfer matrix is given by 
\begin{equation}
 U = \prod_v\left(\alpha~A_v + \alpha^{\perp}~A_v^{\perp}\right)~\prod_p\left(\beta~B_p + \beta^{\perp}~B_p^{\perp}\right),
 \end{equation}
where $\alpha$, $\alpha^{\perp}$, $\beta$ and $\beta^{\perp}$ are the parameters used to color the timelike links and the spacelike plaquettes. For example in the case of the involutory Hopf algebra being the group algebra of $\mathbb{Z}_2$ we have the spacelike plaquette and timelike link colors given by
\begin{equation} 
z_S = \beta~\phi_0 + \beta^{\perp}~\phi_1
\end{equation}
and
\begin{equation}
z^*_T = \frac{(\alpha+\alpha^{\perp})}{2}~\psi^0 + \frac{(\alpha-\alpha^{\perp})}{2}~\psi^1.
\end{equation}

By choosing the color of the spacelike plaquettes to be $z_S = \phi_0 + \phi_1$ we remove the effect of all plaquette operators as they become the identity operator with this choice. We then choose $z_T^* = \frac{e^k+\frac{1}{e^k}}{2}~\psi^0 + \frac{e^k-\frac{1}{e^k}}{2}~\psi^1$ as the color for the timelike links to obtain the Hamiltonian as a sum of the vertex operators given by 
\begin{equation}
A_v - A_v^{\perp} = k~ \left[\sigma^x\otimes \sigma^x\otimes \sigma^x\otimes \sigma^x\right]
 \end{equation}
which makes the Hamiltonian 
\begin{equation} 
H = k\sum_v ~ \sigma^x\otimes \sigma^x\otimes \sigma^x\otimes \sigma^x
 \end{equation}
where the Pauli matrices act on the links surrounding a vertex $v$. This Hamiltonian has a product state as its ground state and thus describes a short-ranged entangled phase. 

It is easy to see that by making the choices $z_S = e^k~\phi_0 -\frac{1}{e^k}~\phi_1$ and $z_T^* =  \psi^0$ we obtain the Hamiltonian
\begin{equation} 
H = k~\sum_p ~ \sigma^z\otimes \sigma^z\otimes \sigma^z\otimes \sigma^z
 \end{equation} 
where the Pauli matrices now act along the edges of a plaquette $p$. This Hamiltonian again as a product state as its ground state and is short-ranged entangled.

\subsubsection{The Non-Topological Phase}

This phase is obtained by choosing $z_S = \phi_0 - \phi_1$ and $z_T^* = \psi^1$. This makes the transfer matrix
\begin{equation}
 U = \prod_v~\left(A_v + i^2~A_v^{\perp}\right)\prod_p~\left(B_p + i^2~B_p^{\perp}\right).
 \end{equation}

The Hamiltonian is made up of operators which act on six edges of a lattice. The action of this operator is shown in the figure \ref{squarelattice}.
\begin{figure}[h!]
	\begin{center}
		\includegraphics[scale=1]{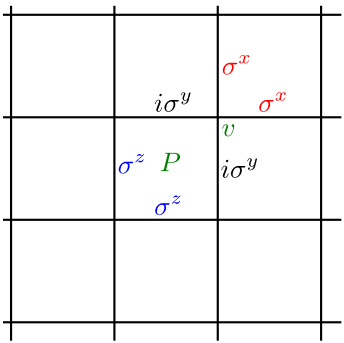}
		\caption{The figure shows the action of the operator $O_p$.}
	\label{squarelattice}
	\end{center}
\end{figure}
\begin{equation}
 H = \sum_p~O_p 
 \end{equation}
where
\begin{equation}
 O_p = \textrm{ln}\left[\left(A_v + i^2~A_v^{\perp}\right)\left(B_p + i^2~B_p^{\perp}\right)\right]
  \end{equation}
This can be simplified to 
\begin{equation}
 O_p = i\pi\left[A_vB_p^{\perp} + A_v^{\perp}B_p\right]
  \end{equation}
which in terms of Pauli matrices is given by
\begin{equation}
\label{AB}
 O_p = i\pi\left[ \mathbf{1}\otimes\mathbf{1}\otimes\mathbf{1}\otimes\mathbf{1}\otimes\mathbf{1}\otimes\mathbf{1} - \sigma^z\otimes\sigma^z\otimes\sigma^y\otimes\sigma^y\otimes\sigma^x\otimes\sigma^x\right]. 
 \end{equation}


The ground states of this model can be obtained by noting that $O_p$ can be thought of as the product of the vertex and plaquette operators in the toric code.
 \begin{equation}
 \label{AB1}
  O_p = \prod_{i\in\partial p}\otimes_i\sigma^z_i~\prod_{j\in v} \otimes_j\sigma^x_j\equiv \prod_p\tilde{B_p}\prod_v\tilde{A_v}.
  \end{equation}
 Note that we have omitted the identity term from Eq.(\ref{AB}) as this just adds a constant term to the Hamiltonian and thus constitutes a constant energy shift. 
 
 Given this form of the operator $O_p$ we can now write down the ground state conditions which include the ones occurring for the toric code model as well.
 The ground state condition is given by
 \begin{equation} 
 O_p|G\rangle  = |G\rangle.
 \end{equation}
 This can be satisfied in two ways for the operator $O_p$ given by Eq.(\ref{AB1}) which are
 \begin{eqnarray} 
 \tilde{B_p}|G\rangle & = & \tilde{A_v}|G\rangle = |G\rangle \\
          \tilde{B_p}|G\rangle & = & \tilde{A_v}|G\rangle = -|G\rangle.
           \end{eqnarray}
The first condition implies that the ground states of this system contains the ground states of the toric code. 
The second condition implies that some of the dyonic excitations in the toric code lie in the ground state of this phase and thus are no longer excitations for this system. This increases the degeneracy of the ground state showing us that this phase is indeed different from that of the toric code. We shall henceforth call this phase the $AB$ phase. In particular we know that the ground states of these models cannot be transformed to a product state using local unitary transformations as the ground states include the ground states of the toric code which are not product states. This implies we are in a quantum phase different from the toric code. However the stability of these phases to perturbations require systematic study similar to ones done for the toric code~\cite{cc}.  

Note that dyonic excitations in the toric code have an orientation. That is for a given plaquette on the square lattice there can be four different dyons depending on which of the four adjacent vertex operators is excited along with this plaquette operator. In the way we have defined the $AB$ phase model the dyons that condense into the ground state is clear from figure \ref{squarelattice}. 

Since the number of dyonic excitations depend on the number of plaquettes the degeneracy picks up a factor depending on this number. More precisely we can write the degeneracy as $2^{2g}f(N_p)$ where $g$ is the genus of the surface. 
It is easy to compute $f(N_p)$. Based on a counting argument we find this as $f(N_p)= 2^{N_p}$. 



As the system is sensitive to the details of the two dimensional graph it is defined on, the $AB$ phase is not a topological one like the toric code. However the degeneracy factors into a product of the degeneracy of the toric code, which is a topological invariant, and a factor depending on the size of the lattice. Such theories are called quasi-topological theories.

We can define models which belong to phases different from the $AB$ phase but are still non-topological in the same way the $AB$ phase is non-topological. An example of this model is got by combining a plaquette operator with the four vertex operators surrounding this plaquette. This model is got from coloring the timelike links with $z_T^* = \psi^0$ and the spacelike plaquettes with $z_S = \phi_0 - \phi_1$. This results in the transfer matrix given by
$$ U = \prod_v~\left(A_v + A_v^{\perp}\right)\prod_p~\left(B_p - B_p^{\perp}\right). $$

Each vertex operator is shared by four plaquette operators. This results in the following Hamiltonian
$$ H = \sum_p~Q_p $$
where 
$$ Q_p = \textrm{ln}\left[\left(A_{v_1} + i^2~A_{v_1}^{\perp}\right)\left(A_{v_2} + i^2~A_{v_2}^{\perp}\right)\left(A_{v_3} + i^2~A_{v_3}^{\perp}\right)\left(A_{v_4} + i^2~A_{v_4}^{\perp}\right)\left(B_p + i^2~B_p^{\perp}\right)\right] $$
which gives 
$$ Q_p = i\frac{\pi}{2}\bigotimes_{\partial_p}\sigma^y\bigotimes_{\textrm{legs}\in p}\sigma^x.$$
We have omitted the constant identity term. 

The ground states of this Hamiltonian can be computed as in the previous case. It comprises of the toric code ground states along with all the dyons in the toric code which now condense to the ground state sector in this model. This clearly increases the degeneracy of the ground states.  This phase is again non topological as the degeneracy depends on the number of plaquettes. It is however quasi-topological as the degeneracy is still a product of the degeneracy of the toric code times a factor depending on just the size of the lattice that is the number of plaquettes. It is given by $2^{2g}\times 2^{4 N_p}$. The factor 4 implies that there are no dyonic excitations in this model. 

Note that we have defined this model on the square lattice. If we try writing this model on other lattices, say the hexagonal lattice, we will end up with unpaired vertex operators. This leads to additional terms in the Hamiltonian. However this does not change the $AB$ phase. On a triangular lattice we find unpaired plaquette operators. 

\subsubsection{Other Topological Phases}

We can mix the $AB$ phase with the toric code phase to find phases with degeneracies which do not depend on the lattice size. Let us consider an example of this case arising by just pairing a single plaquette with a vertex operator. This can be got from the following transfer matrix
$$ T = \prod_{p, p\neq p_0}\tilde{B}_p\prod_{v, v\neq v_0} \tilde{A}_v\times \left(B_{p_0}+i^2B^{\perp}_{p_0}\right)\left(A_{v_0}+i^2A^{\perp}_{v_0}\right) $$
where $p_0$ and $v_0$ are the plaquette and the vertex whose operators get paired. The colors for obtaining these operators are the ones which are used to obtain the $AB$ phase discussed earlier. The colors are for the spacelike plaquette $p_0$ and the timelike link $v_0$. $\tilde{B}_p = \beta B_p + \beta^{\perp} B_p^{\perp}$ and $\tilde{A}_v = \alpha A_v + \alpha^{\perp} A_v^{\perp}$ are obtained by coloring the spacelike plaquettes with $z_S = \beta\phi_0 + \beta^{\perp}\phi_1$ and the timelike links with $z_T^* = \left(\frac{\alpha + \alpha^{\perp}}{2}\right)\psi^0 + \left(\frac{\alpha - \alpha^{\perp}}{2}\right)\psi^1$. The Hamiltonian obtained from this transfer matrix is given by 
$$ H = \textrm{ln}(\alpha)\sum_{v, v\neq v_0}A_v + \textrm{ln}(\alpha^{\perp})\sum_{v, v\neq v_0}A_v^{\perp} + \textrm{ln}(\beta)\sum_{p, p\neq p_0}B_p + \textrm{ln}(\beta^{\perp})\sum_{p, p\neq p_0}B_p^{\perp} + i\pi\left[A_{v_0}B_{p_0}^{\perp} + A_{v_0}^{\perp}B_{p_0}\right] .$$

The ground states of this model includes the toric code ground states. It also includes the state with a single dyonic excitation in the pair $(p_0, v_0$. Thus the degeneracy of ground state sector is now $2^{2g}\times 2= 2^{2g+1}$. Thus the degeneracy is independent of the lattice size unlike the $AB$ phase, hence it is topological. It is clear that if we pair more plaquettes and vertices we keep increasing the ground state degeneracy, tending towards the non-topological phase of the $AB$ model.

\section{Discussion}
\label{sec-discussion}
~

Toric codes and their generalizations, quantum double Hamiltonians, have been presented using a spacetime picture.  In order to to achieve our goal, we have used the same data and notation used in the definition of the three dimensional Kuperberg invariant. Though we used a square lattice throughout this paper, the construction can be carried out for a three dimensional lattice which is a product of a two dimensional lattice, $\Sigma$ and the unit interval for any $\Sigma$ with an arbitrary degree for the vertex. The transfer matrices built on these manifolds were split into a product of plaquette and vertex operators which were precisely those given by the plaquette, $B_p$ and star, $A_s$ operators of the undeformed quantum double Hamiltonian~\cite{Aguado}. The corresponding Hamiltonians of these models were obtained from the transfer matrices by taking the logarithm.

The formalism builds a transfer matrix which includes a large part of the parameter space of a lattice gauge theory based on involutory Hopf algebras. Thus we can expect to find phases other than the ones respresented by the quantum double Hamiltonians. This has been illustrated by deriving short-ranged enltangled ferromagnetic phases and the quasi-topological $AB$ phase. This phase also presents a way of condensing excitations of the quantum double Hamiltonian \cite{Bombin}. We plan to explore these phases in future papers.
 
 This formalism can be thought of as an extension of earlier works~\cite{kuperberg} to the quantum case. We believe these methods can be exploited to obtain a variety of one dimensional Hamiltonian models as well. They could also be used to construct edge modes in these models which have been of interest~\cite{KitMaj} in the context of experimentally observing Majorana modes~\cite{Majorana}. This requires introduction of matter fields living in the vertices in the form of modules. This will significantly change the Kuperberg invariant. We have done this extension for several dimensions and these works are under preparation. 
 
 Our formalism provides a convenient way to make computations in Hopf algebras using the slides of red and blue curves as explained in the appendix. It should be noted that this is in contrast with computations using the Sweedler notation~\cite{Sweedler} which can turn out to be quite cumbersome in these situations. A distinct advantage of this method was seen in the computation of the ground states of the unperturbed quantum double models. We wrote down two ground states for these models. One of these, $|\psi_p\rangle$, was obtained in~\cite{Aguado} and the other one, $|\psi_s\rangle$, was obtained in~\cite{AG2}. Here we further explained that state by breaking it down in the language of spins living on the edges of the two dimensional lattice where our systems live. The other ground state was shown to be different from the first one on a surface with non-trivial topology. The explanation for these in terms of the loop representations for the ground states was given for the Abelian case.
 
 The local language we used in this paper is natural for interpreting the states obtained as tensor network representations, more precisely these are the projected entangled pair states(PEPS)~\cite{PEPS1, PEPS2}. The local tensors can be immediately identified in the way we have presented the ground states. These states have bond dimension equal to the dimension of the group algebra considered. We would like to emphasize that these states are exact and involves no approximation in finding the local tensors as done in the traditional methods~\cite{PEPS1, PEPS2}. We however still need to use renormalization group methods to work with these states for computing mean values of physical observables and thus phase diagrams~\cite{fvRG, Cody, LevinRG}. We present such an analysis diagrammatically in a future work.
 
 This setup is convenient to study perturbations to these models in a systematic way. This has been completely analyzed in an accompanying paper~\cite{p2}, where we obtain the deformations to generalized quantum double models. The exact tensor tensor network representations for some of these ground states have also been shown in~\cite{p2}. In~\cite{p2} we essentially study a bigger part of the parameter space. This is done by also ''non-trivially" coloring timelike plaquettes and spacelike links other than coloring just the timelike links and the spacelike plaquettes.  
 
 We have not provided a complete description of the excited states in these models. The excitations for the Abelian toric codes are known to be got using string operators and for non-Abelian case this is done using ribbon operators~\cite{Bombin}. We will present a full picture of these operators in the formalism introduced here in a forthcoming paper.    

\section{Acknowledgements}
~

The authors would like to thank FAPESP for support of this work. We thank the referees of the Journal of Physics A for their valuable suggestions and constructive criticisms which lead to further improvement of the paper. We also thank our collaborators Juan Pablo Ibieta Jimenez and Hudson Kazuo Teramoto Mendonca for valuable discussions. 

\appendix
\section{Hopf Algebras}
\label{ap-hopfalgebras}
~

In this section we will briefly define and discuss about some identities on Hopf algebras, which are used in this paper. This section is not self contained and most of the proofs can be found in \cite{hopfbook}.  

\subsection{Associative Algebras with Unit}
~

Consider $\mathcal{A}$ a vector space over the field $\mathbb{C}$ whose basis is $\{\varphi_i\}_{i=1}^{n}$ and consider also  $\mathcal{A}^{*}$ its dual vector space, whose basis is $\{ \varphi^i\}_{i=1}^{n}$ such that $\varphi^i \left( \varphi_j\right) = \delta_j^i$. 

We call the vector space $\mathcal{A}$ an algebra if there are two linear maps
$${\it m}:\mathcal{A}\otimes \mathcal{A} \rightarrow \mathcal{A} \;\;\;\; \hbox{and} \;\;\;\; \eta:\mathbb{C}\rightarrow \mathcal{A}$$
such that ${\it m}$ is an associative map and $\eta(1)$ is the unit map of the algebra. The maps ${\it m}$ and $\eta$ are called the multiplication map and the unit map, respectively, and they are defined by the structure constants $m_{ij}^k$ and $\eta^l$ in the following way
\begin{eqnarray}
m &:& \varphi_i \otimes \varphi_j \mapsto \varphi_i \varphi_j=m_{ij}^k \varphi_k ,\nonumber \\
\eta &:& 1 \mapsto \eta(1)= \eta^l \varphi_l. \nonumber
\end{eqnarray}

The structure constants completely define the multiplication map and the unit map, and they are such that the multiplication map is an associative map and $\eta(1)$ is the unit of the algebra.
$$\left(\varphi_a \varphi_b \right)\varphi_c = \varphi_a \left(\varphi_b \varphi_c \right) ,$$
in terms of structure constants it means that
\begin{equation}
m_{ab}^i m_{ic}^l = m_{bc}^i m_{ai}^l .
\label{condicaotensormultiplicacao}
\end{equation}

The fact that $\eta(1)$ is an unit map means that
$$ \eta(1)\varphi_i = \varphi_i \eta(1) = \varphi_i ,\;\; \hspace{10px} \forall i,$$
which give us the following

\begin{equation}
\label{condicaotensorunidade}
\eta^l m_{li}^k = \delta_i^k = \eta^l m_{il}^k.
\end{equation}

All these relations can be presented by Kuperberg diagrams, as defined before. In terms of these diagrams the condition (\ref{condicaotensormultiplicacao}) is the one shown in figure \ref{condicaotensormultiplicacao} and the condition (\ref{condicaotensorunidade}) the one shown in figure \ref{condicaotensorunidade}. 

\begin{figure}[h!]
	\begin{center}
		\includegraphics[scale=1]{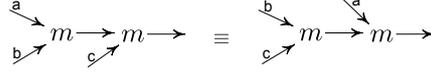}
		\caption{Associativity condition for the multiplication map.}
		\label{condicaotensormultiplicacao}
	\end{center}
\end{figure}

\begin{figure}[h!]
	\begin{center}
		\includegraphics[scale=1]{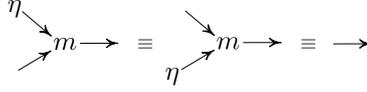}
		\caption{Existence condition of the unit $\eta(1)$ of the algebra.}
	\label{condicaotensorunidade}
	\end{center}
\end{figure}

If the diagrams shown in figures \ref{condicaotensormultiplicacao} and \ref{condicaotensorunidade} are satisfied we say that the trio $\langle \mathcal{A}, {\it m}, \eta \rangle$ define an associative algebra with unit. 

\subsection{Co-associative Co-algebras with Co-unit}
~

We can define in the dual vector space $\mathcal{A}^{*}$ an algebra structure just like we have done in $\mathcal{A}$. So let us define a multiplication map and an unit in the dual space. Alternatively, we can define a co-algebra in $\mathcal{A}$ which has a co-unit. For that consider the trio $\langle \mathcal{A}^{*} , \Delta , \epsilon\rangle$ where $\Delta$ is a linear multiplication map in $\mathcal{A}^{*}$ and $\epsilon$ is its unit, in other words
$$ \Delta: \mathcal{A}^{*} \otimes \mathcal{A}^{*} \rightarrow \mathcal{A}^{*} \;\;\hbox{such that}\;\;\Delta\left( \varphi^i \otimes \varphi^j\right) = \varphi^i \varphi^j =\Delta_k^{ij} \varphi^k, $$
and
$$\epsilon: \mathbb{C} \rightarrow \mathcal{A}^{*} \;\; \hbox{ such that }\;\; \epsilon(1) \;\; \hbox{ is the unit of}\;\;\; \Delta .$$

Since these maps define an algebra, so certainly they satisfy the conditions shown in figure \ref{condicoescoalgebras}. 

\begin{figure}[htb!]
\centering
\subfigure[]{
\includegraphics[scale=1]{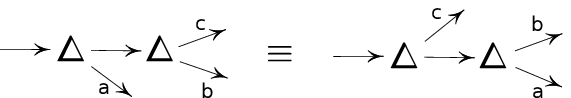} \label{condicoescoalgebras-a}
}
\hspace{2cm}
\subfigure[]{
\includegraphics[scale=1]{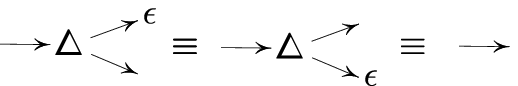} \label{condicoescoalgebras-b}
}
\caption{Condition for the trio $\langle \mathcal{A}^{*},\Delta, \epsilon \rangle$ to be an algebra.}
\label{condicoescoalgebras}
\end{figure}

Once we define an algebra structure in $\mathcal{A}^{*}$ we can think of the maps $\Delta$ and $\epsilon$ as being a co-multiplication and a co-unit in $\mathcal{A}$, respectively. So let the map $\Delta: \mathcal{A} \rightarrow \mathcal{A} \otimes \mathcal{A}$ be defined as
$$\Delta \left( \varphi_i\right) = \Delta_i^{jk} \left( \varphi_j \otimes \varphi_k \right),$$
and let the map $\epsilon: \mathcal{A} \rightarrow \mathbb{C}$ be defined as
$$ \epsilon\left(\varphi_i \right)=\epsilon^i $$
such that the trio $\langle \mathcal{A}^{*}, \Delta, \epsilon \rangle$ define an algebra in $\mathcal{A}^{*}$. We say that the quintet $\langle \mathcal{A}, {\it m},e,\Delta, \epsilon \rangle$  define an algebra and co-algebra in $\mathcal{A}$. As we are going to see later, sometimes there is certain compatibility between the algebra structure and the co-algebra structure, when it happens we say that the quintet defines a bi-algebra.

\subsection{Bi-algebras}
~

As we said before, the quintet $\langle \mathcal{A}, {\it m},\eta,\Delta, \epsilon \rangle$ define an algebra and co-algebra in $\mathcal{A}$ only if the conditions in figure \ref{condicaotensormultiplicacao}, \ref{condicaotensorunidade} and \ref{condicoescoalgebras} are satisfied, but when there is a special kind of compatibility between them we call this a bi-algebra. This compatibility happens when the co-multiplication map and the co-unit map are a homomorphism of the algebra, or in other words, when the following conditions hold 
\begin{eqnarray}
\Delta \left(\varphi_a \varphi_b\right) &=& \Delta \left( \varphi_a\right) \Delta \left(\varphi_b \right), \nonumber \\
\epsilon \left (\varphi_a \varphi_b \right) &=& \epsilon \left ( \varphi_a\right)\epsilon \left ( \varphi_b\right) .\nonumber
\end{eqnarray}

The Kuperberg diagrams for the condition above is the one shown in figure \ref{condicaobialgebra3}. 

\begin{figure}[htb!]
\centering
\subfigure[]{
\includegraphics[scale=1]{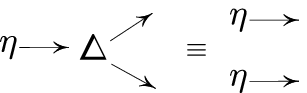} \label{condicaobialgebra3-a}
}
\hspace{1.5cm}
\subfigure[]{
\includegraphics[scale=1]{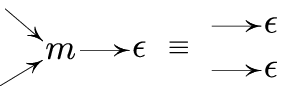} \label{condicaobialgebra3-b}
}
\hspace{1.5cm}
\subfigure[]{
\includegraphics[scale=1]{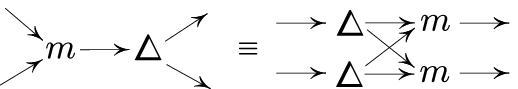} \label{condicaobialgebra3-c}
}
\caption{Bi-algebra condition.}
\label{condicaobialgebra3}
\end{figure}

Thus, the relations shown in figure \ref{condicaobialgebra3} have to be satisfied. Once we define the bi-algebra structure we are ready to define the Hopf algebra structure, but for this purpose  we still need one more ingredient which is an endomorphism $S$.

\subsection{Hopf Algebras}
~

Consider the quintet $\langle \mathcal{A}, {\it m},\eta,\Delta, \epsilon \rangle$ which is a bi-algebra and also consider an endomorphism $S$ of $\mathcal{A}$ which is called the antipode. If $S$ satisfies the antipode axiom, shown in figure\ref{condicaoantipoda}, we say that the sextet $\langle \mathcal{A}, {\it m},\eta,\Delta, \epsilon,S\rangle$ defines a Hopf algebra.
\begin{figure}[h!]
	\begin{center}
		\includegraphics[scale=1]{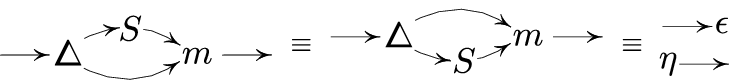}
		\caption{Antipode axiom.}
	\label{condicaoantipoda}
	\end{center}
\end{figure}

Briefly speaking, we can say that the sextet $\langle \mathcal{A}, {\it m},\eta,\Delta, \epsilon,S\rangle$ defines a Hopf algebra if, and only if, all the axioms in figure \ref{todosaxiomas} hold.

\begin{figure}[htb!]
\centering
\subfigure[]{
\includegraphics[scale=1]{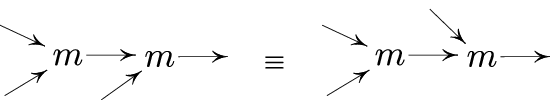} \label{todosaxiomas-a}
}
\hspace{3cm}
\subfigure[]{
\includegraphics[scale=1]{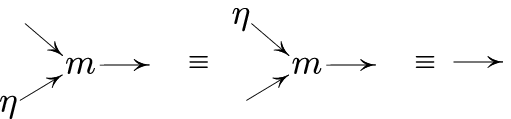} \label{todosaxiomas-b}
}
\\
\vspace{.5cm}
\subfigure[]{
\includegraphics[scale=1]{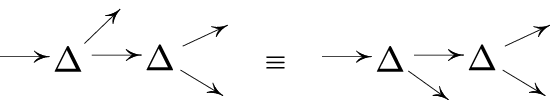} \label{todosaxiomas-c}
}
\hspace{3cm}
\subfigure[]{
\includegraphics[scale=1]{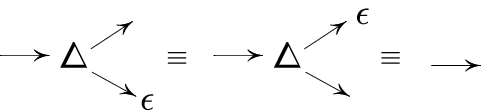} \label{todosaxiomas-d}
}
\\
\vspace{.5cm}
\subfigure[]{
\includegraphics[scale=1]{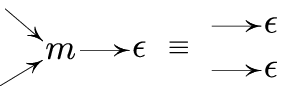} \label{todosaxiomas-e}
}
\hspace{1cm}
\subfigure[]{
\includegraphics[scale=1]{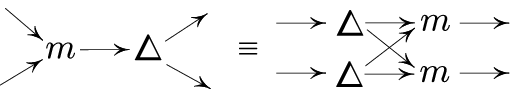} \label{todosaxiomas-f}
}
\hspace{1cm}
\subfigure[]{
\includegraphics[scale=1]{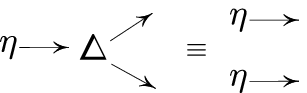} \label{todosaxiomas-g}
}
\\
\vspace{.5cm}
\subfigure[]{
\includegraphics[scale=1]{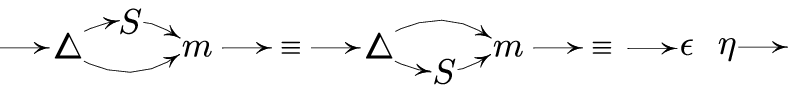} \label{todosaxiomas-h}
}
\caption{Hopf algebra axioms.}
\label{todosaxiomas}
\end{figure}

\subsection{Identities and Properties of Hopf Algebras}
~

Following the same steps shown in \cite{kuperberg} in this section we are going to discuss some of the Hopf algebra identities and properties. In this paper we are considering only involutory Hopf algebras, in other words, Hopf algebras such that $S^2=\mathbb{1}$. All of the results showed below can be found in \cite{hopfbook}.
\begin{prop}\label{prop-trace}
The trace of an element in the regular representation is equal to the tensor $m_{ak}^k$ (figure \ref{proptraco-a}), likewise the co-trace of an element in the dual algebra is equal to the tensor $\Delta_k^{ka}$ (figure \ref{proptraco-b}), in other words
\begin{figure}[htb!]
\centering
\subfigure[]{
\includegraphics[scale=1]{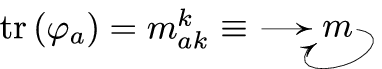} \label{proptraco-a}
}
\hspace{3cm}
\subfigure[]{
\includegraphics[scale=1]{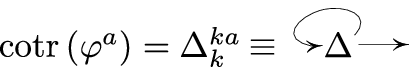} \label{proptraco-b}
}
\caption{Trace and Cotrace in the regular representation.}
\label{proptraco}
\end{figure}
\end{prop}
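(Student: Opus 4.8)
The plan is to prove both statements by a direct computation from the definition of the regular representation, with the second claim following from the first by the algebra/co-algebra duality already set up earlier in this appendix. First I would recall that the regular representation of $\mathcal{A}$ is its action on itself by left multiplication. For a basis element $\phi_a$ the associated operator $L_a$ sends $\phi_j\mapsto\phi_a\phi_j=m_{aj}^k\,\phi_k$, so in the basis $\{\phi_k\}$ its matrix entries are precisely the structure constants $m_{aj}^k$, with $j$ labelling the input index and $k$ the output index.

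Second, taking the trace amounts to identifying the input and output indices and summing over them, which gives $\textrm{tr}(L_a)=m_{ak}^k$. Diagrammatically this is exactly the contraction displayed in figure \ref{proptraco-a}: the second input leg of the multiplication tensor is bent around and glued to the output leg, leaving the single free covariant index $a$. This establishes the first claim essentially by unwinding definitions.

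Third, for the co-trace I would run the identical argument in the dual algebra. As established earlier in this appendix, $\mathcal{A}^{*}$ carries an algebra structure whose multiplication is encoded by the co-multiplication structure constants $\Delta_k^{ij}$ via $\varphi^i\varphi^j=\Delta_k^{ij}\varphi^k$. Applying the regular-representation computation to $\varphi^a\in\mathcal{A}^{*}$ and then tracing yields $\textrm{cotr}(\varphi^a)=\Delta_k^{ka}$, which is precisely the diagram of figure \ref{proptraco-b}. No new ideas are needed here beyond the formal dualization $m\leftrightarrow\Delta$, $\eta\leftrightarrow\epsilon$.

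The computation is almost entirely definitional, so the only point requiring genuine care — and hence the main obstacle — is the bookkeeping of index ordering and orientation conventions. One must check that the choice of left versus right multiplication and the clockwise/counter-clockwise enumeration of legs fixed in the Kuperberg notation are used consistently, so that the contracted diagrams reproduce figures \ref{proptraco-a} and \ref{proptraco-b} exactly, rather than a cyclically permuted or transposed version of them. For the co-trace in particular one should verify that the stated index pattern $\Delta_k^{ka}$ (as opposed to $\Delta_k^{ak}$) is the one dictated by these conventions.
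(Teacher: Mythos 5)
Your proof is correct, and it is essentially the argument the paper itself intends: the paper's published ``proof'' merely cites its Hopf algebra reference, but the computation it relies on is exactly your definitional one (matrix elements of left multiplication in the regular representation are the structure constants $m_{aj}^{k}$, so the trace is the contraction $m_{ak}^{k}$, with the co-trace $\Delta_k^{ka}$ obtained by the formal dualization $m\leftrightarrow\Delta$). Your closing remark about checking the index-ordering and orientation conventions of the Kuperberg notation is the right point of care, and nothing further is missing.
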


\begin{proof}
This can be found in \cite{hopfbook}.
\end{proof}

\begin{prop} \label{prop-antihomomorphism}
In a Hopf algebra the antipode is an anti-homomorphism of the (co-)algebra, as shown in figure \ref{identidadeshopfantipoda}.
\begin{figure}[htb!]
\centering
\subfigure[The antipode is an anti-homomorphism of the algebra.]{
\includegraphics[scale=1]{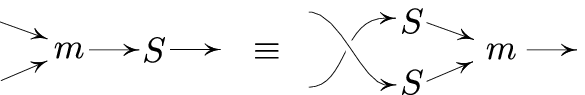} \label{identidadeshopfantipoda-a}
}
\hspace{3cm}
\subfigure[The antipode is an anti-homomorphism of the co-algebra.]{
\includegraphics[scale=1]{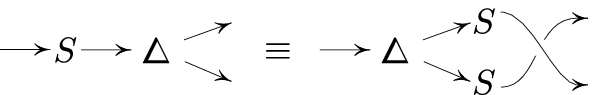} \label{identidadeshopfantipoda-b}
}
\caption{The antipode is an anti-homomorphism of the (co-)algebra.}
\label{identidadeshopfantipoda}
\end{figure}

\end{prop}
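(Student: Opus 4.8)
The plan is to use the standard fact that the antipode is the two-sided inverse of the identity map $\mathbf{1}_{\mathcal{A}}$ under the \emph{convolution product}, together with uniqueness of inverses in a monoid. Recall that on $\mathrm{Hom}(C,A)$, with $C$ a coalgebra and $A$ an algebra, the convolution of $f,g$ is $f*g = m_A\circ(f\otimes g)\circ\Delta_C$, with unit $\eta_A\circ\epsilon_C$. The antipode axiom (figure \ref{condicaoantipoda}) is precisely the statement $S*\mathbf{1}_{\mathcal{A}} = \eta\epsilon = \mathbf{1}_{\mathcal{A}}*S$ in $\mathrm{Hom}(\mathcal{A},\mathcal{A})$.

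First I would prove the algebra anti-homomorphism property (figure \ref{identidadeshopfantipoda-a}), namely $S\circ m = m\circ(S\otimes S)\circ\tau$ where $\tau$ is the flip $a\otimes b\mapsto b\otimes a$. I work in the convolution monoid $\mathrm{Hom}(\mathcal{A}\otimes\mathcal{A},\mathcal{A})$ built from the tensor-product coalgebra structure on $\mathcal{A}\otimes\mathcal{A}$ and the algebra structure on $\mathcal{A}$. Introduce three maps: the multiplication $m$ itself, $N=S\circ m$ (so $a\otimes b\mapsto S(ab)$), and $P=m\circ(S\otimes S)\circ\tau$ (so $a\otimes b\mapsto S(b)S(a)$). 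The goal is $N=P$, and the mechanism is to show that $N$ is a \emph{left} convolution inverse of $m$ while $P$ is a \emph{right} convolution inverse, whence uniqueness forces $N = N*(m*P) = (N*m)*P = P$.

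The two verifications use different axioms. For $N*m=\eta\epsilon$ I invoke the bialgebra conditions of figure \ref{condicaobialgebra3} (that $\Delta$ and $\epsilon$ are algebra homomorphisms): expanding the convolution sends $a\otimes b$ to $\sum S\big((ab)_{(1)}\big)(ab)_{(2)} = \epsilon(ab)\eta = \epsilon(a)\epsilon(b)\eta$ by the antipode axiom. For $m*P=\eta\epsilon$ I apply the antipode axiom twice: the convolution sends $a\otimes b$ to $\sum a_{(1)}b_{(1)}S(b_{(2)})S(a_{(2)})$, and collapsing the inner $b$-pair via $\sum b_{(1)}S(b_{(2)})=\epsilon(b)\eta$ and then the $a$-pair yields $\epsilon(a)\epsilon(b)\eta$. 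In Kuperberg's notation each identity is a short chain of slide moves that pull a comultiplication-multiplication-antipode configuration into the antipode loop of figure \ref{condicaoantipoda}, and I would present them diagrammatically.

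Finally, the co-algebra anti-homomorphism (figure \ref{identidadeshopfantipoda-b}), $\Delta\circ S=\tau\circ(S\otimes S)\circ\Delta$, is the formal dual; I would obtain it by running the identical argument in the dual Hopf algebra $\mathcal{A}^*$, equivalently by reflecting every Kuperberg diagram so that $m\leftrightarrow\Delta$ and $\eta\leftrightarrow\epsilon$. The difficulty here is purely bookkeeping: one must ensure the convolution on $\mathcal{A}\otimes\mathcal{A}$ uses the correct tensor-coalgebra comultiplication (this is exactly where bialgebra compatibility enters) and that the flip $\tau$ is placed on the correct side, since it is the reversal encoded by $\tau$ that converts the homomorphism property into an anti-homomorphism property.
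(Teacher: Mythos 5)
Your proposal is correct, and it supplies more than the paper does: the paper's entire ``proof'' of Proposition \ref{prop-antihomomorphism} is the single line ``This can be found in \cite{hopfbook}'', with no argument given. Your convolution-monoid argument is the standard one found in precisely such references, and the details check out: $N=S\circ m$ is a left convolution inverse of $m$ in $\mathrm{Hom}(\mathcal{A}\otimes\mathcal{A},\mathcal{A})$ (this uses that $\Delta$ and $\epsilon$ are algebra maps, i.e.\ the bialgebra axiom of figure \ref{condicaobialgebra3}, plus the antipode axiom), $P=m\circ(S\otimes S)\circ\tau$ is a right convolution inverse (antipode axiom applied twice, collapsing first the inner $b$-pair and then the $a$-pair), and uniqueness of two-sided inverses in a monoid forces $N=P$. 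The dual statement follows as you say; note only that the route through the dual Hopf algebra $\mathcal{A}^*$ implicitly assumes finite-dimensionality, which is harmless here since the paper works throughout with finite-dimensional involutory Hopf algebras (it uses integrals, traces and $\textrm{dim}(\mathcal{A})=n$), and your alternative of reflecting every Kuperberg diagram so that $m\leftrightarrow\Delta$ and $\eta\leftrightarrow\epsilon$ avoids even that assumption. The only difference in flavor is that the paper's ethos is purely diagrammatic (slide and two-point moves in Kuperberg notation), whereas your argument is Sweedler-notation algebra with the diagrammatic translation only sketched; either presentation would serve as a self-contained replacement for the proof the paper delegates to the literature.
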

\begin{proof}
This can be found in \cite{hopfbook}.
\end{proof}

\begin{prop}\label{prop-unit-antipode}
In a Hopf algebra the unit and the co-unit are invariants under antipode action, as shown in figure \ref{identidadeshopfunidadecounidadeantipoda}.

\begin{figure}[h!]
\centering
\subfigure[]{
\includegraphics[scale=1]{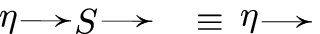} \label{identidadeshopfunidadecounidadeantipoda-a}
}
\hspace{3cm}
\subfigure[]{
\includegraphics[scale=1]{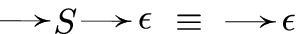} \label{identidadeshopfunidadecounidadeantipoda-b}
}
\caption{The unit and co-unit are invariants under antipode action.}
\label{identidadeshopfunidadecounidadeantipoda}
\end{figure}
\end{prop}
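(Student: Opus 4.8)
The plan is to establish the two claims of the proposition — $S(\eta)=\eta$ and $\epsilon\circ S=\epsilon$ — separately, deriving the first from the antipode axiom together with the fact that the unit is grouplike, and then obtaining the second from the algebra/co-algebra duality that underlies the whole diagrammatic formalism (so that the second statement is literally the mirror image of the first).

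First I would record the two bialgebra facts that the unit satisfies $\Delta(\eta)=\eta\otimes\eta$ and $\epsilon(\eta)=1$; these follow from the requirement in figure \ref{condicaobialgebra3} that $\Delta$ and $\epsilon$ be unital homomorphisms of the algebra. With these in hand I feed the unit into one of the antipode axioms of figure \ref{condicaoantipoda}, namely $m\circ(S\otimes\mathbf{1})\circ\Delta=\eta\circ\epsilon$. Evaluating the left-hand side on $\eta$ and using $\Delta(\eta)=\eta\otimes\eta$ gives $m(S(\eta)\otimes\eta)=S(\eta)$, since multiplication by the unit is trivial; evaluating the right-hand side gives $\eta(\epsilon(\eta))=\eta(1)=\eta$. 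Comparing the two yields $S(\eta)=\eta$, the content of figure \ref{identidadeshopfunidadecounidadeantipoda-a}. Diagrammatically this says that when the unit is plugged into the antipode axiom its co-multiplication splits into two units, the antipode fixes one strand, and the trailing unit is absorbed by the multiplication.

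For the co-unit I would invoke duality: the dual $\mathcal{A}^*$ is again an involutory Hopf algebra whose multiplication is $\Delta$, whose unit is $\epsilon$, and whose antipode is the transpose $S^*$, and all the axioms of figure \ref{todosaxiomas} are invariant under the operation that reverses every arrow and exchanges $m\leftrightarrow\Delta$ and $\eta\leftrightarrow\epsilon$. Applying the result just proved to $\mathcal{A}^*$ gives $S^*(\epsilon)=\epsilon$; unwinding the transpose, $(S^*\epsilon)(\phi)=\epsilon(S(\phi))$, so this reads $\epsilon(S(\phi))=\epsilon(\phi)$ for all $\phi\in\mathcal{A}$, i.e. $\epsilon\circ S=\epsilon$, which is figure \ref{identidadeshopfunidadecounidadeantipoda-b}. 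Equivalently one can run the direct mirror argument: apply $\epsilon$ to the other antipode axiom $m\circ(\mathbf{1}\otimes S)\circ\Delta=\eta\circ\epsilon$ and use that $\epsilon$ is multiplicative (figure \ref{condicaobialgebra3}) together with the co-unit axiom to collapse the left-hand side to $\epsilon\circ S$ and the right-hand side to $\epsilon$.

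The computation is short, so the only point requiring care — the \emph{main obstacle} — is making sure the inputs $\Delta(\eta)=\eta\otimes\eta$ and $\epsilon(\eta)=1$ are genuinely available. The multiplicativity conditions drawn in figure \ref{condicaobialgebra3} only give that $\Delta$ and $\epsilon$ respect products, from which $\Delta(\eta)$ is merely idempotent; one must additionally use that $\Delta$ and $\epsilon$ are \emph{unital}, so that $\Delta(\eta)$ is honestly the unit $\eta\otimes\eta$ of $\mathcal{A}\otimes\mathcal{A}$ and $\epsilon(\eta)=1$. Once grouplikeness of the unit is secured both halves of the proposition are immediate, and the involutivity $S^2=\mathbf{1}$ assumed throughout is not needed here.
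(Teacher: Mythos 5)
Your proof is correct, but the route for the first half differs from the paper's. For $S(\eta)=\eta$ you feed the unit directly into the antipode axiom $m\circ(S\otimes\mathbf{1})\circ\Delta=\eta\circ\epsilon$ and close the computation with the bialgebra unitality facts $\Delta(\eta)=\eta\otimes\eta$, $\epsilon(\eta)=1$; the paper instead derives the statement from Proposition \ref{prop-antihomomorphism}, i.e.\ from the (already nontrivial, book-deferred) fact that $S$ is an anti-homomorphism of the (co-)algebra, carrying out the manipulation in figure \ref{identidadeshopfunidadecounidadeantipodaprova1}. Your version is more elementary and more self-contained: it uses only the defining axioms, whereas the anti-homomorphism property is a derived result (and any argument through it that needs surjectivity of $S$ would quietly lean on $S^2=\mathbf{1}$, which, as you correctly note, is not needed at all). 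What the paper's route buys is brevity inside its own diagrammatic toolkit: once Proposition \ref{prop-antihomomorphism} is accepted, the unit statement is a one-figure manipulation. For the co-unit half the two arguments coincide in substance — the paper's instruction to ``flip all arrows and exchange $m\leftrightarrow\Delta$, $\eta\leftrightarrow\epsilon$'' is exactly your duality argument applied to $\mathcal{A}^*$ — and your backup direct computation (applying $\epsilon$ to the other antipode axiom and using multiplicativity of $\epsilon$ plus the co-unit axiom) is also sound. Your flagged caveat is well taken: the text around figure \ref{condicaobialgebra3} only states that $\Delta$ and $\epsilon$ respect products, so one must indeed read the bialgebra axioms as \emph{unital} homomorphism conditions (as in the full axiom list of figure \ref{todosaxiomas}) for your inputs to be available; with multiplicativity alone, $\Delta(\eta)$ is merely idempotent and the argument would stall.
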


\begin{proof}
The proof for this proposition is quite easy, we just need to use proposition (\ref{prop-antihomomorphism}). In figure \ref{identidadeshopfunidadecounidadeantipodaprova1} we can see how it can be done.

\begin{figure}[h!]
	\begin{center}
		\includegraphics[scale=1]{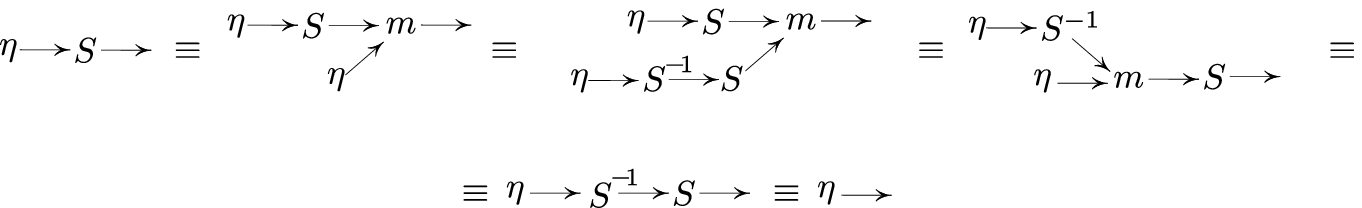}
	\caption{Proof of Propostion 3.}
	\label{identidadeshopfunidadecounidadeantipodaprova1}
	\end{center}
\end{figure}

For the second part of this proof it is enough to just flip all arrows in figure \ref{identidadeshopfunidadecounidadeantipodaprova1} and replace the multiplication tensor by the co-multiplication tensor, and the unit tensor by the co-unit tensor.
\end{proof}

The next results we are going to show are connected either directly or indirectly to the fact that in a Hopf algebra there always exists both an integral and a co-integral (both non-zero).

\begin{definition}\label{defi-integral}
We call a co-integral an element $\lambda \in \mathcal{A}$ of the algebra such that the equality in figure \ref{definicaointegralcointegral-a} holds. Likewise we call an integral an element $\Lambda \in \mathcal{A}^{*}$ in the dual algebra such that the equality in figure \ref{definicaointegralcointegral-b} holds.

\begin{figure}[htb!]
\centering
\subfigure[]{
\includegraphics[scale=1]{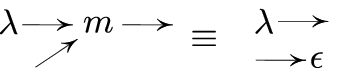} \label{definicaointegralcointegral-a}
}
\hspace{3cm}
\subfigure[]{
\includegraphics[scale=1]{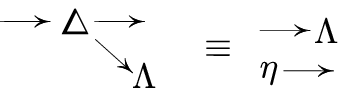} \label{definicaointegralcointegral-b}
}
\caption{In \ref{definicaointegralcointegral-a} the definition of a left co-integral  and in \ref{definicaointegralcointegral-b} the definition of right integral.}	
\label{definicaointegralcointegral}
\end{figure}
\end{definition}

With definition \ref{defi-integral} in mind and the guarantee of existence and uniqueness of (co-)integral in Hopf algebras, provided by \cite{kauffman}, let us consider the following proposition.

\begin{prop} \label{prop-tensor-integral}

In a Hopf algebra the tensor drawn in figure \ref{integraltraco-a} is always an integral of the algebra. But remember we are considering only involutory Hopf algebras, so the figure \ref{integraltraco-a} can be reduced to \ref{integraltraco-b}. Likewise in an involutory Hopf algebra the tensor shown in figure \ref{integraltraco-c} is always a co-integral.

\begin{figure}[h!]
\centering
\subfigure[]{
\includegraphics[scale=1]{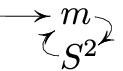} \label{integraltraco-a}
}
\hspace{2cm}
\subfigure[]{
\includegraphics[scale=1]{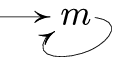} \label{integraltraco-b}
}
\hspace{2cm}
\subfigure[]{
\includegraphics[scale=1]{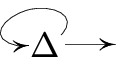} \label{integraltraco-c}
}
\caption{In \ref{integraltraco-a} an integral of the algebra, in \ref{integraltraco-b} the integral of an involutory algebra and in \ref{integraltraco-c} the co-integral of an involutory algebra.}
\label{integraltraco}
\end{figure}

\end{prop}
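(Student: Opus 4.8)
The plan is to show that the tensor of Figure~\ref{integraltraco-a} satisfies the defining equation of a (right) integral displayed in Figure~\ref{definicaointegralcointegral-b}, working entirely in the diagrammatic language and using only the Hopf-algebra axioms of Figure~\ref{todosaxiomas} together with Propositions~\ref{prop-trace}, \ref{prop-antihomomorphism} and \ref{prop-unit-antipode}. First I would read the diagram of Figure~\ref{integraltraco-a} off as an explicit contraction of structure constants: a closed loop, which by Proposition~\ref{prop-trace} is a genuine trace, threaded by one comultiplication and one antipode and carrying the single free leg. Call this candidate $\Lambda$. The goal is then to verify that feeding $\Lambda$ into one output leg of a comultiplication reproduces $\Lambda$ with a counit $\epsilon$ sitting on the remaining leg, which is exactly the integral condition of Figure~\ref{definicaointegralcointegral-b}.

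The core of the argument is a diagram chase. I would introduce the extra strand demanded by the integral axiom and then slide the relevant multiplication around the traced loop; this move is legitimate precisely because the loop is a trace, so the contraction is cyclic. The sliding is continued until the inserted antipode sits next to a comultiplication in the configuration $m(S\otimes\mathrm{id})\Delta$. At that point the antipode axiom of Figure~\ref{condicaoantipoda} collapses that subdiagram to $\eta\,\epsilon$, which supplies exactly the counit factor required on the free leg. The anti-homomorphism property of Proposition~\ref{prop-antihomomorphism} is what allows the factors to be reordered so that the antipode lands against the correct comultiplication leg, while Proposition~\ref{prop-unit-antipode} is used to dispose of the residual unit and counit legs. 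This establishes that Figure~\ref{integraltraco-a} is an integral for a general Hopf algebra.

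For the reduction from Figure~\ref{integraltraco-a} to Figure~\ref{integraltraco-b}, I would straighten the loop and identify the two antipodes that then appear in series; invoking $S^2=\mathbf{1}$ removes them and leaves the simpler trace diagram of Figure~\ref{integraltraco-b}. The co-integral claim of Figure~\ref{integraltraco-c} I would obtain by the algebra/co-algebra duality built into Kuperberg's notation: reversing every arrow and exchanging $m\leftrightarrow\Delta$ and $\eta\leftrightarrow\epsilon$ turns the integral axiom of Figure~\ref{definicaointegralcointegral-b} into the co-integral axiom of Figure~\ref{definicaointegralcointegral-a}, so the entire computation above transports verbatim, with $S^2=\mathbf{1}$ again performing the involutory simplification.

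The hard part will be the diagram chase itself. Placing the antipode adjacent to the right comultiplication leg, so that the antipode axiom is actually applicable, demands careful bookkeeping of the cyclic order around the traced loop and repeated use of co-associativity and of the anti-homomorphism property of $S$; it is here, rather than in the final collapse, that a misrouted leg or an antipode applied on the wrong side would break the argument.
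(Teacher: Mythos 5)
Your proposal has a genuine gap, and it sits exactly where you flagged it: the deferred ``diagram chase'' is not a routine verification but the entire mathematical content of the proposition. Note first what the paper actually does: it gives no proof at all, but cites \cite{hopfbook}. That is not an omission of a five-line argument. The tensor of figure \ref{integraltraco-a} is the functional $x\mapsto\textrm{tr}\left(L_x\circ S^2\right)$ (trace of left multiplication twisted by $S^2$), and the assertion that it is an integral of $\mathcal{A}^*$ is a form of Radford's trace-function theorem, part of Larson--Sweedler integral theory. The standard proofs --- including the one in the cited textbook --- first establish the existence of a dual pair of nonzero integrals $\lambda\in\mathcal{A}^*$, $\Lambda\in\mathcal{A}$ via the fundamental theorem of Hopf modules (equivalently, the Frobenius property of finite-dimensional Hopf algebras), then prove a trace formula of the shape $\textrm{tr}(f)=\sum\lambda\left(S(\Lambda_{(1)})\,f(\Lambda_{(2)})\right)$ for an arbitrary endomorphism $f$ of $\mathcal{A}$ (conventions vary), and only then specialize $f=L_x\circ S^2$ to identify the trace tensor as a multiple of $\lambda$. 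Your declared toolkit --- trace cyclicity, slides, the antipode axiom, and Propositions \ref{prop-trace}, \ref{prop-antihomomorphism}, \ref{prop-unit-antipode} --- contains none of this input; Kuperberg himself, whose calculus the paper adopts, cites Larson--Sweedler and Radford for precisely these facts rather than deriving them by moves.

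To see concretely why the mechanism you describe fails: cyclicity of the trace loop only produces full-trace identities such as $T(xy)=T(yx)$ for $T(x)=\textrm{tr}(L_x)$, whereas the integral axiom $(\mathbf{1}\otimes T)\circ\Delta=\eta\,T$ is a \emph{partial}-trace identity. The natural attempt to exploit cyclicity --- conjugating left multiplication by $\Delta(x)$ on $\mathcal{A}\otimes\mathcal{A}$ into $\mathbf{1}\otimes L_x$ via the Galois map $y\otimes z\mapsto\sum z_{(1)}y\otimes z_{(2)}$ --- runs into the fact that partial traces are not invariant under conjugation by maps that are not of product form, and this is exactly the obstruction the Hopf-module argument is designed to overcome; no placement of the antipode axiom repairs it. Your two peripheral steps are fine: cancelling $S^2=\mathbf{1}$ to pass from figure \ref{integraltraco-a} to \ref{integraltraco-b}, and dualizing (reverse arrows, swap $m\leftrightarrow\Delta$, $\eta\leftrightarrow\epsilon$) to transport the statement to the co-integral of figure \ref{integraltraco-c} --- but both are conditional on the central step, which as sketched would not go through. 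A self-contained proof would require importing the Larson--Sweedler/Radford machinery, a different and substantially longer argument than the one you outline; otherwise the honest move is the paper's own: cite the literature.
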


\begin{proof}
This proof can be found in \cite{hopfbook}.
\end{proof}

\begin{prop}\label{prop-antipode-involutory}

In an involutory Hopf algebra the tensor drawn in figure \ref{valorestabilizacao} is numerically equal to dimension of the algebra.

\begin{figure}[h!]
	\begin{center}
		\includegraphics[scale=1]{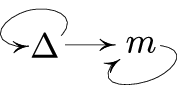}
	\caption{Contraction of an integral and a co-integral.}	
	\label{valorestabilizacao}
	\end{center}
\end{figure}
\end{prop}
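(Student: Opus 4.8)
The plan is to collapse the closed diagram in figure \ref{valorestabilizacao} into the scalar $\dim(\mathcal{A})$ by converting the contraction of the integral against the co-integral into an evaluation of the co-unit. First I would invoke Proposition \ref{prop-tensor-integral}: in the involutory case the integral is exactly the trace tensor $m_{ak}^k$ and the co-integral is the cotrace tensor $\lambda^a=\Delta_j^{ja}$. Since by Proposition \ref{prop-trace} the integral, read as a functional on $\mathcal{A}$, is precisely the trace in the regular representation, the quantity drawn in figure \ref{valorestabilizacao} is nothing but $\textrm{tr}(\lambda)$, the trace of left multiplication by the co-integral. (As a sanity check, for $\mathcal{A}=\mathbb{C}(G)$ one has $\lambda=\sum_g\phi_g$ and this contraction returns $|G|=\dim\mathbb{C}(G)$.)

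The second step is the heart of the argument. Expanding the trace in the basis, $\textrm{tr}(\lambda)=\sum_k\langle\varphi^k,\lambda\varphi_k\rangle$, I would use the defining absorption property of the co-integral from Definition \ref{defi-integral}, namely $\lambda\varphi_k=\epsilon(\varphi_k)\lambda$, to replace each product $\lambda\varphi_k$ by $\epsilon(\varphi_k)\lambda$. This collapses the trace to $\sum_k\epsilon(\varphi_k)\langle\varphi^k,\lambda\rangle=\sum_k\epsilon_k\lambda^k=\epsilon(\lambda)$. Finally, feeding in the cotrace form $\lambda^a=\Delta_j^{ja}$ and applying the co-unit axiom $\Delta_j^{ja}\epsilon_a=\delta_j^{\,j}$ yields $\epsilon(\lambda)=\dim(\mathcal{A})$, which is the claim. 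Diagrammatically this last move is just sliding the co-unit along the cotrace loop and counting the remaining closed strand.

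The step I expect to be the main obstacle is the absorption move, because the trace in the regular representation places $\lambda$ on the \emph{left} of $\varphi_k$, so the property actually required is $\lambda\varphi_k=\epsilon(\varphi_k)\lambda$, whereas Definition \ref{defi-integral} fixes the co-integral on one side only. Here the involutory hypothesis $S^2=\mathbf{1}$ is essential: it forces the Hopf algebra to be unimodular, so the left and right co-integrals coincide and the identity holds on the side demanded by the trace. I would make this dependence explicit and check normalizations, so that the particular cotrace representative of $\lambda$ coming from Proposition \ref{prop-tensor-integral} is the one used throughout and no spurious scalar factor appears. The dual statement $\textrm{cotr}(\Lambda)=\dim(\mathcal{A})$ then follows by the algebra/co-algebra duality, exchanging $m\leftrightarrow\Delta$, $\eta\leftrightarrow\epsilon$ and trace $\leftrightarrow$ cotrace.
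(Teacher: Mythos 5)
Your proof is correct, and it is the mirror image, under the algebra/co-algebra duality, of the paper's own argument. The paper's proof invokes Proposition \ref{prop-tensor-integral} on the \emph{other} leg of the diagram: since the trace functional $\Lambda_a=m_{ak}^{k}$ is an integral, contracting it through the cotrace loop collapses, via the integral absorption property $(\mathbf{1}\otimes\Lambda)\Delta(x)=\Lambda(x)\,\eta$, to $\Lambda(\eta)=\textrm{tr}(\mathbf{1})=\textrm{dim}(\mathcal{A})$ --- ``the contraction of the trace with the co-trace is equal to the trace of the identity map.'' You instead use the co-integral property of the cotrace element $\lambda^a=\Delta_j^{\,ja}$, absorbing it through the multiplication loop so that $\textrm{tr}(\lambda)$ collapses to $\epsilon(\lambda)$, which the co-unit axiom evaluates to $\delta_j^{\,j}=\textrm{dim}(\mathcal{A})$. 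The two routes are dual, of equal length and equal strength; neither is more general. What your write-up adds, and the paper's one-line proof silently passes over, is the left/right issue: Definition \ref{defi-integral} fixes the (co-)integral on one side only, while the regular-representation trace $\textrm{tr}(\lambda)=\sum_k\varphi^k(\lambda\,\phi_k)$ demands absorption on the side where $\lambda$ multiplies from the left, so one needs the left and right co-integrals to coincide. Your appeal to unimodularity is the right fix, and it does hold here, since over $\mathbb{C}$ an involutory finite-dimensional Hopf algebra is semisimple (Larson--Radford) and hence unimodular; making that dependence explicit is a genuine improvement in rigor over the paper's version, which uses the same kind of absorption move on the dual side without comment.
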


\begin{proof}

This proof is trivial and a directly consequence of the fact that the trace is an integral of the algebra, as shown in figure \ref{valorestabilizacaoprova}.

\begin{figure}[h!]
	\begin{center}
		\includegraphics[scale=1]{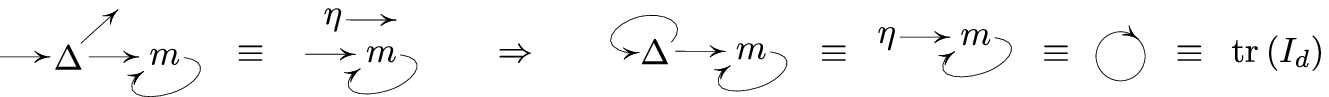}
	\caption{Proof of Proposition (\ref{valorestabilizacao}).}
	\label{valorestabilizacaoprova}
	\end{center}
\end{figure}

So the contraction of the trace with the co-trace is equal to trace of the identity map, therefore it is equal to the dimension of the algebra.

\end{proof}

\begin{prop}
\label{prop-antipodainvoluoria}

If $\mathcal{A}$ is an involutory Hopf algebra the antipode is given by the one shown in figure \ref{antipodainvolutoria}.

\begin{figure}[h!]
	\begin{center}
		\includegraphics[scale=1]{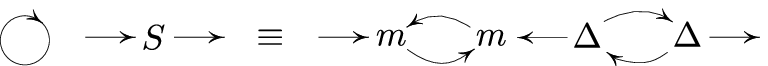}
	\caption{Antipode in an involutory Hopf algebra.}	
	\label{antipodainvolutoria}
	\end{center}
\end{figure}

\end{prop}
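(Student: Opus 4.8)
The plan is to recognize the right-hand side of figure~\ref{antipodainvolutoria} as built entirely from the multiplication $m$, the co-multiplication $\Delta$, and the closed loops that, by Proposition~\ref{prop-trace}, are the trace and co-trace tensors $m_{ak}^k$ and $\Delta_k^{ka}$. By Proposition~\ref{prop-tensor-integral} these are an integral $\Lambda$ and a co-integral $\lambda$ of $\mathcal{A}$, but only after the reduction of figure~\ref{integraltraco-a} to figure~\ref{integraltraco-b}; this is exactly the place where the hypothesis $S^2=\mathbb{1}$ is needed, since the ordinary (non-quantum) trace serves as an integral only in the involutory case. I would therefore first rewrite the diagram so that all subsequent moves are controlled by the defining invariance of $\Lambda$ and $\lambda$ from Definition~\ref{defi-integral}.

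The core step is to verify that this expression is the two-sided convolution inverse of $\mathrm{id}_{\mathcal{A}}$, i.e.\ that composing it against the identity through $\Delta$ and then $m$ reproduces $\eta\circ\epsilon$. Co-multiplying the input, acting with the candidate diagram on one leg and with $\mathrm{id}$ on the other, and recombining with $m$, I would use co-associativity together with the defining equation of the co-integral (figure~\ref{definicaointegralcointegral-a}) to slide the internal loop off the strand, collapsing the composite to a single line carrying a co-unit; the counit axiom among the Hopf axioms of figure~\ref{todosaxiomas} then finishes the reduction to $\eta\circ\epsilon$. Repeating the computation on the opposite side, now moving $S$ through the product by the anti-homomorphism property of Proposition~\ref{prop-antihomomorphism}, shows the diagram is a genuine two-sided inverse. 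Since the two-sided convolution inverse of a map is unique, and the antipode is by definition (figure~\ref{condicaoantipoda}) such an inverse of $\mathrm{id}_{\mathcal{A}}$, the diagram must equal $S$.

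The overall scalar is pinned down by Proposition~\ref{prop-antipode-involutory}: the contraction of $\Lambda$ against $\lambda$ equals $\textrm{dim}(\mathcal{A})$, so the normalization of the loops in figure~\ref{antipodainvolutoria} is consistent with producing $S$ itself rather than a multiple of it. I expect the main obstacle to be the bookkeeping of this normalization alongside the precise use of involutivity: the free sliding of the loop built from $\Lambda$ and $\lambda$ amounts to a snake (zig-zag) identity, and it is $S^2=\mathbb{1}$ that removes the Nakayama-type twist which would otherwise obstruct it. I would make that sliding step rigorous by appealing to the slide and two-point moves of appendix~\ref{ap-moves} rather than to an index computation, keeping the argument within the diagrammatic calculus used throughout.
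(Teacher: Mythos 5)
Your proposal cannot coincide with the paper's approach for a simple reason: the paper gives no argument at all for Proposition \ref{prop-antipodainvoluoria} --- its entire proof is the sentence ``This proof can be found in \cite{hopfbook}.'' What you propose instead is a genuine, self-contained derivation, and its strategy is sound: read the diagram of figure \ref{antipodainvolutoria} as a network in $m$, $\Delta$ and the trace/cotrace loops; identify those loops as an integral and a co-integral via Propositions \ref{prop-trace} and \ref{prop-tensor-integral} (which is indeed exactly where $S^2=\mathbb{1}$ enters); show the resulting map is a convolution inverse of $\mathrm{id}_{\mathcal{A}}$; and conclude by uniqueness of convolution inverses, since the antipode axiom of figure \ref{condicaoantipoda} characterizes $S$ as precisely such an inverse, with the scalar pinned down by the pairing of integral against co-integral being $\textrm{dim}(\mathcal{A})$ (Proposition \ref{prop-antipode-involutory}). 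This is the standard textbook route to integral formulas for the antipode, and relative to the paper's bare citation it buys a proof that stays inside the diagrammatic toolkit already assembled in the appendices and makes visible exactly where involutivity and the normalization are used.

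Two steps of your sketch need repair before the argument closes. First, the collapse of the convolution product to $\eta\circ\epsilon$ does not follow from co-associativity plus the co-integral equation of figure \ref{definicaointegralcointegral-a} alone, which is all you invoke; the indispensable move is the bi-algebra axiom (figure \ref{condicaobialgebra3}), which merges the co-product of the input with the co-product of the co-integral into the co-product of their product. Only after that merge can the defining equations of the integral and co-integral (figures \ref{definicaointegralcointegral-a} and \ref{definicaointegralcointegral-b}) strip the loops off the strand and leave $\epsilon$ times the unit; as literally written, your reduction stalls. Second, your check of the ``opposite side,'' done by moving $S$ through a product via Proposition \ref{prop-antihomomorphism}, is the one murky point --- the candidate diagram contains no antipode to move --- but it is also superfluous: the antipode axiom already makes $\mathrm{id}_{\mathcal{A}}$ convolution-invertible with inverse $S$, and in any monoid a one-sided inverse of an invertible element equals its inverse by associativity, so your first computation alone suffices. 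With these two adjustments your proof is complete and, unlike the paper's, self-contained.
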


\begin{proof}
This proof can be found in \cite{hopfbook}.
\end{proof}

\begin{prop}
\label{prop-orientation}
In a involutory Hopf algebra the tensors in figure \ref{orientationfixed-a} and \ref{orientationfixed-b} are the same of the tensors in figure \ref{orientationfixed-d} and \ref{orientationfixed-c}, respectively, which means they do not depend on the orientation of the curves.
\begin{figure}[h!]
\centering
\subfigure[]{
\includegraphics[scale=1]{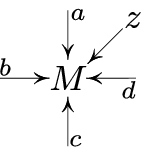} \label{orientationfixed-a}
}
\hspace{2cm}
\subfigure[]{
\includegraphics[scale=1]{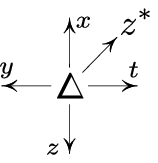} \label{orientationfixed-b}
}
\hspace{2cm}
\subfigure[]{
\includegraphics[scale=1]{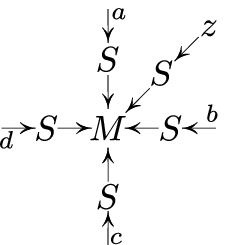} \label{orientationfixed-d}
}
\hspace{2cm}
\subfigure[]{
\includegraphics[scale=1]{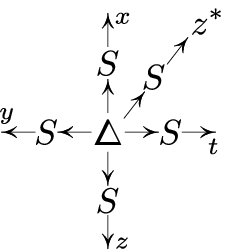} \label{orientationfixed-c}
}
\caption{The tensor $M_{abcd}$ and $\Delta^{xyzt}$ are invariant under multiplication by antipodes.}
\label{orientationfixed}
\end{figure}
\end{prop}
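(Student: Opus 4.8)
The plan is to exploit the two structural facts about the antipode that are already available: that $S$ reverses products (Proposition \ref{prop-antihomomorphism}) and that the trace in the regular representation is an integral (Proposition \ref{prop-tensor-integral}). First I would make the geometric content of the statement algebraic. By the contraction rule of figure \ref{orientacaocurvas}, reversing the orientation of the blue curve has two effects on the weight $M_{abcd}=\textrm{tr}(\phi_a\phi_b\phi_c\phi_d)$: the sign of $\vec{s}_b\times\vec{s}_r\cdot\hat{n}$ flips at every crossing, so each leg acquires a factor of $S$, and the cyclic order in which the legs are read along the curve is reversed. The orientation-reversed tensor is therefore $\textrm{tr}\bigl(S(\phi_d)\,S(\phi_c)\,S(\phi_b)\,S(\phi_a)\bigr)$, and the content of the proposition is that this equals $M_{abcd}$.

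The main computation is then short. Applying Proposition \ref{prop-antihomomorphism} repeatedly, the reversed product collapses to a single antipode,
\[
S(\phi_d)\,S(\phi_c)\,S(\phi_b)\,S(\phi_a)=S(\phi_a\phi_b\phi_c\phi_d),
\]
so that the reversed weight is $\textrm{tr}\bigl(S(\phi_a\phi_b\phi_c\phi_d)\bigr)$. It remains to show that the regular trace is invariant under the antipode, i.e.\ $\textrm{tr}\circ S=\textrm{tr}$. This is where the involutory hypothesis enters: since the trace is an integral of $\mathcal{A}$ (Proposition \ref{prop-tensor-integral}) and $S^2=\mathbf{1}$, the integral is fixed by $S$. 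Concretely I would verify this diagrammatically by composing the trace loop of figure \ref{proptraco-a} with an antipode and simplifying using Proposition \ref{prop-antipodainvoluoria}, the relation $S^2=\mathbf{1}$, and the slide moves of appendix \ref{ap-moves} to return the original loop. Combining these gives $\textrm{tr}\bigl(S(\phi_a\phi_b\phi_c\phi_d)\bigr)=\textrm{tr}(\phi_a\phi_b\phi_c\phi_d)=M_{abcd}$, which is the asserted equality between figures \ref{orientationfixed-a} and \ref{orientationfixed-d}.

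For the co-multiplication tensor $\Delta^{xyzt}$ I would dualize the entire argument. Every ingredient has a co-counterpart: $S$ is an anti-homomorphism of the co-algebra (Proposition \ref{prop-antihomomorphism}), the cotrace is a co-integral (Proposition \ref{prop-tensor-integral}), and the cotrace is antipode-invariant for the same reason. Diagrammatically this is the statement obtained by reversing every arrow in the argument above, and it yields the equality between figures \ref{orientationfixed-b} and \ref{orientationfixed-c}.

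I expect the only genuine obstacle to be the trace/cotrace invariance $\textrm{tr}\circ S=\textrm{tr}$. The order-reversal bookkeeping and the anti-homomorphism step are routine, but the antipode-invariance of the integral is precisely where $S^2=\mathbf{1}$ is indispensable; on a non-involutory Hopf algebra it would fail by a modular factor and the tensors would genuinely depend on orientation. I would therefore isolate this invariance as a small lemma, proving it by the diagrammatic route just described, so that the proposition follows by the three-line manipulation above.
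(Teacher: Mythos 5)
Your proposal is correct and takes essentially the same route as the paper's appendix proof: order-reversal bookkeeping plus the anti-homomorphism property collapses all antipodes into one, and the crux --- invariance of the regular trace under $S$, which is the same statement as the paper's ``left trace equals right trace'' once $S^2=\mathbf{1}$ is used --- is established diagrammatically from Proposition \ref{prop-antipodainvoluoria} and the antipode axiom, then dualized for $\Delta^{xyzt}$. The only detail to add is the central coloring appearing in the figures (the tensor colored by $z$ is compared with the reversed curve colored by $S(z)$), but your argument absorbs it immediately, since $S(z)\,S(\phi_d)\,S(\phi_c)\,S(\phi_b)\,S(\phi_a)=S(\phi_a\phi_b\phi_c\phi_d\,z)$ and $z$ is central.
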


\begin{proof}
This proof is based on the fact that the antipode of any involutory Hopf algebra can be given by the proposition \ref{prop-antipodainvoluoria}. But first note that the tensor in the figure \ref{orientationfixed-d} can be written as shown in figure \ref{orientationfixed-proof-a}, just using the associativity property of the algebra. Then, the next step is to use the fact that the antipode is an anti homeomorphism of the algebra, as also shown in figure \ref{orientationfixed-proof-a}.
\begin{figure}[h!]
\centering
\subfigure[]{
\includegraphics[scale=1]{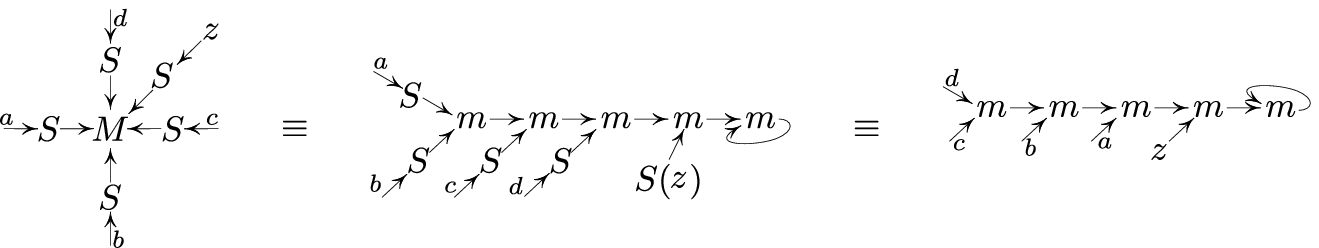} \label{orientationfixed-proof-a}
}
\\
\subfigure[]{
\includegraphics[scale=1]{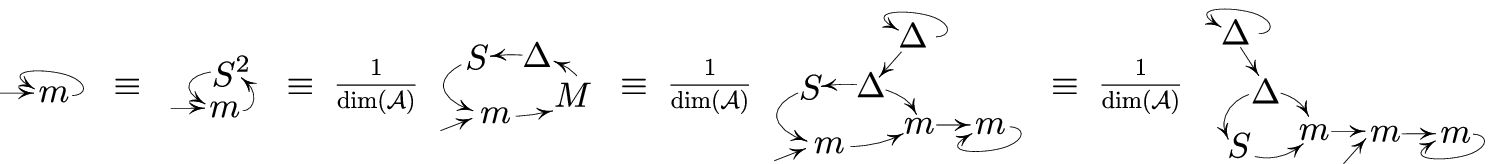} \label{orientationfixed-proof-b}
}
\\
\subfigure[]{
\includegraphics[scale=1]{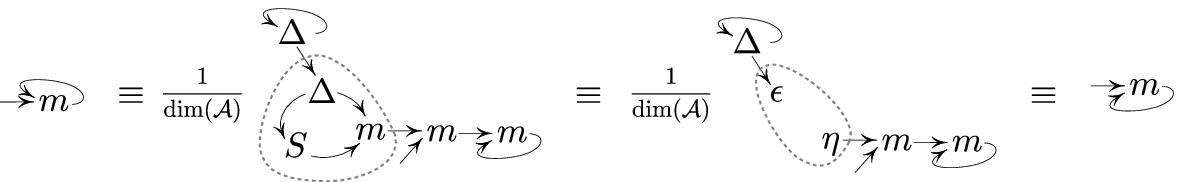} \label{orientationfixed-proof-c}
}
\caption{Proof of the proposition \ref{prop-orientation}.}
\label{orientationfixed2}
\end{figure}
The tensor obtained in figure \ref{orientationfixed-proof-a} is similar to the one shown in figure \ref{orientationfixed-a}, except for two small differences: the trace has been taking on the left, differently of the trace used to define the tensor $M_{abcd}$; the tensor $M_{abcd}$ is now colored with $z$, instead of $S(z)$.  But that is okay, since this two trace are the same for involutory Hopf algebras, as we can see in figures \ref{orientationfixed-proof-b} and \ref{orientationfixed-proof-c}, where we just used the antipode as written in proposition \ref{prop-antipodainvoluoria} and the antipode axiom in order to get the final picture shown in figure \ref{orientationfixed-proof-c}, which proves this proposition. Of course the same procedure can be used to proof that the tensors drawn in figures \ref{orientationfixed-b} and \ref{orientationfixed-c} are the same.
\end{proof}

\section{Some Moves on Diagrams}
\label{ap-moves}
~

In this appendix we are going to define some moves on diagrams, which are the same ones defined in Kuperberg's work \cite{kuperberg}, the validity of such a moves are also shown in \cite{kuperberg}. The first move we will discuss is called the {\it two-point move} and it consists basically of getting rid of crossings between a blue and a red curve which do not contribute to the total weight associated with these curves. The second one is called the {\it slide move} and it consists in sliding a curve over another one with the same color. In the following we will define these two moves. The two-point move holds for any involutory hopf algebra even if the curves are colored by some element of the center of the algebra. On the other hand the slide move does not hold for any element of the center of the algebra.

\subsection{Two-Point Move}
~

The two-point move consists of eliminating two consecutive crossings as shown in figure \ref{twopointmove}. Note that these crossings have different orientations.
\begin{figure}[h!]
	\begin{center}
		\includegraphics[scale=1]{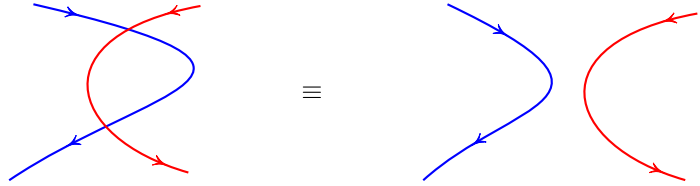}
	\caption{The two-point move.}	
	\label{twopointmove}
	\end{center}
\end{figure}
The weight associated to the diagrams in figure \ref{twopointmove} are the ones shown in figure \ref{twopointmovepesos}
\begin{figure}[h!]
	\begin{center}
		\includegraphics[scale=1]{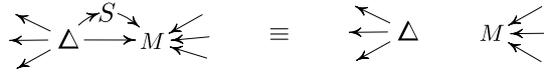}
	\caption{The two-point move weights.}	
	\label{twopointmovepesos}
	\end{center}
\end{figure}
and it is not difficult to see that these two weights are equal, using for that the antipode axiom in figure \ref{todosaxiomas-h}.

\subsection{Slide Move}
~

The second move is the slide move which consists of sliding a curve over another one with the same color. Therefore there are two kinds of slides, the blue ones and the red ones. This move depends on many properties of Hopf algebras. In figure \ref{slidemove} we can see both the red and blues slides.
\begin{figure}[h!]
\centering
\subfigure[Blue Slide.]{
\includegraphics[scale=1]{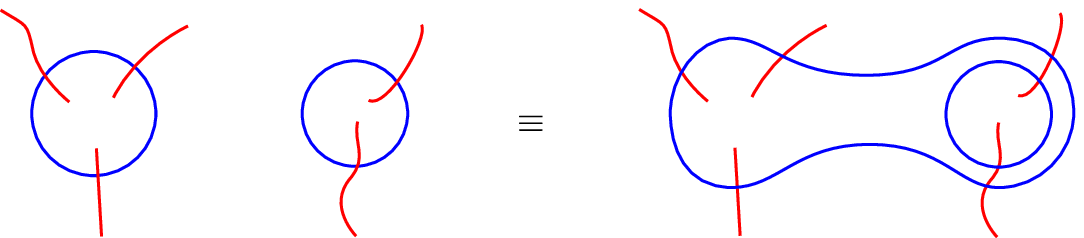} \label{slidemove-a}
}
\\
\subfigure[Red Slide.]{
\includegraphics[scale=1]{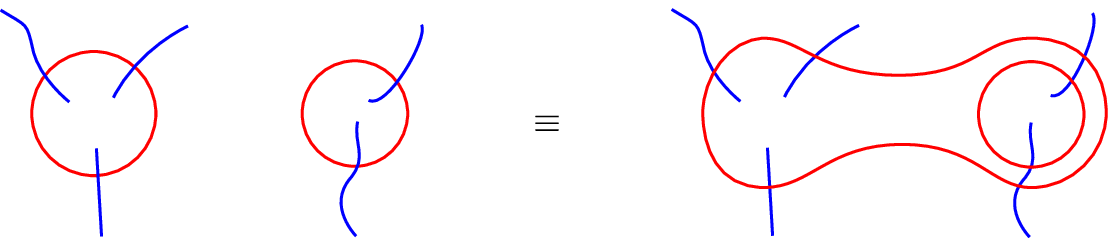} \label{slidemove-b}
}
\caption{The slide move.}
\label{slidemove}
\end{figure}
The Kuperberg diagram associated with the diagram in figure \ref{slidemove-a} is the one shown in figure \ref{slidingprovapesos}, which are actually the same weight \cite{kuperberg}.
\begin{figure}[h!]
	\begin{center}
		\includegraphics[scale=1]{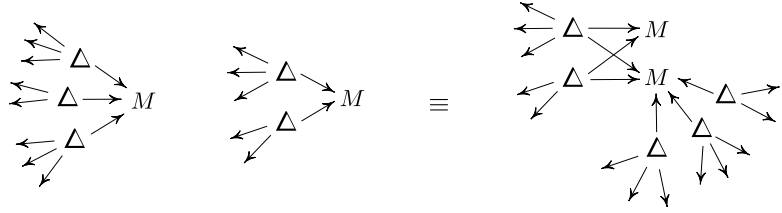}
	\caption{The blue slide move weights.}	
	\label{slidingprovapesos}
	\end{center}
\end{figure}

\section{Powers of the operators $A_s(\epsilon)$ and $B_p(\eta)$}
\label{ap-powers}
~

We can easily compute powers of the star and plaquette operators by doing slides and two-point moves. This procedure simplifies the usual proof using the axioms of Hopf algebra. In the following we will compute these powers only for the plaquette operator. For the star operator it follows exactly in the same way. In terms of diagrams $\left(B_p\right)^2$ is the one shown in figure \ref{powersBp}.
\begin{figure}[h!]
	\begin{center}
		\includegraphics[scale=1]{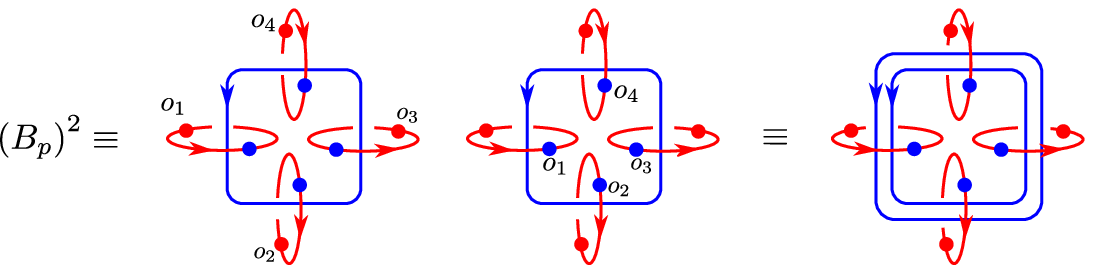}
	\caption{Square of the plaquette operator.}	
	\label{powersBp}
	\end{center}
\end{figure}
In figure \ref{powersBp} we can perform a slide for the external blue blue curve over the internal one and some two-point moves and then we will get the diagram in figure \ref{powersBpslided}.
\begin{figure}[h!]
	\begin{center}
		\includegraphics[scale=1]{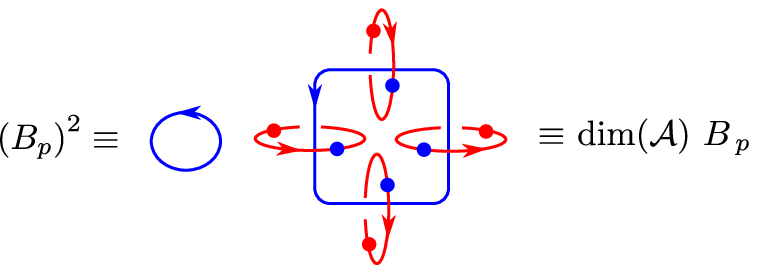}
	\caption{Square of the plaquette operator.}	
	\label{powersBpslided}
	\end{center}
\end{figure}
Repeating the same procedure several times we get the following relations for powers of the plaquette operator (which is the same for the star operator)
\begin{eqnarray}
\left(B_p\right)^k &=& \left(\textrm{dim}(\mathcal{A})\right)^{k-1}~B_p\;,\;\; \hbox{for} \;\; k\neq 0~, \nonumber \\
\left(A_s\right)^k &=& \left(\textrm{dim}(\mathcal{A})\right)^{k-1}~A_s\;,\;\; \hbox{for} \;\; k\neq 0~. \nonumber
\end{eqnarray}

\section{The operators $B_p(\lambda)$ and $A_s(\Lambda)$}
~

The operators $B_p(\lambda)$ and $A_s(\Lambda)$ are proportional to a tensorial product of identity maps. It is a trivial fact as $\lambda$ and $\Lambda$ are the co-integral and the integral of the algebra. To prove this statement we just need to prove one simple property.
\begin{figure}[h!]
	\begin{center}
		\includegraphics[scale=1]{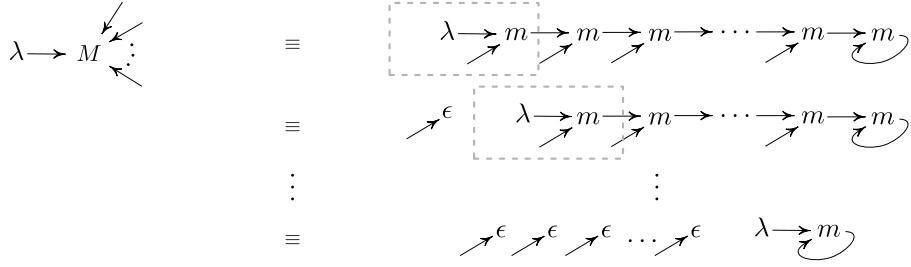}
	\caption{Co-integral property for the multi product.}	
	\label{cointegralgeneralizacao}
	\end{center}
\end{figure}
The plaquette operator $B_p(\lambda)$ becomes the one in figure \ref{plaquetteoperatorintegral}
\begin{figure}[h!]
	\begin{center}
		\includegraphics[scale=1]{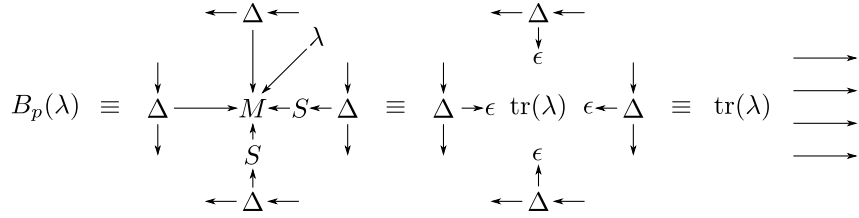}
	\caption{This plaquette operator is proportional to a tensorial product of identities maps.}	
	\label{plaquetteoperatorintegral}
	\end{center}
\end{figure}
or in other words,
\begin{eqnarray}
B_p(\lambda) &=& \textrm{tr}(\lambda) ~\left(\mathbb{1}\otimes\mathbb{1}\otimes\mathbb{1}\otimes\mathbb{1}\right)\; , \nonumber \\
A_s(\Lambda) &=& \textrm{cotr}(\Lambda)~\left( \mathbb{1}\otimes\mathbb{1}\otimes\mathbb{1}\otimes\mathbb{1}\right)\;. \nonumber
\end{eqnarray}

\section{Commutation Relation between the Operators $A_s(z_T^*)$ and $B_p(z_S)$}
\label{ap-commutation}
~

In this section we are going to prove that the star operator commutes with the plaquette operator for any $z_T^*$ and $z_S$ which belongs to concenter and the center of the algebra. We have to analyze two different situations, the first one is the easiest one, where the star and plaquette operator acts on different spaces, as shown in figure \ref{commutator-b}
\begin{figure}[h!]
\centering
\subfigure[]{
\includegraphics[scale=1]{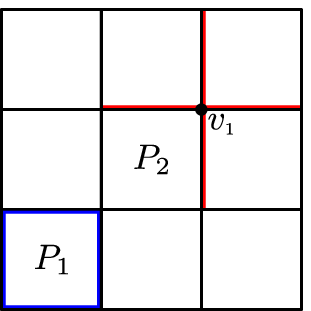} \label{commutator-b}
}
\hspace{4cm}
\subfigure[]{
\includegraphics[scale=1]{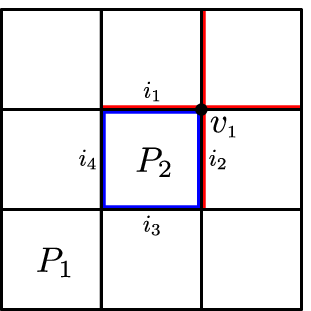} \label{commutator-a}
}
\caption{In (a) the situation where the star and plaquette operator acts on different spaces and in (b) where they do act on the same spaces.}
\label{commutatator-ab}
\end{figure}

Since $A_{v_1}$ and $B_{P_1}$ are acting on different spaces the commutation relation between them is trivial
$$\left[A_{v_1},B_{P_1} \right]=0\; .$$
Now we just have to look at the situation described at figure \ref{commutator-a}. In this case both the star and plaquette operators act on the links $i_1$ and $i_2$. The Kuperberg diagram for these operators are the ones shown in figures \ref{commutator-c} and \ref{commutator-d}.
\begin{figure}[h!]
\centering
\subfigure[]{
\includegraphics[scale=1]{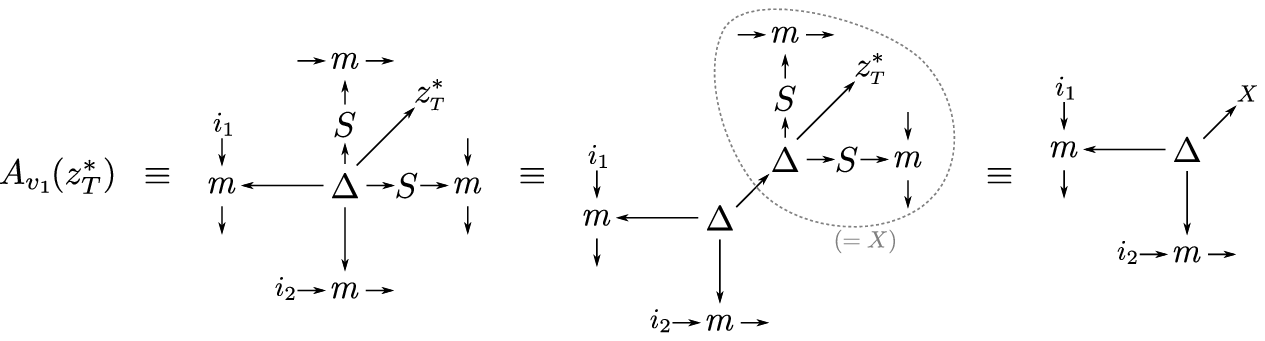} \label{commutator-c}
}
\hspace{4cm}
\subfigure[]{
\includegraphics[scale=1]{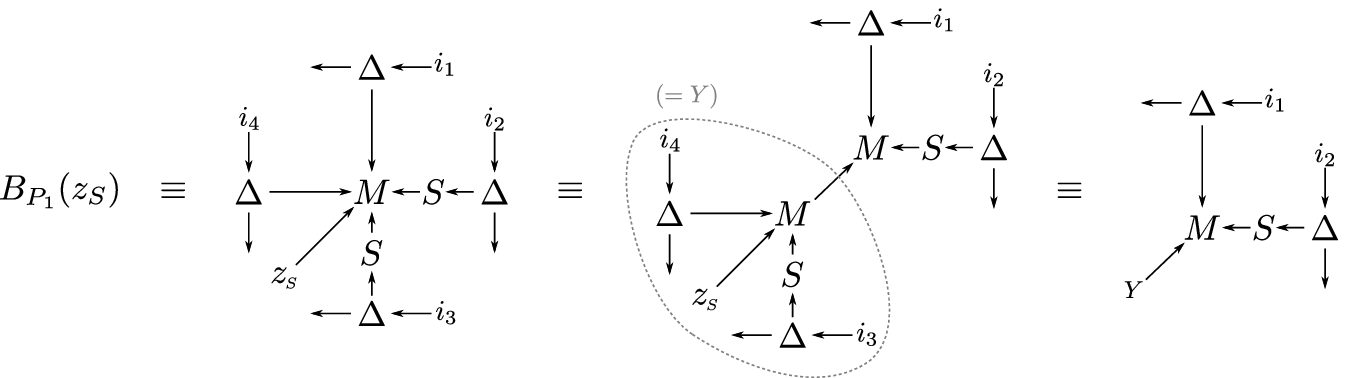} \label{commutator-d}
}
\caption{In the figure $X$ and $Y$ are tensors which represents the whole diagram inside the highlighted areas.}
\label{commutatator-cd}
\end{figure}
As the element $z_S$ ($z_T^*$) belongs to the center (co-center) of the algebra, we can move them to the place which is more convenient, otherwise we would not be able to proceed with this computation, and this operator do not commute in general.

Now we are ready to show that $\left[A_{v_1},B_{P_1}\right]=0$. The way we are going to do this is by showing that $A_{v_1}B_{P_1}=B_{P_1}A_{v_1}$. In figure \ref{commutator-e} we can see the Kuperberg diagram of the product $B_{P_1}A_{v_1}$, and doing some manipulations we end up with the last diagram shown in figure \ref{commutator-e}.
\begin{figure}[h!]
	\begin{center}
		\includegraphics[scale=1]{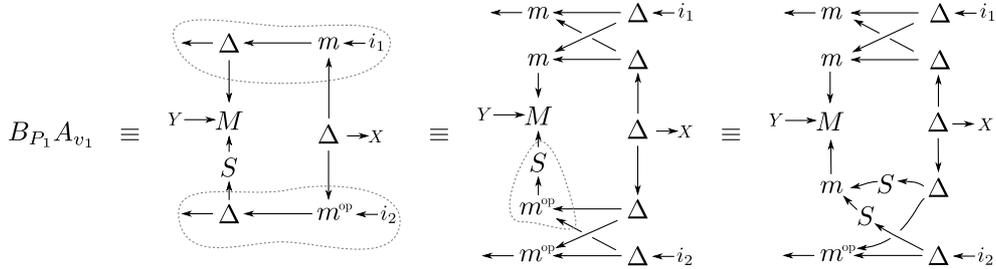}
	\caption{We used the bi-algebra axiom (\ref{todosaxiomas-f}) in the first two highlighted areas and the proposition (\ref{prop-antihomomorphism}) in the second one.}	
	\label{commutator-e}
	\end{center}
\end{figure}

After using the associativity and co-associativity axiom, (\ref{todosaxiomas-a}) and (\ref{todosaxiomas-c}), we get the diagram in figure \ref{commutator-f}.
\begin{figure}[h!]
	\begin{center}
		\includegraphics[scale=1]{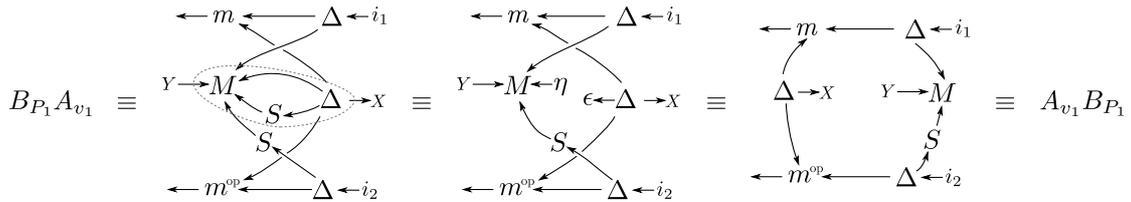}
	\caption{Here we used the antipode axiom (\ref{todosaxiomas-h}) in the highlighted area.}	
	\label{commutator-f}
	\end{center}
\end{figure}

The diagram in figure \ref{commutator-f} is actually the product $A_{v_1}B_{P_1}$. Therefore 
$$\left[A_{v_1},B_{P_1}\right]=0\;.$$

\section{Orientation of the Curves for the Plaquette and the Star Operators}
\label{ap-orientation}
~

The plaquette and star operators are the ones shown in figures \ref{plaquetteoperator-diagram} and \ref{staroperator-diagram}. But these operators can also be written with different choices of curve orientation which are equivalent, in the sense that these different choices will give us the same operators written in a different way. The basic point here is the proposition \ref{prop-orientation}. Using this proposition we can immediately write the two equivalent operators, shown in figure \ref{plaquette-orientation-appendix-a} and \ref{plaquette-orientation-appendix-b}. From figure \ref{plaquette-orientation-appendix-a} to \ref{plaquette-orientation-appendix-b}, we just flipped the orientation of the blue curve (and $z\mapsto S(z)$), which surely do not change the weight associated to such operator, due to proposition  \ref{prop-orientation}. But, equivalently one could have flipped the red curves instead of the blue, it would give us the operator drawn in figure \ref{plaquette-orientation-c}. But, the operator in figure \ref{plaquette-orientation-c} is  equal to the operator drawn in figure \ref{plaquette-orientation-b}, therefore, if $z=S(z)$, all the operators drawn in figure \ref{plaquette-orientation} are equals.

\begin{figure}[h!]
\centering
\subfigure[]{
\includegraphics[scale=1]{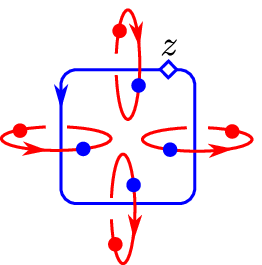} \label{plaquette-orientation-appendix-a}
}
\hspace{1.5cm}
\subfigure[]{
\includegraphics[scale=1]{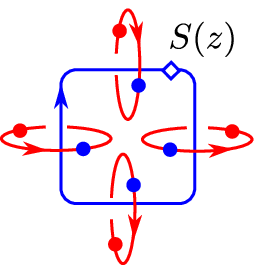} \label{plaquette-orientation-appendix-b}
}
\hspace{1,5cm}
\subfigure[]{
\includegraphics[scale=1]{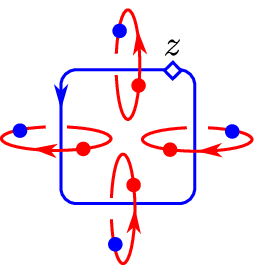} \label{plaquette-orientation-appendix-c}
}
\caption{Different orientations of the blue and red curves for the plaquette operator.}
\label{plaquette-orientation-appendix}
\end{figure}

The same kind of analysis can be done for the vertex operator.
\clearpage


\begin{thebibliography}{99}
\bibitem{WenBook} X. G. Wen, {\it  Quantum Field Theory of Many-body Systems: From the Origin of Sound to an Origin of Light and Electrons}, Oxford Graduate Texts, (2007).

\bibitem{DSP} S. D. Sarma, A. Pinczuk, eds, {\it Perspectives in quantum Hall effects}, John Wiley and Sons, (2008).

\bibitem{PG} R. E. Prange, S. M. Girvin, eds, {\it The quantum Hall effect} Vol. 2. Berlin?Heidelberg:Springer-Verlag, (1987).

\bibitem{PWA} P. W. Anderson, {\it The resonating valence bond state in $\textrm{La}_2\textrm{CuO}_4$ and superconductivity}, Science, 235:1196, (1987). 

\bibitem{NR} C.~Nayak, S.~H.~Simon, A.~Stern, M.~Freedman, S.~D.~Sarma, {\it Non-Abelian Anyons and Topological Quantum Computation}, Rev.~Mod.~Phys.~80, 1083 (2008) and arXiv:0707.1889[cond-mat.str-el].

\bibitem{MF} M.~H.~Freedman, {\it P/NP, and the quantum field computer}, Proceedings of the National Academy of Sciences 95.1, 98-101, (1998).

\bibitem{AK} A.~Y.~Kitaev, {\it Fault-tolerant quantum computation by anyons} Ann.~Phys.~303, 2 (2003) and arXiv:quant-ph/9707021.

\bibitem{LW} M. Levin, X. G. Wen, {\it  String-net condensation: A physical mechanism for topological phases} , Phys.Rev. B71 045110 (2005) and arXiv:cond-mat/0404617.

\bibitem{MR} G.~Moore, N.~Read, {\it Nonabelions in the fractional quantum hall effect}, Nucl.~Phys.~B 360, 362 (1991).

\bibitem{FG} L. Fröhlich, F. Gabbiani, {\it Braid statistics in local quantum theory}, Rev. Math. Phys. 2.03 (1990): 251-353.

\bibitem{FW} F.~Wilczek, {\it Fractional statistics and Anyon superconductivity}, vol. 5, World Scientific, (1990).

\bibitem{CC} C.~Castelnovo, C.~Chamon,{\it Quantum topological phase transition at the microscopic level}, Phys.~Rev.~B 77, 054433, (2008) and arXiv:0707.2084 [cond-mat.str-el].

\bibitem{EFH} E.~Fradkin, S.~H.~Shenker, {\it Phase diagrams of lattice gauge theories with Higgs fields}, Phys.~Rev.~D 19, 3682 (1979).

\bibitem{AFF} E.~Ardonne, P.~Fendley, E.~Fradkin, {\it Topological order and conformal quantum critical points}, Ann.~Phys.~ 310, 493 (2004) and arXiv:cond-mat/0311466 [cond-mat.str-el].

\bibitem{FNSWW} M.~Freedman, C.~Nayak, K.~Shtengel, K.~Walker, Z.~Wang, {\it A class of P, T -invariant topological phases of interacting electrons}, Ann.~Phys.~(N.Y.) 310, 428 (2004) and arXiv:cond-mat/0307511 [cond-mat.str-el].

\bibitem{Robert} J.~Roberts, Topology~34, 771 (1995).

\bibitem{FS1} F.~J.~Burnell, S.~H.~Simon, {\it Space-Time Geometry of Topological phases}, Ann.~Phys. 325 : 2550-2593, (2010) and arXiv:1004.5586 [cond-mat.str-el].

\bibitem{FS2} F.~J.~Burnell, S.~H.~Simon, {\it Wilson Line Picture of Levin-Wen Partition Functions}, New~J.~Phys.~13 065001, (2011) and arXiv:1004.5147 [cond-mat.str-el]. 

\bibitem{KuperKoni} R.~Koenig, G.~Kuperberg, B.~W.~Reichardt, {\it Quantum computation with Turaev-Viro codes}, Ann.~Phys.~325, 2707-2749 (2010) and arXiv:1002.2816 [quant-ph].

\bibitem{CFS} S.~Chung, M.~Fukuma and A.~Shapere, {\it Structure of topological lattice field theories in three dimensions}, Int.~J.~Mod.~Phys.~A 9 (1994) 1305 and arXiv:hep-th/9305080.

\bibitem{kuperberg} G. Kuperberg, {\it Involutory Hopf algebras and 3-manifold invariants}, Int. J. Math. 2 (1991) 41 and arXiv:math/9201301 [math.QA].

\bibitem{barretwestbury} J. W. Barrett and B. W. Westbury, {\it The equality of 3-manifold invariants}, Math. Proc. Camb. Phil. Soc.  118 (1995) and arXiv:hep-th/9406019 [math.QA].

\bibitem{prboliver} O. Buerschaper, M. Aguado, {\it Mapping Kitaev's quantum double lattice models to Levin and Wen's string-net models}, Phys. Rev. B 80, 155136 (2009) and  arXiv:0907.2670 [cond-mat.str-el].

\bibitem{kadar} Z. Kadar, A. Marzuoli and M. Rasetti, {\it Microscopic description of 2D topological phases, duality and 3D state sums}, Adv. Math. Phys.,  671039 (2010) and  	arXiv:0907.3724 [quant-ph].

\bibitem{arxiv1} A. Kirillov-Jr., B. Balsam, {\it Turaev-Viro invariants as an extended TQFT}, arXiv:1004.1533 [math.GT] (2010).

\bibitem{arxiv2} A. Kirillov-Jr., {\it String-net model of Turaev-Viro Invariants}, arXiv:1106.6033 [math.AT] (2011).

\bibitem{arxiv3} B. Balsam, A. Kirillov-Jr., {\it Kitaev's Lattice Model and Turaev-Viro TQFTs}, arXiv:1206.2308 [math.QA] (2012).

\bibitem{arxiv4} T. Lan, Xiao-Gang Wen, {\it Topological quasiparticles and the holographic bulk-edge relation in 2+1D string-net models}, arXiv:1311.1784 [cond-mat.str-el] (2013).

\bibitem{yw} Y. Hu, Y. Wan, Yong-Shi Wu, {\it Twisted Quantum Double Model of Topological Phases in Two--Dimension},  	arXiv:1211.3695 [cond-mat.str-el] (2012).

\bibitem{n1} N. Yokomizo, P. Teotonio-Sobrinho and J. C. A. Barata, {\it Topological low-temperature limit of $Z(2)$ spin-gauge theory in three dimensions}, Phys. Rev.  D75 (2007) 125009.

\bibitem{MJBV} M. J. B. Ferreira, V. A. Pereira and P. Teotonio-Sobrinho, {\it Quasi-topological quantum field theory and $\mathbb{Z}_2$ lattice gauge theories}, Int. J. Mod. Phys.  A 27 (2012) 1250132.

\bibitem{PB}  P.~Teotonio-Sobrinho, B.~G.~C.~Cunha, {\it Quasi-topological field theory in two dimensions as soluble models}, Int. J. Mod. Phys.  A 13 (1998) 3667 and arXiv:hep-th/9703014. 

\bibitem{PY}  P. Teotonio-Sobrinho, C. Molina, N. Yokomizo, {\it On two-dimensional quasitopological field theories}, Int. J. Mod. Phys. A 24 (2009) 6105 and arXiv:0905.2119 [hep-th].

\bibitem{n2} N. Yokomizo, P. Teotonio-Sobrinho, {\it $GL(2,R)$ dualities in generalized $Z(2)$ gauge theories and Ising models}, JHEP  03 (2007)
081 and [hep-th/0701075v1].

\bibitem{p2} M. J. B. Ferreira,  P. Padmanabhan, P. Teotonio-Sobrinho, {\it Deformed Quantum Double Models from Deformed Spacetime Invariants} ({\it in preparation}).

\bibitem{Aguado} O.~Buerschaper, J.~M.~Mombelli, M.~Christandl, M.~Aguado, {\it A hierarchy of topological tensor network states}, J.~Math.~Phys.~54, 012201 (2013) and arXiv:1007.5283 [cond-mat.str-el]. 

\bibitem{Treview} R. Orus, {\it A Practical Introduction to Tensor Networks: Matrix Product States and Projected Entangled Pair States}, 	arXiv:1306.2164 [cond-mat.str-el].

\bibitem{twist} Y. Hu, Y. Wan, Yong-Shi Wu, {\it Twisted Quantum Double Model of Topological Phases in Two--Dimension}, Phys. Rev. B 87, 125114 (2013) and arXiv:1211.3695 [cond-mat.str-el].

\bibitem{kauffman} L. Kauffman and D. Radford, {\it On Two Proofs for the Existence and Uniqueness of Integrals for Finite-Dimensional Hopf Algebras}, arXiv:math/9911081 [math.QA].

\bibitem{loop} C. W. von Keyserlingk, F. J. Burnell, S. H. Simon, {\it Three-dimensional topological lattice models with surface anyons}, Phys. Rev. B 87, 045107 (2013) and arXiv:1208.5128 [cond-mat.str-el].

 \bibitem{lsnab} M.~Koch-Janusz, M.~Levin, A.~Stern, {\it Exactly soluble lattice models for non-abelian states of matter in 2 dimensions}, Phys.~Rev.~B 88, 115133 (2013) and arXiv:1306.2789 [cond-mat.str-el]. 
 
\bibitem{KitMaj} A. Yu Kitaev, {\it Unpaired Majorana fermions in quantum wires}, Phys.-Usp. 44, 131 (2001) and arXiv:cond-mat/0010440 [cond-mat.mes-hall].

\bibitem{Majorana} V. Mourik, K. Zuo, S. M. Frolov, S. R. Plissard, E. P. A. M. Bakkers, L. P. Kouwenhoven, {\it Signatures of Majorana Fermions in Hybrid Superconductor-Semiconductor Nanowire Devices }, Science 336 (6084) (2012).

\bibitem{Sweedler} M. E. Sweedler, {\it Hopf Algebras}, New York: Addison-Wesley, (1969).

\bibitem{PEPS1} F. Verstraete, J. I. Cirac, {\it Valence Bond Solids for Quantum Computation}, Phys. Rev. A 70, 060302(R) (2004) and arXiv:quant-ph/0311130.

\bibitem{PEPS2} N. Schuch, J. I. Cirac, D. Perez-Garcia, {\it PEPS as ground states: degeneracy and topology}, Ann. Phys. 325, 2153 (2010) and arXiv:1001.3807 [quant-ph].

\bibitem{fvRG} F. Verstraete, J. I. Cirac, {\it Renormalization algorithms for Quantum-Many Body Systems in two and higher dimensions}, arXiv:cond-mat/0407066 [cond-mat.str-el].

\bibitem{Cody} M. Levin, C. P. Nave, {\it Tensor renormalization group approach to 2D classical lattice models},  	Phys. Rev. Lett. 99, 120601 (2007) and arXiv:cond-mat/0611687 [cond-mat.stat-mech].

\bibitem{LevinRG} Z. C. Gu, M. Levin, X. G. Wen, {\it Tensor-entanglement renormalization group approach to 2D quantum systems}, Phys. Rev. B 78, 205116 (2008) and arXiv:0806.3509 [cond-mat.str-el].

\bibitem{cc} C. Castelnovo, S. Trebst, M. Troyer, {\it Topological Order and Quantum Criticality}, arXiv:0912.3272 [cond-mat.str-el].
 
\bibitem{Bombin} H. Bombin, M. A. M. Delgado, {\it A Family of Non-Abelian Kitaev Models on a Lattice: Topological Confinement and Condensation}, Phys. Rev. B 78, 115421 (2008) and arXiv:0712.0190 [cond-mat.str-el].

\bibitem{bur} F. J. Burnell, Steven H. Simon, J. K. Slingerland, {\it Condensation of achiral simple currents in topological lattice models: a Hamiltonian study of topological symmetry breaking}, Phys. Rev. B 84, 125434 (2011) and arXiv:1104.1701 [cond-mat.str-el].

\bibitem{hopfbook} S. Dascalescu, C. Nastasescu, S. Raianu, {\it Hopf algebras: an introduction}, New York: Addison-Wesley, (1969).

\bibitem{AG2} O. Buerschaper, M. Christandl, L. Kong, M. Aguado, {\it Electric-magnetic duality of lattice systems with topological order}. Nucl. Phys. B 876, 619-636 (2013)

\end{thebibliography}
\end{document}